\documentclass[lettersize,journal]{IEEEtran}
\usepackage{amsmath,amsfonts}
\usepackage{algorithmic}
\usepackage{algorithm}
\usepackage{array}
\usepackage[caption=false,font=footnotesize,labelfont=rm,textfont=rm]{subfig}
\usepackage{textcomp}
\usepackage{stfloats}
\usepackage{url}
\usepackage{verbatim}
\usepackage{graphicx}
\usepackage{cite}
\usepackage{hyperref}
\usepackage{tabularray}
\usepackage{color}
\usepackage{booktabs}
\usepackage{amsmath}
\usepackage{amssymb}
\newtheorem{theorem}{Theorem} 
\newtheorem{lemma}{Lemma} 
\newtheorem{proposition}{Proposition}

\begin{document}

\title{Pareto-Optimal Rate-CRLB Tradeoff via Anchor Placement Optimization for UAV-Assisted 3D ISAC}

\author{Haoyu Jiang,~\IEEEmembership{Student Member,~IEEE,} Hengyou Kong, Xiaoli Xu,~\IEEEmembership{Member,~IEEE,} Yong Zeng,~\IEEEmembership{Fellow,~IEEE}
        % <-this % stops a space
% \thanks{This paper was produced by the IEEE Publication Technology Group. They are in Piscataway, NJ.}% <-this % stops a space
% \thanks{Manuscript received April 19, 2021; revised August 16, 2021.}

\thanks{ 
H. Jiang, H. Kong, X. Xu and Y. Zeng are with the National Mobile Communications Research Laboratory, Southeast University, Nanjing 210096, China. Y. Zeng is also with the Purple Mountain Laboratories, Nanjing 211111, China (e-mail: \{230258936, 230268151, xiaolixu, yong\_zeng\}@seu.edu.cn). (Corresponding author: Yong Zeng.)}

}

\maketitle

\begin{abstract}
This paper considers an unmanned aerial vehicle (UAV)-assisted 3D integrated sensing and communication (ISAC) system, where a UAV is deployed to communicate with a base station (BS) and simultaneously monitor a volume of interest (VoI). By exploiting the flexible placement of the UAV anchor node, we aim to characterize the Pareto-optimal tradeoff between the communication rate and sensing CRLB. For the special case of a singleton VoI, the exact Pareto-optimal UAV anchor placement set is shown to be the line segment connecting the BS and the sensing target. For a radially truncated spherical sector VoI, we derive geometric conditions under which all Pareto-optimal UAV locations are confined to an axial segment inside the inner radial boundary. For general VoIs, we prove the convexity of the regional worst-case CRLB in the inner region, derive a Pareto-optimal radius upper bound, and develop an efficient algorithm to find the Pareto-optimal UAV anchor node placement. Numerical results validate the analytical placement structures and demonstrate that our proposed design achieves significantly enlarged rate-CRLB region over benchmark schemes.
\end{abstract}

\begin{IEEEkeywords}
3D ISAC, UAV-assisted ISAC, rate-CRLB tradeoff, Pareto-optimal, anchor placement.
\end{IEEEkeywords}

\section{Introduction}\label{sec:introduction}

% \IEEEPARstart{I}{ntegrated} sensing and communication (ISAC) has emerged as an important wireless paradigm that enables communication and environmental sensing to share spectrum, hardware, and signal processing resources \cite{liu_isac_jsac2022}. A fundamental issue in ISAC is the characterization of the tradeoff between communication and sensing performance \cite{xuxiaoli,qianglong_tutorial,Xiong_tradeoff_tit2023}. For target localization applications, the achievable communication rate and the Cram\'er-Rao lower bound (CRLB) of position estimation provide two basic performance measures: the former quantifies information transmission efficiency, while the latter characterizes the fundamental accuracy limit of localization. Accordingly, CRLB-based ISAC designs have been studied extensively, where communication and localization performance are balanced through transmit beamforming, covariance, or signaling design \cite{liu_crb_tsp2022,hua_crbrate_twc2024,ren_crbrate_twc2024,mao_cs_region_twc2024,Mao_isacnet_jsac2026}.

\IEEEPARstart{I}{ntegrated} sensing and communication (ISAC) seeks to use common hardware, spectrum, and transmitted signals for both information delivery and environmental sensing \cite{qianglong_tutorial,liu_isac_jsac2022,zhang_ISAC_jstsp}. By turning communication infrastructure into a source of situational information, ISAC can support applications such as intelligent transportation, industrial automation, environmental monitoring, and low-altitude surveillance \cite{cui_ISAC_ieeenetwork,zhang_ISAC_survey_tutorial,wei_ISAC_iotj,lu_ISAC_iotj}. The two functions, however, usually compete for transmit power, spatial degrees of freedom, bandwidth, and other system resources. Their achievable performances must therefore be considered jointly rather than optimized in isolation. This has motivated the study of communication-sensing tradeoffs, including information-theoretic characterizations and designs based on achievable rate, estimation accuracy, detection performance, and sensing mutual information \cite{xuxiaoli,Xiong_tradeoff_tit2023,hua_crbrate_twc2024,liu_crb_tsp2022,ren_crbrate_twc2024}. Among these metrics, the Cram\'er-Rao lower bound (CRLB) \cite{huizhi} is particularly suitable when the sensing task is target localization, because it quantifies the information available for estimating the target parameters rather than only the strength of the received echo.

% A common premise in many existing rate-CRLB analyses is that the sensing and communication geometry is prescribed. Under this premise, the propagation distances, observation directions, and geometric relationship among the transmitter (Tx), sensing receiver (Rx), and target are determined before the signal design is carried out \cite{Song_ris_tsp2023,Hua_transmit_beamform_tvt2023}. The remaining degrees of freedom therefore mainly reside in waveform, beamforming, covariance, or receiver processing. This premise is reasonable for conventional ISAC systems supported by fixed terrestrial infrastructure. However, it no longer applies when an unmanned aerial vehicle (UAV) participates in the sensing and communication process \cite{Yang_uav_number_twc2023, Pan_trajectory_tvt2024}. Owing to its controllable mobility, the UAV location itself can be optimized, thereby turning the spatial geometry into an additional design variable for ISAC performance optimization.

A substantial body of ISAC research characterizes this tradeoff through signal-domain design \cite{Xiong_tradeoff_tit2023,hua_crbrate_twc2024,liu_crb_tsp2022,Hua_transmit_beamform_tvt2023,Mao_isacnet_jsac2026,Hou_beamforming_jsac2024,Deng_beamforming_twc_2023}. For example, the rate-CRLB region has been studied for vector Gaussian channels by optimizing the transmit distribution \cite{Xiong_tradeoff_tit2023}; MIMO rate-CRLB boundaries have been obtained by optimizing the transmit covariance for point and extended targets \cite{hua_crbrate_twc2024}; and joint radar-communication beamforming has been designed to minimize sensing CRLBs subject to communication quality-of-service constraints \cite{liu_crb_tsp2022}. Related studies optimize waveform, beamforming, power allocation, and time-frequency resources under different sensing criteria. Although these formulations reveal how shared radio resources should be divided or reused, the locations of the transmitter, receiver, and sensing target are generally prescribed. This assumption is natural for terrestrial systems built on fixed infrastructure, but it treats propagation distances, viewing angles, and sensing geometry as environmental conditions rather than design variables. Conversely, emerging wireless networks increasingly involve network-connected autonomous agents, such as ground robots and unmanned aerial vehicles (UAVs), whose physical locations can be actively controlled \cite{zeng_UAVswarm_twc}. Their mobility introduces an additional design degree of freedom, since the locations and formations of these agents directly determine the communication and sensing channels. It therefore becomes possible to reshape the achievable communication-sensing tradeoff through physical-location optimization, beyond conventional signal-domain resource allocation \cite{zeng_capacity}.

The controllable mobility of UAVs introduces such a geometric degree of freedom into ISAC systems \cite{meng_uav_isac_mwc2024,Yong_UAV,yuxuan_magazine,Abdissa_deployment_precoder_iotj2024}. Existing UAV-assisted ISAC studies have exploited this mobility mainly through trajectory and maneuver optimization. In integrated periodic sensing and communication, the UAV trajectory has been jointly designed with user association, target-sensing selection, and transmit beamforming to maximize system throughput under periodic sensing requirements \cite{meng_uav_isac_twc_2023}. Adaptable ISAC designs optimize the UAV trajectory and communication/sensing beamforming over a mission interval while allowing sensing to be performed on demand \cite{Deng_beamforming_twc_2023}. Three-dimensional (3D) trajectory and resource allocation have also been considered for UAV-assisted Internet-of-Things systems using communication and radar-estimation-rate metrics \cite{Liu2024UAVIoT}. These works demonstrate the value of UAV mobility, but their primary concern is usually the time evolution of the UAV, accumulated communication performance, sensing schedules, or resource allocation under flight constraints \cite{jing_isac_sky_twc2024,jiang_uav_isac_cl2024,lyu_uav_isac_twc2023}. They do not directly answer a more fundamental question: for a cooperative UAV serving as an anchor node in bistatic ISAC systems, what are the UAV anchor placements for achieving the Pareto-optimal tradeoff between the communication rate and the localization accuracy to monitor the volume of interest (VoI)?

To address the above fundamental question, this paper investigates a UAV-assisted bistatic ISAC system. A cooperative UAV transmits an information-bearing ISAC waveform to a base station (BS) and assists the BS in monitoring a 3D VoI. Before a target is localized, its exact position is unavailable and only the VoI is assumed known. The UAV anchor node must therefore be placed to communicate with the BS while providing reliable localization for all potential targets in the VoI. We measure communication performance by the achievable rate and sensing performance by the worst-case 3D localization CRLB over the VoI, and characterize their Pareto tradeoff through the UAV placement optimization. The problem differs fundamentally from fixed-geometry optimization. The communication rate depends mainly on the BS-UAV distance, whereas the CRLB further depends on the UAV-target distances, target visibility relative to the BS array, and the delay-angle coupling induced by the BS-target-UAV geometry. Moreover, the target attaining the regional worst-case CRLB may change with the UAV location, and some outer-region placements yield an unbounded CRLB because a possible target becomes geometrically unresolvable.

The main contributions of this paper are summarized as follows.

\begin{itemize}
    \item We formulate an optimization problem to characterize the Pareto-optimal rate-CRLB tradeoff for 3D UAV-assisted bistatic ISAC. The achievable communication rate depends on the BS-UAV distance, whereas the regional sensing metric is defined as the worst-case CRLB over the VoI. We analytically characterize the Pareto-optimal UAV placement structures for representative VoIs. For a singleton VoI, the exact Pareto-optimal placement set is proved to be the line segment connecting the BS and the target, and the corresponding rate-CRLB Pareto boundary is obtained in closed form. For a radially truncated spherical sector VoI, we establish the inner-region convexity and derive a geometric condition under which every Pareto-optimal UAV placement is confined to an axial segment between the BS and the inner radial boundary of the VoI.
    
    \item We develop an algorithm to find the Pareto-optimal UAV anchor node placement. We prove that the regional worst-case CRLB is convex when the UAV is closer to the BS than every possible target and derive a radius upper bound for Pareto-optimal placements. Based on these properties, a two-stage procedure is proposed that combines convex radius-cap optimization in the inner region with fixed-radius directional search in the outer region.
    
    \item Numerical results are provided to validate the singleton and radially truncated spherical sector placement structures and reveal the behavior for a general off-axis rotated-ellipsoidal VoI. Over the evaluated Pareto-rate interval, the CRLB-based design attains a lower regional worst-case CRLB than sensing-SNR-based and VoI-center-based placements, which further shows that received echo strength or proximity to the VoI center does not necessarily provide favorable regional localization geometry and may result in substantially degraded worst-case localization performance. The optimized CRLB-based locations form a curved 3D path, demonstrating that the best direction may change with the communication requirement. Further experiments show that increasing the waveform root mean square (RMS) bandwidth yields diminishing localization returns once the angular and coupling terms become limiting, and that changing the UPA aspect ratio can alter both the Pareto boundary and the optimal UAV direction even when the total antenna number is fixed.
    
\end{itemize}

The remainder of this paper is organized as follows. Section~\ref{sec:system_model} presents the 3D UAV-assisted bistatic ISAC model, defines the communication and regional localization metrics, and formulates the rate-CRLB Pareto problem. Section~\ref{sec:pareto_analysis} analyzes the Pareto-optimal placements for the singleton and radially truncated spherical sector VoIs. Section~\ref{sec:general_roi_extension} develops the structural results and two-stage computation procedure for a general VoI. Section~\ref{sec:simulation_results} presents the numerical results. Finally, Section~\ref{sec:conclusion} concludes the paper.

% \emph{Notations:} Scalars, vectors, and matrices are denoted by italic letters, boldface lower-case letters, and boldface upper-case letters, respectively. $\mathbb{R}^{n}$ and $\mathbb{C}^{n}$ denote the spaces of $n$-dimensional real- and complex-valued vectors. For a vector $\mathbf{x}$, $\|\mathbf{x}\|$ is its Euclidean norm. $(\cdot)^{\rm T}$, $(\cdot)^{\rm H}$, $(\cdot)^{-1}$, and $\operatorname{Tr}(\cdot)$ denote the transpose, Hermitian transpose, inverse, and trace, respectively. Moreover, $\mathbf I_N$ denotes the $N$-dimensional identity matrix, $\mathcal{CN}(\boldsymbol\mu,\boldsymbol\Sigma)$ denotes a circularly symmetric complex Gaussian distribution with mean $\boldsymbol\mu$ and covariance matrix $\boldsymbol\Sigma$, and calligraphic letters denote sets.

\emph{Notations:} Scalars, vectors, and matrices are denoted by italic letters, boldface lower-case letters, and boldface upper-case letters, respectively. $\mathbb{R}^{n}$ and $\mathbb{C}^{n}$ denote the spaces of $n$-dimensional real- and complex-valued vectors. For a vector $\mathbf{x}$, $\|\mathbf{x}\|$ is its Euclidean norm. $(\cdot)^{\rm T}$, $(\cdot)^{\rm H}$, $(\cdot)^{-1}$, and $\operatorname{Tr}(\cdot)$ denote the transpose, Hermitian transpose, inverse, and trace, respectively. Moreover, $\mathbf I_N$ denotes the $N$-dimensional identity matrix.

\section{System Model}\label{sec:system_model}
\begin{figure}[htbp] 
        \centering \includegraphics[width=0.85\columnwidth]{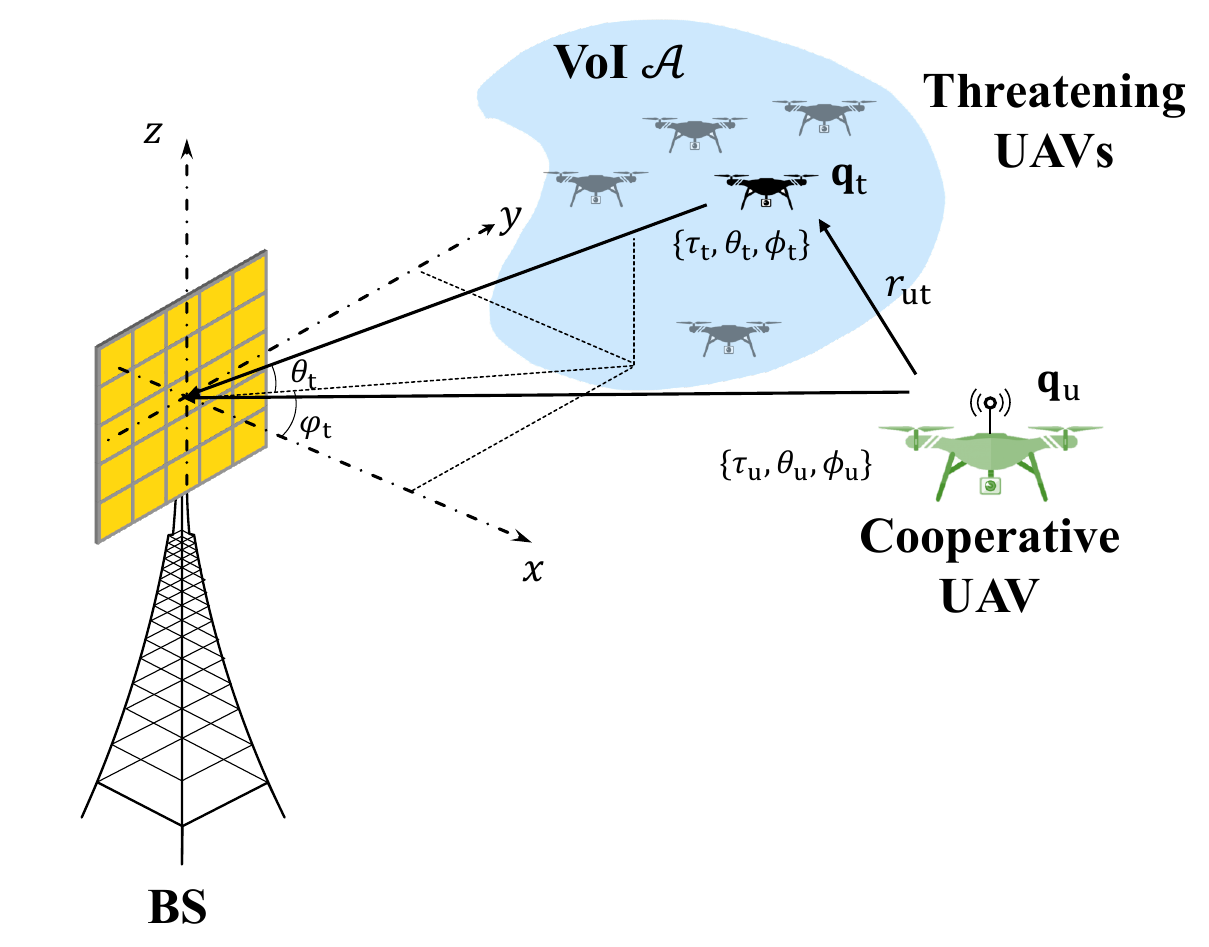}
        \caption{\label{fig:env} An illustration of UAV-assisted 3D ISAC system.}
\end{figure}

% As shown in Fig.~\ref{fig:env}, we consider a UAV-assisted bistatic ISAC system consisting of a BS, a cooperative UAV, and a 3D region of interest (RoI). The RoI represents a surveillance volume in which a target of interest, such as a threatening UAV, may appear at an unknown location. We consider one target at a time and assume that its radar cross section (RCS) is no smaller than a prescribed threshold $\kappa$. Before the target is localized, only its admissible region $\mathcal A$ and the RCS lower bound $\kappa$ are available to the system. Therefore, the cooperative UAV should be placed to communicate with the BS while ensuring reliable localization for every possible target location in $\mathcal A$ under the weakest admissible target return.

% The BS is located at the origin and is equipped with a uniform planar array (UPA) deployed on the $yz$-plane. The UPA has $N_y$ and $N_z$ antenna elements along the $y$- and $z$-axes, respectively, with a total of $N_{\rm R}=N_yN_z$ antennas. The inter-element spacing is $d=\lambda/2$, where $\lambda=c/f_c$, $f_c$ is the carrier frequency, and $c$ is the speed of light. A possible target location and the cooperative UAV location are denoted by $\mathbf q_{\rm t}=[x_{\rm t},y_{\rm t},z_{\rm t}]^{\rm T}\in\mathcal A$ and $\mathbf q_{\rm u}=[x_{\rm u},y_{\rm u},z_{\rm u}]^{\rm T}$, respectively. For $i\in\{{\rm u},{\rm t}\}$, the spherical-coordinate representation is

As shown in Fig.~\ref{fig:env}, we consider a 3D UAV-assisted bistatic ISAC system consisting of a BS, a cooperative UAV, and a 3D VoI. The cooperative UAV communicates with the BS while assisting the BS to monitor the VoI. Specifically, for any target (such as a threatening UAV) with radar cross section (RCS) no smaller than certain threshold entering the VoI, the system should be able to estimate the location of the target. We consider one typical sensing target and denote its RCS by $\kappa$, which is unknown before localization but satisfies $\kappa\geq\kappa_{\rm L}$, where $\kappa_{\rm L}$ is the RCS threshold. The cooperative UAV should be placed to communicate with the BS while ensuring reliable localization for every possible target location in $\mathcal A$ with RCS no smaller than $\kappa_L$.

The BS is located at the origin and is equipped with a uniform planar array (UPA) deployed on the $yz$-plane. The UPA has $N_y$ and $N_z$ antenna elements along the $y$- and $z$-axes, respectively, with a total of $N_{\rm R}=N_yN_z$ antennas. The inter-element spacing is $d=\lambda/2$, where $\lambda=c/f_c$, $f_c$ is the carrier frequency, and $c$ is the speed of light. The VoI to be monitored is denoted by $\mathcal A\subset\mathbb R^3$. Let $\mathbf q_{\rm t}=[x_{\rm t},y_{\rm t},z_{\rm t}]^{\rm T}\in\mathcal A$ denote one location within the VoI where the sensing target may appear, and $\mathbf q_{\rm u}=[x_{\rm u},y_{\rm u},z_{\rm u}]^{\rm T}$ is the location of the cooperative UAV which is to be optimized. For $i\in\{{\rm u},{\rm t}\}$, the spherical-coordinate representation is 
\begin{equation}
    \mathbf q_i=r_i
    \begin{bmatrix}
        \cos\theta_i\cos\phi_i,\
        \cos\theta_i\sin\phi_i,\
        \sin\theta_i
    \end{bmatrix},
\end{equation}
where $r_i=\|\mathbf q_i\|$, $\theta_i$ is the elevation angle, and $\phi_i$ is the azimuth angle. The UAV-target distance is
\begin{equation}
\begin{aligned}
    r_{\rm ut}
    =
    \Big(
    &r_{\rm u}^{2}+r_{\rm t}^{2}
    -2r_{\rm u}r_{\rm t}
    \big[
    \sin\theta_{\rm u}\sin\theta_{\rm t}\\
    &+
    \cos\theta_{\rm u}\cos\theta_{\rm t}
    \cos(\phi_{\rm u}-\phi_{\rm t})
    \big]
    \Big)^{1/2}.
\end{aligned}
\label{eq:r_ut_3d}
\end{equation}
The direct UAV-BS propagation delay and the bistatic sensing delay are $\tau_{\rm u}=r_{\rm u}/c$ and $\tau_{\rm t}=(r_{\rm t}+r_{\rm ut})/c$, respectively, where c is the speed of light.
Since the UPA is deployed on the $yz$-plane, the steering vector associated with direction $(\theta,\phi)$ is
\begin{equation}
    \mathbf a(\theta,\phi)
    =
    \mathbf a_y(\theta,\phi)\otimes\mathbf a_z(\theta),
\end{equation}
where
\begin{equation}
    \mathbf a_y(\theta,\phi)=
    \left[
    1,e^{-j\pi\cos\theta\sin\phi},\ldots,
    e^{-j\pi(N_y-1)\cos\theta\sin\phi}
    \right]^{\rm T},
\end{equation}
\begin{equation}
    \mathbf a_z(\theta)=
    \left[
    1,e^{-j\pi\sin\theta},\ldots,
    e^{-j\pi(N_z-1)\sin\theta}
    \right]^{\rm T}.
\end{equation}

\subsection{Signal Model}

The UAV transmits an ISAC signal such as OFDM waveform with observation interval $\mathcal T_{\rm obs}\triangleq[0,T_{\rm obs}]$. Let $s(t)$ denote the normalized complex baseband waveform satisfying $\mathbb{E}\big[|s(t)|^2\big] = 1.$
% \begin{equation}
%     \frac{1}{T_{\rm obs}}
%     \int_{\mathcal T_{\rm obs}}|s(t)|^2{\rm d}t=1.
%     \label{eq:normalized_waveform}
% \end{equation}
% \begin{equation}
%     \mathbb{E}\big[|s(t)|^2\big] = 1.
%     \label{eq:normalized_waveform}
% \end{equation}
The transmitted signal is denoted by $x(t)=\sqrt{P_{\rm t}}s(t)$, where $P_{\rm t}$ is the average UAV transmit power. The received signal at the BS is
\begin{equation}
\begin{aligned}
    \mathbf y(t)
    ={}&
    \sqrt{P_{\rm t}}\alpha_{\rm u}
    \mathbf a(\theta_{\rm u},\phi_{\rm u})
    s(t-\tau_{\rm u})\\
    &+
    \sqrt{P_{\rm t}}\alpha_{\rm t}
    \mathbf a(\theta_{\rm t},\phi_{\rm t})
    s(t-\tau_{\rm t})
    +\mathbf n(t),
    \quad t\in\mathcal T_{\rm obs},
\end{aligned}
\label{eq:received_signal}
\end{equation}
where $\alpha_{\rm u}$ and $\alpha_{\rm t}$ denote the direct line-of-sight (LoS) channel gain and the target-reflected channel gain, respectively, and $\mathbf n(t)$ is additive white Gaussian noise with two-sided power spectral density $N_0$. Under the Friis propagation model, the direct-path power gain is
\begin{equation}
    |\alpha_{\rm u}|^2=\zeta_0r_{\rm u}^{-2},
\end{equation}
where $\zeta_0\triangleq c^2/((4\pi)^2f_c^2)$ denotes the channel power gain at the reference distance of 1 meter.

For the target-reflected path, the bistatic radar equation gives
\begin{equation}
|\alpha_{\rm t}|^2
=
\frac{c^2\kappa}{(4\pi)^3f_c^2}
r_{\rm ut}^{-2}r_{\rm t}^{-2}
\geq
\xi_0r_{\rm ut}^{-2}r_{\rm t}^{-2},
\label{eq:target_path_gain}
\end{equation}
where
$\xi_0\triangleq c^2\kappa_{\rm L}/((4\pi)^3f_c^2)$.
The lower bound is attained when $\kappa=\kappa_{\rm L}$ and is adopted throughout the paper to characterize the guaranteed performance under the weakest target return.

\subsection{Communication Metric}

For the UAV uplink communication, the BS decodes the information-bearing waveform transmitted through the UAV-BS line of sight (LoS) link. Since the direct LoS path is typically much stronger than the target-reflected path, the communication channel is approximated by
\begin{equation}
    \mathbf h_{\rm c}
    =
    \alpha_{\rm u}\mathbf a(\theta_{\rm u},\phi_{\rm u}).
    \label{eq:communication_channel}
\end{equation}
With maximum-ratio combining at the BS, the communication SNR is
\begin{equation}
\begin{aligned}
\Upsilon_{\rm c}(\mathbf q_{\rm u})
&=
\frac{P_{\rm t}\|\mathbf h_{\rm c}\|^2}{N_0B}
=
\frac{P_{\rm t}N_{\rm R}\zeta_0}{N_0B r_{\rm u}^2}
=
\zeta_1r_{\rm u}^{-2},
\end{aligned}
\label{eq:communication_snr}
\end{equation}
where $B$ is the communication bandwidth and $\zeta_1\triangleq P_{\rm t}N_{\rm R}\zeta_0/(N_0B)$. The achievable communication spectral efficiency is
\begin{equation}
    \mathcal R(\mathbf q_{\rm u})
    =
    \zeta_2\log_2\left(1+\zeta_1r_{\rm u}^{-2}\right),
    \label{eq:communication_rate}
\end{equation}
where $0<\zeta_2\leq1$ accounts for possible waveform-dependent overhead.

\subsection{Sensing Metric}

For sensing, the direct UAV-BS component is assumed to be suppressed before target-parameter estimation using spatial or Doppler domain filters \cite{zihan}. The remaining target-reflected signal is
\begin{equation}
    \mathbf y_{\rm s}(t)
    =
    \sqrt{P_{\rm t}}\alpha_{\rm t}
    \mathbf a(\theta_{\rm t},\phi_{\rm t})
    s(t-\tau_{\rm t})
    +\mathbf n(t),
    \quad t\in\mathcal T_{\rm obs}.
\label{eq:sensing_signal}
\end{equation}
By assuming that the target state is static during the coherent processing interval (CPI) $\mathcal T_{\rm obs}$, sensing is performed by coherently combining the received signal over all information collected within the CPI. Hence, the accumulated SNR is 
\begin{equation}
\begin{aligned}
    \Upsilon_{\rm s}(\mathbf q_{\rm t},\mathbf q_{\rm u})
    &=
    \frac{|\alpha_{\rm t}|^2P_{\rm t}T_{\rm obs}\|\mathbf a(\theta_{\rm t},\phi_{\rm t})\|^2}{N_0}\\
    &=
    \frac{P_{\rm t}T_{\rm obs}N_{\rm R}\xi_0}
    {N_0r_{\rm ut}^2r_{\rm t}^2}
    =
    \frac{\xi_1}{r_{\rm ut}^2r_{\rm t}^2},
    \end{aligned}
    \label{eq:accumulated_sensing_snr}
    \end{equation}
where $\xi_1\triangleq P_{\rm t}T_{\rm obs}N_{\rm R}\xi_0/N_0$. Note that the SNR is related not only to the UAV placement $\mathbf{q}_{\rm u}$, but also the potential target location $\mathbf{q}_{\rm t}$ within the VoI. 

For target sensing, an important metric for potential target localization is CRLB. In our recent work \cite{CRLB}, we derived an explicit expression for the CRLB in bistatic sensing mode as a function of the UAV location $\mathbf{q}_{\rm u}$ and the potential target location $\mathbf{q}_{\rm t}$, given by

\begin{equation}
\begin{aligned}
    \mathcal C(\mathbf q_{\rm t},\mathbf q_{\rm u})
    ={}&
    \underbrace{
    \frac{c^2r_{\rm ut}^2r_{\rm t}^2}
    {\xi_\tau h^2}
    }_{\mathcal C_{\tau}}
    +
    \underbrace{
    \frac{r_{\rm ut}^2r_{\rm t}^4}{\ell_x^2}
    \left(
    \frac{1-\ell_z^2}{\xi_y}
    +
    \frac{1-\ell_y^2}{\xi_z}
    \right)
    }_{\mathcal C_{\rm ang}}
    \\
    &+
    \underbrace{
    \frac{4r_{\rm t}^2}{h^2\ell_x^2}
    \left(
    \frac{S_{xy}^2}{\xi_y}
    +
    \frac{S_{xz}^2}{\xi_z}
    \right)
    }_{\mathcal C_{\rm cpl}},
\end{aligned}
\label{eq:crlb_geometric_form}
\end{equation}
where
\begin{equation}
    \xi_y\triangleq\frac{\pi^2(N_y^2-1)\xi_1}{6},
    \qquad
    \xi_z\triangleq\frac{\pi^2(N_z^2-1)\xi_1}{6},
\end{equation}

\begin{equation}
    \xi_\tau\triangleq8\pi^2\beta_{\rm rms}^2\xi_1,
    \qquad
    [\ell_x,\ell_y,\ell_z]^{\rm T}
    \triangleq
    \frac{\mathbf q_{\rm t}}{r_{\rm t}},
    \label{eq:direction_cosines}
\end{equation}

\begin{equation}
\beta_{\rm rms}^2
=
\frac{\int (f-\bar f)^2|S(f)|^2{\rm d}f}
{\int |S(f)|^2{\rm d}f},
\qquad
\bar f=
\frac{\int f|S(f)|^2{\rm d}f}
{\int |S(f)|^2{\rm d}f},
\label{eq:rms_bandwidth}
\end{equation}

\begin{equation}
    h
    \triangleq
    1+\frac{\mathbf q_{\rm t}^{\rm T}(\mathbf q_{\rm t}-\mathbf q_{\rm u})}
    {r_{\rm t}r_{\rm ut}}
    =
    \frac{(r_{\rm t}+r_{\rm ut})^2-r_{\rm u}^2}
    {2r_{\rm t}r_{\rm ut}},
    \label{eq:h_geometric}
\end{equation}

\begin{equation}
    S_{xy}
    \triangleq
    \frac{1}{2}
    \left|
    \mathbf e_z^{\rm T}(\mathbf q_{\rm t}\times\mathbf q_{\rm u})
    \right|,
    \qquad
    S_{xz}
    \triangleq
    \frac{1}{2}
    \left|
    \mathbf e_y^{\rm T}(\mathbf q_{\rm t}\times\mathbf q_{\rm u})
    \right|.
    \label{eq:projected_areas}
\end{equation}

Here, $S(f)$ is the Fourier transform of $s(t)$. The three terms in \eqref{eq:crlb_geometric_form} separately characterize the delay, angular, and delay-angle coupling contributions to the position error. The delay term $\mathcal C_{\tau}$ decreases with the RMS bandwidth since $\xi_{\tau}\propto\beta_{\rm rms}^{2}$, while its geometric sensitivity is governed by $h$. The angular term $\mathcal C_{\rm ang}$ is determined by the two UPA apertures through $\xi_y$ and $\xi_z$, and is amplified by $\ell_x^{-2}$ as the target approaches the UPA endfire plane. The coupling term $\mathcal C_{\rm cpl}$ arises from the nonorthogonality between the delay and angular information, where $S_{xy}$ and $S_{xz}$ are the projected areas of the BS-target-UAV triangle on the $xy$- and $xz$-planes, respectively.
The parameters $P_{\rm t}$, $T_{\rm obs}$, $\kappa$, and $N_0$ scale all CRLB terms through $\xi_1$, whereas $\beta_{\rm rms}$, $N_y$, and $N_z$ change their relative weights and may therefore alter the optimal UAV placement. Interested readers may refer to \cite{CRLB} for more details on the CRLB derivation and implications. 

For a sensing VoI $\mathcal A$, the worst-case CRLB is defined as 
\begin{equation}
C_{\max}(\mathbf q_{\rm u})
\triangleq
\max_{\mathbf q_{\rm t}\in\mathcal A}
\mathcal C(\mathbf q_{\rm t},\mathbf q_{\rm u}).
\label{eq:regional_worst_crlb}
\end{equation}
It is observed from \eqref{eq:communication_rate}, \eqref{eq:crlb_geometric_form} and \eqref{eq:regional_worst_crlb} that both the communication performance in terms of spectral efficiency and the sensing performance in terms of the CRLB depend on where the cooperative UAV is placed. The communication rate depends only on the BS-UAV distance $r_{\rm u}$ and therefore favors placing the UAV close to the BS, whereas $\mathcal C(\mathbf q_{\rm t},\mathbf q_{\rm u})$ further depends on the UAV-target distances and the bistatic localization geometry over the entire VoI. Consequently, a communication-favorable placement may provide poor localization accuracy, and this motivates the UAV placement optimization for communication-sensing performance tradeoff.

\subsection{Rate-CRLB Pareto Boundary Formulation}
\label{subsec:pareto_formulation}

We aim to jointly improve the communication rate and regional localization accuracy through cooperative UAV placement. The achievable rate-CRLB region is defined as
\begin{equation}
    \Omega_{\rm C}
    \triangleq
    \bigcup_{\mathbf q_{\rm u}\in\mathcal Q}
    \left\{
    (\widehat R,\widehat C)
    \;\middle|\;
    \widehat R = \mathcal R(\mathbf q_{\rm u}),
    \;
    \widehat C = C_{\max}(\mathbf q_{\rm u})
    \right\},
\label{eq:rate_crlb_region}
\end{equation}
where $\mathcal Q$ denotes the feasible UAV placement set. The Pareto boundary of $\Omega_{\rm C}$ consists of the operating points for which neither the communication rate can be increased nor the worst-case CRLB can be reduced without degrading the other. The corresponding bi-objective optimization problem is formulated as
\begin{equation}
\text{(P-CRLB):}\quad
\underset{\mathbf q_{\rm u}\in\mathcal Q}{\operatorname{min}}
\quad
\mathbf f(\mathbf q_{\rm u})
\triangleq
\begin{bmatrix}
C_{\max}(\mathbf q_{\rm u})\\
-\mathcal R(\mathbf q_{\rm u})
\end{bmatrix},
\label{eq:Pcrlb}
\end{equation}
where the vector minimization is understood in the Pareto sense: $\mathbf q_{\rm u}^{\star}$ is Pareto-optimal if no feasible placement achieves both smaller $C_{\max}$ and $-\mathcal R$.

% \begin{equation}
%     \text{(P-CRLB):}\quad
%     \min_{\mathbf q_{\rm u}\in\mathcal Q}
%     \left\{
%     C_{\max}(\mathbf q_{\rm u}),
%     -\mathcal R(\mathbf q_{\rm u})
%     \right\}.
% \label{eq:Pcrlb}
% \end{equation}

% The complete Pareto boundary can be generated by solving
% \begin{equation}
% \begin{aligned}
%     \text{(P-CRLB-}\bar R\text{):}\quad
%     \min_{\mathbf q_{\rm u}\in\mathcal Q}\quad
%     &C_{\max}(\mathbf q_{\rm u})\\
%     \mathrm{s.t.}\quad
%     &\mathcal R(\mathbf q_{\rm u})\geq\bar R,
% \end{aligned}
% \label{eq:Pcrlb_parameterized}
% \end{equation}

Since $\mathcal R(\mathbf q_{\rm u})$ is strictly decreasing in $r_{\rm u}=\|\mathbf q_{\rm u}\|$, define the minimum and maximum feasible BS-UAV distances as
\begin{equation}
    r_{\mathcal Q}^{\min}\triangleq
    \min_{\mathbf q_{\rm u}\in\mathcal Q}\|\mathbf q_{\rm u}\|,
    \qquad
    r_{\mathcal Q}^{\max}\triangleq
    \max_{\mathbf q_{\rm u}\in\mathcal Q}\|\mathbf q_{\rm u}\|.
\label{eq:feasible_radius_range}
\end{equation}
The maximum and minimum feasible communication rates, denoted by $\underline R$ and $\overline R$ respectively, are therefore attained when $\mathbf{q}_{\rm u}$ achieves $r_{\mathcal Q}^{\min}$ and $r_{\mathcal Q}^{\max}$, respectively. Accordingly, the Pareto boundary candidates can be generated by varying $\tilde R$ and solving the following optimization problem
\begin{equation}
\begin{aligned}
    \text{(P-CRLB-$\tilde R$):}\quad
    \underset{\mathbf q_{\rm u}\in\mathcal Q}{\operatorname{min}}
    \quad & C_{\max}(\mathbf q_{\rm u})\\
    \operatorname{subject~to}\quad
    & \mathcal R(\mathbf q_{\rm u})\geq\tilde R,
\end{aligned}
\label{eq:P_CRLB_rate}
\end{equation}
where $\tilde R\in[\underline R,\overline R]$.

Before solving the problem (P-CRLB-$\tilde R$), we first consider the two extreme points of the achievable rate-CRLB region, which are communication- and localization-oriented placements. The communication-optimal placement is selected from the feasible locations nearest to the BS as
\begin{equation}
    \mathbf q_{\rm u}^{\rm com}
    =
    \underset{\substack{\mathbf q_{\rm u}\in\mathcal Q,\ \
    \|\mathbf q_{\rm u}\|=r_{\mathcal Q}^{\min}}}
    {\arg\min}
    \ C_{\max}(\mathbf q_{\rm u}).
\label{eq:communication_endpoint}
\end{equation}
The sensing-optimal placement is
\begin{equation}
    \mathbf q_{\rm u}^{\rm sen}
    =
    \underset{\mathbf q_{\rm u}\in\mathcal Q_{\rm C}^{\rm sen}}
    {\arg\min}
    \ \|\mathbf q_{\rm u}\|,
    \quad
    \text{where}\ \ \mathcal Q_{\rm C}^{\rm sen}
    \triangleq
    \underset{\mathbf q_{\rm u}\in\mathcal Q}{\arg\min}
    \ C_{\max}(\mathbf q_{\rm u}),
\label{eq:sensing_endpoint}
\end{equation}
The corresponding Pareto boundary endpoints are denoted by
\begin{equation}
\begin{aligned}
    \left(R^{\rm com},C^{\rm com}\right)
    &\triangleq
    \left(
    \mathcal R(\mathbf q_{\rm u}^{\rm com}),
    C_{\max}(\mathbf q_{\rm u}^{\rm com})
    \right),\\
    \qquad
    \left(R^{\rm sen},C^{\rm sen}\right)
    &\triangleq
    \left(
    \mathcal R(\mathbf q_{\rm u}^{\rm sen}),
    C_{\max}(\mathbf q_{\rm u}^{\rm sen})
    \right).
\label{eq:pareto_endpoints}
\end{aligned}
\end{equation}

\section{Pareto-Optimal Rate-CRLB Analysis for Representative VoIs}
\label{sec:pareto_analysis}

Since the worst-case sensing performance depends on the VoI, solving problem \eqref{eq:Pcrlb} requires a clear specification of the VoI. In this section, we consider some special VoIs that admit a closed-form expression for the rate-CRLB Pareto boundary, while the solution for a general VoI is deferred to Section \ref{sec:general_roi_extension}.

\subsection{Singleton VoI}
\label{subsec:singleton_roi}

We first consider the special case where the VoI reduces to a prescribed target location, i.e.,
\begin{equation}
    \mathcal A=\{\bar{\mathbf q}_{\rm t}\},
    \qquad
    \bar{\mathbf q}_{\rm t}=\bar r_{\rm t}\bar{\mathbf e}_{r},
    \qquad
    [\bar\ell_x,\bar\ell_y,\bar\ell_z]^{\rm T}=\bar{\mathbf e}_{r},
    \label{eq:singleton_roi}
\end{equation}
where $\bar r_{\rm t}=\|\bar{\mathbf q}_{\rm t}\|$ and $\bar\ell_x$ is the direction cosine of the target along the normal direction of the BS UPA. In this case, the regional worst-case CRLB reduces to $\mathcal C_{\max}(\mathbf q_{\rm u})=\mathcal C(\bar{\mathbf q}_{\rm t},\mathbf q_{\rm u})$. Since $\mathcal R(\mathbf q_{\rm u})$ is strictly decreasing in $r_{\rm u}$, problem \eqref{eq:Pcrlb} becomes
% \begin{equation}
%     \text{(P-CRLB-S):}\quad
%     \min_{\mathbf q_{\rm u}\in\mathcal Q}
%     \left\{
%     \mathcal C(\bar{\mathbf q}_{\rm t},\mathbf q_{\rm u}),
%     r_{\rm u}
%     \right\}.
%     \label{eq:P_singleton}
% \end{equation}

\begin{equation}
    \text{(P-CRLB-S):}\quad
    \min_{\mathbf q_{\rm u}\in\mathcal Q}
    \begin{bmatrix}
        C_{\max}(\mathbf q_{\rm u})\\
        r_{\rm u}
    \end{bmatrix}.
    \label{eq:P_singleton}
\end{equation}

Define the BS-target line segment as
\begin{equation}
    \mathcal L_{\rm t}
    \triangleq
    \left\{
    \epsilon\bar{\mathbf q}_{\rm t}\mid 0\leq\epsilon\leq1
    \right\}.
    \label{eq:singleton_line_segment}
\end{equation}
We assume that $\mathcal L_{\rm t}\subseteq\mathcal Q$ and that the target direction avoids the angular singularity of the UPA, i.e., $\bar\ell_x\neq0$. The Pareto-optimal placement set can then be characterized in closed form.

\begin{theorem}
\label{the:singleton_pareto}
For the singleton VoI $\mathcal A=\{\bar{\mathbf q}_{\rm t}\}$, if $\mathcal L_{\rm t}\subseteq\mathcal Q$ and $\bar\ell_x\neq0$, the Pareto-optimal UAV placement set of problem \eqref{eq:P_singleton} is
\begin{equation}
    \mathcal L_{\rm Pareto}^{\rm sing}
    =
    \left\{
    \mathbf q_{\rm u}^{\star}
    =
    \epsilon\bar{\mathbf q}_{\rm t}
    \mid
    0\leq\epsilon\leq1
    \right\}.
    \label{eq:singleton_pareto_set}
\end{equation}
Equivalently, the Pareto-optimal distances satisfy
\begin{equation}
    r_{\rm u}^{\star}
    =
    \bar r_{\rm t}-r_{\rm ut}^{\star},
    \qquad
    0\leq r_{\rm ut}^{\star}\leq\bar r_{\rm t}.
    \label{eq:singleton_pareto_distance}
\end{equation}
\end{theorem}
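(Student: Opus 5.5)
The plan is to compare an arbitrary placement with the point of $\mathcal L_{\rm t}$ at the same BS--UAV distance, using a termwise lower bound on the three CRLB components in \eqref{eq:crlb_geometric_form}. Since $\mathcal R$ is strictly decreasing in $r_{\rm u}$, problem \eqref{eq:P_singleton} is equivalent to Pareto-minimizing $(\mathcal C(\bar{\mathbf q}_{\rm t},\mathbf q_{\rm u}),\,r_{\rm u})$. The argument has three steps: evaluate $\mathcal C$ on the segment, prove a lower bound off the segment, then read off Pareto optimality.

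\emph{Step 1 (evaluation on the segment).} Take $\mathbf q_{\rm u}=\epsilon\bar{\mathbf q}_{\rm t}$ with $0\le\epsilon\le1$. Then $r_{\rm u}=\epsilon\bar r_{\rm t}$ and $r_{\rm ut}=\bar r_{\rm t}-r_{\rm u}$. Substituting into \eqref{eq:h_geometric} gives
\[
h=\frac{(2\bar r_{\rm t}-r_{\rm u})^2-r_{\rm u}^2}{2\bar r_{\rm t}(\bar r_{\rm t}-r_{\rm u})}=2 .
\]
Since $\bar{\mathbf q}_{\rm t}\times\mathbf q_{\rm u}=\mathbf 0$, we also get $S_{xy}=S_{xz}=0$. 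Hence
\[
\mathcal C=(\bar r_{\rm t}-r_{\rm u})^2\,\Phi,\qquad \Phi\triangleq\frac{c^2\bar r_{\rm t}^2}{4\xi_\tau}+\frac{\bar r_{\rm t}^4}{\bar\ell_x^2}\Big(\frac{1-\bar\ell_z^2}{\xi_y}+\frac{1-\bar\ell_y^2}{\xi_z}\Big),
\]
which is finite because $\bar\ell_x\neq0$. This is strictly decreasing in $r_{\rm u}\in[0,\bar r_{\rm t}]$ and vanishes at $\epsilon=1$. At $\epsilon=1$ the ratio defining $h$ is degenerate, so I will handle that point by continuity: the prefactor $r_{\rm ut}^2$ goes to $0$ while $h$ stays equal to $2$.

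\emph{Step 2 (lower bound for an arbitrary $\mathbf q_{\rm u}\in\mathcal Q$).} From \eqref{eq:h_geometric}, $h=1+\cos\angle(\bar{\mathbf q}_{\rm t},\bar{\mathbf q}_{\rm t}-\mathbf q_{\rm u})\le2$. The coupling term is nonnegative. If $r_{\rm u}\le\bar r_{\rm t}$, the triangle inequality gives $r_{\rm ut}\ge\bar r_{\rm t}-r_{\rm u}\ge0$. Dropping $\mathcal C_{\rm cpl}$ and using these three bounds yields
\[
\mathcal C(\bar{\mathbf q}_{\rm t},\mathbf q_{\rm u})\ge(\bar r_{\rm t}-r_{\rm u})^2\Phi,
\]
so every placement is bounded below by the segment point $(r_{\rm u}/\bar r_{\rm t})\bar{\mathbf q}_{\rm t}$ at the same distance. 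Equality in the triangle inequality forces $\mathbf q_{\rm u}$ onto the ray through $\bar{\mathbf q}_{\rm t}$, between the origin and the target. Therefore any $\mathbf q_{\rm u}\notin\mathcal L_{\rm t}$ with $r_{\rm u}<\bar r_{\rm t}$ is strictly worse in CRLB than the segment point with equal $r_{\rm u}$, and is dominated. If $r_{\rm u}\ge\bar r_{\rm t}$ and $\mathbf q_{\rm u}\neq\bar{\mathbf q}_{\rm t}$, then $\mathcal C\ge0=\mathcal C(\bar{\mathbf q}_{\rm t},\bar{\mathbf q}_{\rm t})$. In this case either $r_{\rm u}>\bar r_{\rm t}$, or $r_{\rm u}=\bar r_{\rm t}$ with $r_{\rm ut}>0$ so that $\mathcal C>0$ (since $\Phi>0$). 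In both sub-cases the point is dominated by $\epsilon=1$.

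\emph{Step 3 (Pareto optimality of the segment).} Any two distinct segment points trade off strictly, because $\mathcal C$ is strictly decreasing in $r_{\rm u}$ along $\mathcal L_{\rm t}$. Combined with Step 2, no feasible point dominates a segment point: matching or beating its CRLB requires $r_{\rm u}$ at least as large, with strict inequality unless the point is that same segment point. This gives \eqref{eq:singleton_pareto_set}, and \eqref{eq:singleton_pareto_distance} follows from $r_{\rm ut}=\bar r_{\rm t}-r_{\rm u}$ on $\mathcal L_{\rm t}$. The hypothesis $\mathcal L_{\rm t}\subseteq\mathcal Q$ guarantees that the comparison points used in Step 2 are feasible.

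The main obstacle is the careful handling of degenerate cases in the dominance argument rather than the algebra. These are the endpoint $r_{\rm ut}=0$, where $h$ is formally undefined and the limit must be used; points with $h\to0$, where $\mathcal C=\infty$, which only strengthens the bound; and the strict-versus-weak inequality at $r_{\rm u}=\bar r_{\rm t}$, where dominance must come from strict positivity of $\mathcal C$ (because $\Phi>0$).
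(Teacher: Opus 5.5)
Your proof is correct and follows essentially the same route as the paper: the bound $\mathcal C(\bar{\mathbf q}_{\rm t},\mathbf q_{\rm u})\ge\Gamma_{\rm t}r_{\rm ut}^2$ (your $\Phi$ is the paper's $\Gamma_{\rm t}$) comes from $h\le2$ and the nonnegative coupling term, it is tight on $\mathcal L_{\rm t}$, and it is combined with the triangle inequality and the strict tradeoff along the segment. The only difference is the comparison point: the paper maps an off-segment point to the segment point at UAV--target distance $\min\{r_{\rm ut},\bar r_{\rm t}\}$, so strict dominance always comes from the rate and the degenerate endpoint $\epsilon=1$ is never used for comparison, whereas you compare at equal $r_{\rm u}$, get strictness from the CRLB, and therefore need a separate case for $r_{\rm u}\ge\bar r_{\rm t}$.
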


\begin{IEEEproof}
Please refer to Appendix~\ref{app:singleton_pareto}.
\end{IEEEproof}

Theorem~\ref{the:singleton_pareto} reduces the original 3D placement problem to a one-dimensional tradeoff along the BS-target line segment. At every Pareto-optimal location, the UAV lies between the BS and the target. Substituting \eqref{eq:singleton_pareto_distance} into \eqref{eq:crlb_geometric_form} gives
\begin{equation}
    \mathcal C^{\star}(r_{\rm ut})
    =
    \Gamma_{\rm t}r_{\rm ut}^{2},
    \qquad
    0\leq r_{\rm ut}\leq\bar r_{\rm t},
    \label{eq:singleton_crlb_tradeoff}
\end{equation}
where
\begin{equation}
\begin{aligned}
    \Gamma_{\rm t}
    \triangleq{}&
    \frac{c^{2}\bar r_{\rm t}^{2}}{4\xi_{\tau}}
    +
    \frac{\bar r_{\rm t}^{4}}{\bar\ell_x^{2}}
    \left(
    \frac{1-\bar\ell_z^{2}}{\xi_y}
    +
    \frac{1-\bar\ell_y^{2}}{\xi_z}
    \right)\\
    ={}&
    \frac{1}{\xi_1}
    \left[
    \frac{\lambda^2\bar r_{\rm t}^{2}}
    {32\pi^2\bar\beta^{\,2}}
    +
    \frac{6\bar r_{\rm t}^{4}}
    {\pi^2\bar\ell_x^{2}}
    \left(
    \frac{1-\bar\ell_z^{2}}{N_y^2-1}
    +
    \frac{1-\bar\ell_y^{2}}{N_z^2-1}
    \right)
    \right],
\end{aligned}
\label{eq:singleton_gamma}
\end{equation}
where $\bar\beta=\beta_{\rm rms}/f_c$. The corresponding communication rate is
\begin{equation}
    R^{\star}(r_{\rm ut})
    =
    \zeta_2\log_2
    \left(
    1+
    \frac{\zeta_1}
    {(\bar r_{\rm t}-r_{\rm ut})^{2}}
    \right).
    \label{eq:singleton_rate_tradeoff}
\end{equation}
Eliminating $r_{\rm ut}$ between \eqref{eq:singleton_crlb_tradeoff} and \eqref{eq:singleton_rate_tradeoff} yields the closed-form Pareto boundary
\begin{equation}
\begin{aligned}
    R^{\star}(\mathcal C^{\star})
    =
    \zeta_2\log_2
    \left(
    1+
    \frac{\zeta_1}
    {\left(
    \bar r_{\rm t}
    -
    \sqrt{\mathcal C^{\star}/\Gamma_{\rm t}}
    \right)^2}
    \right),
    \quad
    0\leq\mathcal C^{\star}\leq\Gamma_{\rm t}\bar r_{\rm t}^{2}.
\end{aligned}
\label{eq:singleton_pareto_boundary}
\end{equation}

% \begin{figure}[t]
%     \centering
%     \includegraphics[width=0.8\columnwidth]{fig/exp1_singleton_closed_form_only.pdf}
%     \caption{Closed-form rate-CRLB Pareto boundary for the singleton RoI.}
%     \label{fig:singleton_closed_form_boundary}
% \end{figure}

\begin{figure}[t]
    \centering
    \subfloat[Spatial Pareto placements.]{
        \includegraphics[width=0.7\columnwidth]{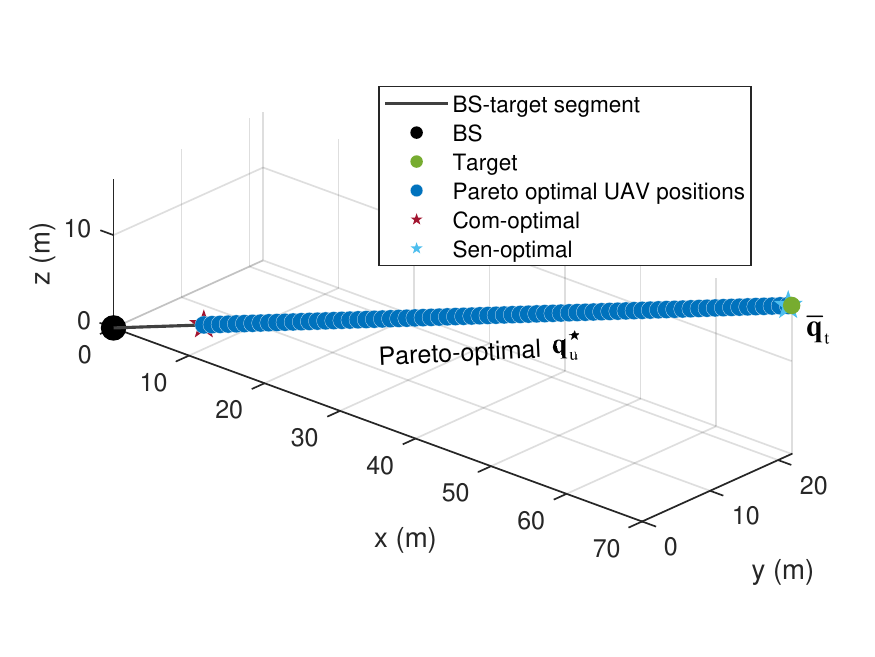}
        \label{fig:exp1_singleton_geometry}}\\
    \subfloat[Rate-CRLB Pareto boundary.]{
        \includegraphics[width=0.7\columnwidth]{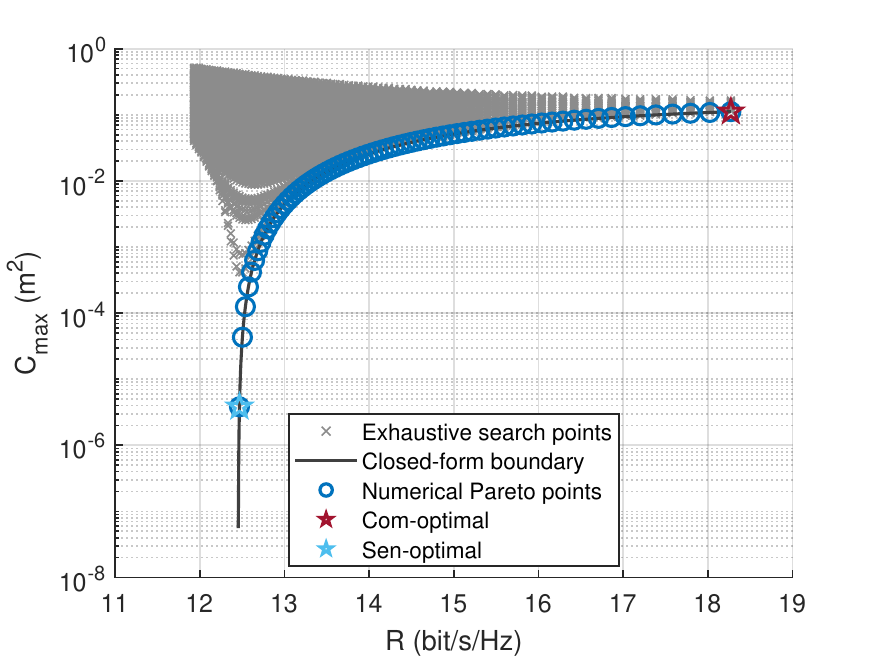}
        \label{fig:exp1_singleton_front}}
    \caption{Numerical illustration of Theorem~\ref{the:singleton_pareto}: (a) spatial Pareto-optimal UAV placements and (b) the corresponding closed-form and numerical rate-CRLB Pareto boundaries. The gray crosses denote all searched candidates, while the blue circles denote the resulting Pareto boundary.}
    \label{fig:exp1_singleton}
\end{figure}

Fig.~\ref{fig:exp1_singleton} numerically illustrates Theorem~\ref{the:singleton_pareto} and the closed-form boundary in \eqref{eq:singleton_pareto_boundary}. As shown in Fig.~\ref{fig:exp1_singleton_geometry}, all numerically identified Pareto-optimal UAV locations lie on the BS-target line segment. Moving the UAV from the BS-side communication-optimal endpoint toward the target reduces $r_{\rm ut}$ and hence the CRLB, at the cost of an increased $r_{\rm u}$ and a lower communication rate. Consequently, the numerical Pareto points in Fig.~\ref{fig:exp1_singleton_front} perfectly match with the closed-form boundary, while the gray crosses denote the rate-CRLB pairs of all searched candidates. The searching procedure is achieved by discretizing $\mathcal{Q}$, where $\mathcal{Q}$ is restricted by $d_{\min} \le r_{\rm u} \le d_{\max}$, and $d_{\min}$ and $d_{\max}$ denote the minimum and maximum BS-UAV distance. The boundary is strictly increasing with $C^\star$. Under the ideal point-target model, the sensing-optimal endpoint is attained in the limiting case $\mathbf q_{\rm u}\rightarrow\bar{\mathbf q}_{\rm t}$, yielding
\begin{equation}
\left(R^{\rm sen},C^{\rm sen}\right)
=
\left(
\zeta_2\log_2\left(1+\frac{\zeta_1}{\bar r_{\rm t}^{2}}\right),
0
\right).
\label{eq:singleton_sensing_endpoint}
\end{equation}
At the other end, with a practical minimum BS-UAV distance $d_{\min}>0$, the communication-optimal endpoint is
\begin{equation}
\left(R^{\rm com},C^{\rm com}\right)
=
\left(
\zeta_2\log_2\left(1+\frac{\zeta_1}{d_{\min}^{2}}\right),
\Gamma_{\rm t}\left(\bar r_{\rm t}-d_{\min}\right)^2
\right).
\label{eq:singleton_communication_endpoint}
\end{equation}
% A larger $\Gamma_{\rm t}$ increases the CRLB at every fixed rate and forces the UAV closer to the target to satisfy a prescribed localization requirement, thereby making the rate-CRLB conflict more pronounced. In contrast, $\zeta_1$ and $\zeta_2$ only change the rate associated with each placement along the same Pareto segment.

The parameter $\epsilon$ in \eqref{eq:singleton_pareto_set} is the normalized BS-UAV distance along the Pareto segment and satisfies $\epsilon = r_{\rm u}^{\star}/\bar r_{\rm t}=1-r_{\rm ut}^{\star}/\bar r_{\rm t}$, which can be interpreted as the geometric tradeoff variable. Increasing $\epsilon$ moves the UAV from the BS toward the target, thereby reducing the localization CRLB while decreasing the communication rate. In particular,
\begin{equation}
\begin{aligned}
    \mathcal C^{\star}
    &=
    \Gamma_{\rm t}r_{\rm ut}^{2}
    =
    \Gamma_{\rm t}\bar r_{\rm t}^{2}(1-\epsilon)^2,\\
    R^{\star}
    &=
    \zeta_2\log_2
    \left(
    1+\frac{\zeta_1}{\epsilon^2\bar r_{\rm t}^{2}}
    \right).
\end{aligned}
\label{eq:singleton_normalized_pareto}
\end{equation}

% The coefficient $\Gamma_{\rm t}$ represents the localization penalty per squared UAV-target separation. Its first component is determined by the normalized waveform bandwidth $\bar\beta$ and corresponds to the delay-information penalty, whereas its second component is determined by the array aperture $N_y,N_z$ and target direction $\bar{\mathbf e}_{r}$ and corresponds to the angular-information penalty. At a fixed normalized placement $\alpha$, the delay and angular contributions to $\mathcal C^{\star}$ scale as $\bar r_{\rm t}^{4}$ and $\bar r_{\rm t}^{6}$, respectively. Therefore, as the target distance increases, localization becomes increasingly limited by angular information, making the array aperture and target visibility more important than the waveform bandwidth. Moreover, the communication parameters $\zeta_1$ and $\zeta_2$ determine the UAV location associated with a prescribed rate, whereas the waveform, array, and sensing-power parameters determine the CRLB achieved at that location without changing the singleton Pareto-placement support.

The coefficient $\Gamma_{\rm t}$ represents the localization penalty per squared UAV-target separation $r_{\rm ut}$. The waveform and array parameters  do not change the geometric form of the singleton Pareto placement set, which always lies on the BS-target line segment. Instead, they determine how far the UAV must move along this segment to achieve a prescribed CRLB. A smaller normalized bandwidth $\bar\beta$, a smaller array aperture characterized by $N_y$ and $N_z$, or poorer target visibility characterized by $\bar\ell_x$ increases $\Gamma_{\rm t}$, forcing the UAV closer to the target and causing a larger communication-rate loss for the same localization requirement. Hence, unfavorable sensing conditions make the rate-CRLB tradeoff more pronounced, whereas increasing $\bar\beta$, $N_y$, or $N_z$ allows the UAV to remain closer to the BS. The communication parameters $\zeta_1$ and $\zeta_2$ similarly do not change the placement support, but only change the rate associated with each Pareto-optimal location.

In practice, if minimum BS-UAV separation of $d_{\min}>0$ is imposed, the feasible Pareto segment is truncated to
\begin{equation}
    d_{\min}/\bar r_{\rm t}
    \leq
    \epsilon
    \leq
    1.
    \label{eq:singleton_truncated_segment}
\end{equation}

\subsection{Radially Truncated Spherical Sector VoI}
\label{subsec:rtss}

\begin{figure}[t] 
        \centering \includegraphics[width=0.7\columnwidth]{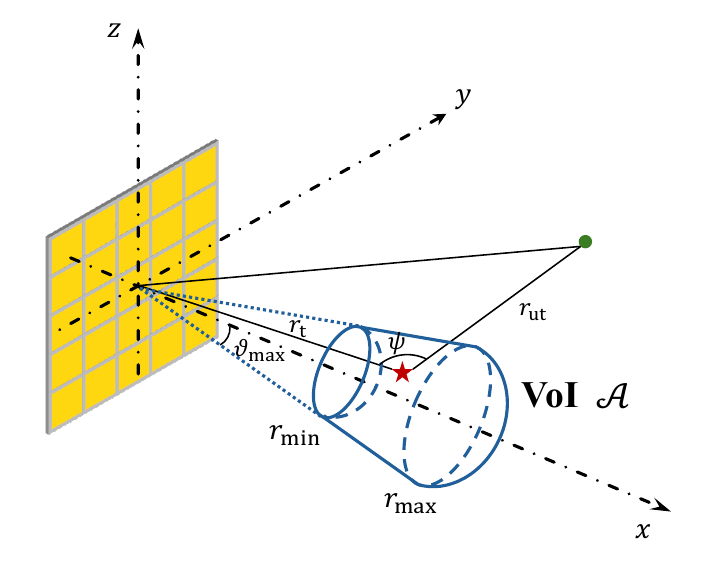}
        \caption{\label{fig:spherical_sector} An illustration of radially truncated spherical sector VoI.}
\end{figure}

As illustrated in Fig.~\ref{fig:spherical_sector}, we next consider a particular VoI $\mathcal A$, which is a radially truncated spherical sector formed by intersecting a circular cone with the spherical shell bounded by $r_{\min}$ and $r_{\max}$. The cone is centered on the positive $x$-axis with semi-aperture angle $\vartheta_{\max}$. The cooperative UAV is restricted to the front half-space $x>0$. To obtain a rotational symmetric CRLB property, we specialize the BS array to a square UPA, i.e.,
\begin{equation}
    N_y=N_z,
    \qquad
    \xi_y=\xi_z.
    \label{eq:square_upa}
\end{equation}

For any point $\mathbf q_i$, $i\in\{{\rm u},{\rm t}\}$, define its off-axis angle with respect to the positive $x$-axis as
\begin{equation}
    \vartheta_i
    \triangleq
    \arccos\left(\frac{\mathbf e_x^{\rm T}\mathbf q_i}{r_i}\right),
    \qquad
    \mathbf e_x\triangleq[1,0,0]^{\rm T}.
    \label{eq:off_axis_angle}
\end{equation}
The target VoI is characterized by the following constraints:
\begin{align}
    \mathrm{C_1}:&\quad r_{\min}\leq r_{\rm t}\leq r_{\max}, \label{eq:C1_cone}\\
    \mathrm{C_2}:&\quad 0\leq\vartheta_{\rm t}\leq\vartheta_{\max}, \label{eq:C2_cone}
\end{align}
where $r_{\min}$ and $r_{\max}$ specify the radial extent of the VoI, and $\vartheta_{\max}$ denotes the semi-aperture angle of the cone. We assume that $0\leq\vartheta_{\max}<\pi/2$. Accordingly, the radially truncated spherical sector VoI is defined as
\begin{equation}
    \mathcal A
    =
    \left\{
    \mathbf q_{\rm t}
    \mid
    \mathrm{C_1},\mathrm{C_2}
    \right\}.
    \label{eq:conical_roi}
\end{equation}

Under the square-UPA condition in \eqref{eq:square_upa}, we have $\xi_y=\xi_z$. Moreover, since $\ell_x^2+\ell_y^2+\ell_z^2=1$ and $\ell_x=\cos\vartheta_{\rm t}$, the geometric CRLB in \eqref{eq:crlb_geometric_form} reduces to
\begin{equation}
\begin{aligned}
    \mathcal C(\mathbf q_{\rm t},\mathbf q_{\rm u})
    ={}
    \frac{c^2r_{\rm ut}^2r_{\rm t}^2}
    {\xi_\tau h^2}
    +
    \frac{r_{\rm ut}^2r_{\rm t}^4
    \left(1+\ell_x^2\right)}
    {\xi_y\ell_x^2}
    +
    \frac{4r_{\rm t}^2
    \left(S_{xy}^2+S_{xz}^2\right)}
    {\xi_y h^2\ell_x^2} .
\end{aligned}
\label{eq:crlb_conical_geometric_form}
\end{equation}

The CRLB in \eqref{eq:crlb_conical_geometric_form} is invariant to a rotation of the target and UAV locations about the $x$-axis. Specifically, for any rotation matrix $\mathbf R_x(\omega)$ about the $x$-axis, we have
\begin{equation}
    \mathcal C\left(
    \mathbf R_x(\omega)\mathbf q_{\rm t},
    \mathbf R_x(\omega)\mathbf q_{\rm u}
    \right)
    =
    \mathcal C(\mathbf q_{\rm t},\mathbf q_{\rm u}).
    \label{eq:crlb_rotation_symmetry}
\end{equation}
Since $\mathcal A$ is also invariant to such rotations, the regional worst-case CRLB only depends on the UAV distance $r_{\rm u}$ and its off-axis angle $\vartheta_{\rm u}$. Therefore, without loss of optimality, we represent candidate UAV locations in the $xz$-plane as
% \begin{equation}
%     \mathbf q_{\rm u}
%     =
%     r_{\rm u}
%     \begin{bmatrix}
%         \cos\vartheta_{\rm u}\\
%         0\\
%         \sin\vartheta_{\rm u}
%     \end{bmatrix},
%     \qquad
%     r_{\rm u}\geq0,\quad
%     0\leq\vartheta_{\rm u}\leq\frac{\pi}{2}.
%     \label{eq:uav_reduced_representation}
% \end{equation}
\begin{equation}
    \mathbf q_{\rm u}
    =
    r_{\rm u}
    \begin{bmatrix}
        \cos\vartheta_{\rm u},
        0,
        \sin\vartheta_{\rm u}
    \end{bmatrix},
    \quad
    r_{\rm u}\geq0,\quad
    0\leq\vartheta_{\rm u}\leq\frac{\pi}{2}.
    \label{eq:uav_reduced_representation}
\end{equation}
Accordingly, the reduced UAV candidate domain is
\begin{equation}
    \mathcal Q
    \triangleq
    \left\{
    \mathbf q_{\rm u}
    \mid
    r_{\rm u}\geq0,\ 
    0\leq\vartheta_{\rm u}\leq\frac{\pi}{2}
    \right\}
    =
    \mathcal Q_1\cup\mathcal Q_2.
    \label{eq:uav_candidate_domain_cone}
\end{equation}
where
\begin{align}
    \mathcal Q_1
    &\triangleq
    \left\{
    \mathbf q_{\rm u}
    \mid
    0\leq r_{\rm u}<r_{\min},\
    0\leq\vartheta_{\rm u}\leq\frac{\pi}{2}
    \right\},
    \label{eq:Q1_cone}\\
    \mathcal Q_2
    &\triangleq
    \left\{
    \mathbf q_{\rm u}
    \mid
    r_{\rm u}\geq r_{\min},\
    0\leq\vartheta_{\rm u}\leq\frac{\pi}{2}
    \right\}.
    \label{eq:Q2_cone}
\end{align}
Here, $\mathcal Q_1$ contains UAV locations closer to the BS than every possible target point, whereas $\mathcal Q_2$ contains UAV locations outside $\mathcal Q_1$.

\begin{figure}[t]
    \centering
    \subfloat[Spatial Pareto placements.]{
        \includegraphics[width=0.7\columnwidth]{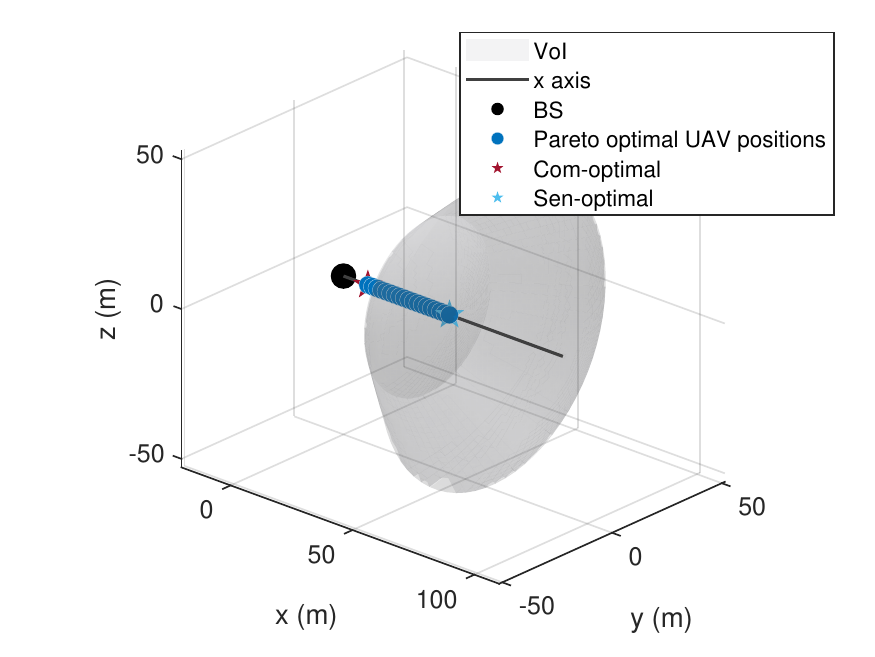}
        \label{fig:exp1_rtss_geometry}}\\
    \subfloat[Rate-CRLB Pareto boundary.]{
        \includegraphics[width=0.7\columnwidth]{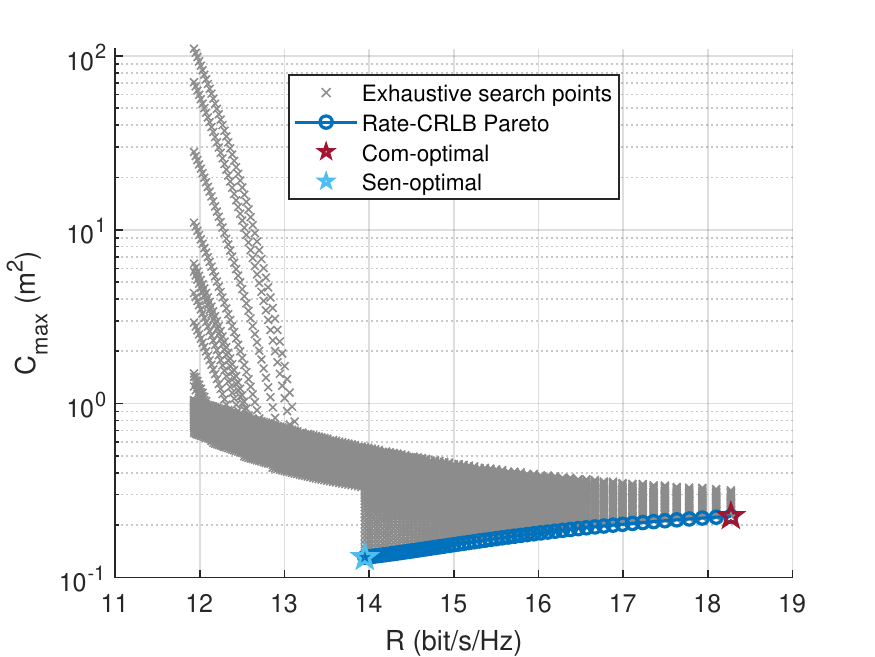}
        \label{fig:exp1_rtss_front}}
    \caption{Numerical illustration of the Pareto-placement structure in Theorem~\ref{the:conical_RoI_pareto}: (a) spatial Pareto-optimal UAV locations for the radially truncated spherical sector VoI and (b) the corresponding rate-CRLB boundary. The gray crosses denote all searched placement candidates.}
    \label{fig:exp1_rtss}
\end{figure}

% To avoid unbounded CRLB, we define the UAV forbidden set as
% \begin{equation}
%     \overline{\mathcal Q}
%     \triangleq
%     \left\{
%     \mathbf q_{\rm u}\in\mathcal Q_0
%     \mid
%     r_{\rm u}\geq r_{\min},\
%     0\leq\vartheta_{\rm u}\leq\vartheta_{\max}
%     \right\}.
%     \label{eq:uav_singularity_set_cone}
% \end{equation}
% The feasible UAV placement region for the CRLB-oriented design is thus given by
% \begin{equation}
%     \mathcal Q
%     =
%     \mathcal Q_0\setminus\overline{\mathcal Q}
%     =
%     \mathcal Q_1\cup\mathcal Q_2,
%     \label{eq:uav_feasible_region_cone}
% \end{equation}
% where
% \begin{align}
%     \mathcal Q_1
%     &\triangleq
%     \left\{
%     \mathbf q_{\rm u}
%     \mid
%     0\leq r_{\rm u}<r_{\min},\
%     0\leq\vartheta_{\rm u}\leq\frac{\pi}{2}
%     \right\},
%     \label{eq:Q1_cone}\\
%     \mathcal Q_2
%     &\triangleq
%     \left\{
%     \mathbf q_{\rm u}
%     \mid
%     r_{\rm u}\geq r_{\min},\
%     \vartheta_{\max}<\vartheta_{\rm u}\leq\frac{\pi}{2}
%     \right\}.
%     \label{eq:Q2_cone}
% \end{align}
% Here, $\mathcal Q_1$ contains UAV locations closer to the BS than every possible target point, whereas $\mathcal Q_2$ contains UAV locations outside the angular extent of the conical RoI. 

\begin{proposition}[Inner-region convexity]
\label{prop:crlb_inner_convexity}
For any fixed target $\mathbf q_{\rm t}\in\mathcal A$, $\mathcal C(\mathbf q_{\rm t},\mathbf q_{\rm u})$ is convex in $\mathbf q_{\rm u}$ over $\|\mathbf q_{\rm u}\|<\|\mathbf q_{\rm t}\|$. Consequently, $C_{\max}(\mathbf q_{\rm u})$ is convex over $\mathcal Q_1$.
\end{proposition}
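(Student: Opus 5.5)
The plan is to prove convexity term by term for a fixed target, then pass to $C_{\max}$ by a pointwise-maximum argument. Fix $\mathbf q_{\rm t}\in\mathcal A$ and write $\mathbf v\triangleq\mathbf q_{\rm t}/r_{\rm t}$ and $\mathbf u\triangleq\mathbf q_{\rm t}-\mathbf q_{\rm u}$, which is affine in $\mathbf q_{\rm u}$. Decompose $\mathbf u$ along and orthogonal to $\mathbf v$ as $a\triangleq\mathbf v^{\rm T}\mathbf u$ and $\mathbf w\triangleq\mathbf u-a\mathbf v$, with $b\triangleq\|\mathbf w\|$ and $\rho\triangleq r_{\rm ut}=\sqrt{a^2+b^2}$.

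The first step uses the domain $\|\mathbf q_{\rm u}\|<r_{\rm t}$, which is a convex ball. On it, $a=r_{\rm t}-\mathbf v^{\rm T}\mathbf q_{\rm u}>0$. Also, from \eqref{eq:h_geometric}, $(r_{\rm t}+r_{\rm ut})^2-r_{\rm u}^2>0$, so $h=(\rho+a)/\rho>1$. Hence all three terms of \eqref{eq:crlb_geometric_form} are finite and smooth there. Since affine precomposition preserves convexity, it suffices to prove convexity of each term as a function of $\mathbf u$.

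The angular term $\mathcal C_{\rm ang}$ is a positive constant times $\|\mathbf u\|^2$, so it is convex. For the delay term, note that $\mathcal C_\tau\propto(\rho/h)^2=\varphi(\mathbf u)^2$, where
\[
\varphi(\mathbf u)=\frac{\rho^2}{\rho+a}.
\]
Because the square of a nonnegative convex function is convex, it suffices to show that $\varphi$ is convex. The function $\varphi$ is positively homogeneous of degree one and depends only on $(a,b)$. Writing $\varphi=b\,\psi(a/b)$, i.e., as the perspective of
\[
\psi(s)=\frac{s^2+1}{\sqrt{s^2+1}+s}=s^2+1-s\sqrt{s^2+1},
\]
$\varphi$ is convex in $(a,b)$ exactly when $\psi$ is convex. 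A direct computation gives
\[
\psi''(s)=2-\frac{s(2s^2+3)}{(s^2+1)^{3/2}}.
\]
This is positive: for $s\le0$ it is immediate, and for $s>0$ squaring reduces the claim to $4(s^2+1)^3-s^2(2s^2+3)^2=3s^2+4>0$. A convex function of $(a,b)$ that is even in $b$ is nondecreasing in $|b|$. Composing it with the convex map $\mathbf w\mapsto\|\mathbf w\|$ therefore yields convexity in $(a,\mathbf w)$, i.e., in $\mathbf u$.

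The coupling term is the step I expect to be the main obstacle. Since $\mathbf q_{\rm t}\times\mathbf q_{\rm u}=-\mathbf q_{\rm t}\times\mathbf u=-\mathbf q_{\rm t}\times\mathbf w$, we have $S_{xy}^2/\xi_y+S_{xz}^2/\xi_z=\|\mathbf L\mathbf w\|^2$ for a fixed linear map $\mathbf L$ that annihilates $\mathbf v$. Then
\[
\mathcal C_{\rm cpl}\propto G(\mathbf u)^2,\qquad G(\mathbf u)=\frac{\|\mathbf L\mathbf w\|\,\rho}{\rho+a},
\]
and $G$ is again degree-one homogeneous. If $\mathbf L$ is a multiple of an isometry on $\mathbf v^\perp$ (e.g.\ the square UPA with on-axis target), then $\|\mathbf L\mathbf w\|=\gamma b$. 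In that case $G=\gamma\,b\rho/(\rho+a)=\gamma\,b\,\psi(a/b)$, the same perspective as before, and convexity follows verbatim. For general $\mathbf L$, my first attempt will be to write $G(a,\mathbf w)=\|\mathbf L\mathbf w\|\,k(a/\|\mathbf w\|)$ with $k(s)=\sqrt{s^2+1}\,/\,(\sqrt{s^2+1}+s)$, which is positive and decreasing. I will then verify nonnegativity of the Hessian by exploiting homogeneity, which reduces the check to the unit sphere in the three variables $(a,\mathbf w)$. The check will use $\|\mathbf L\mathbf w\|$ convex and homogeneous, and $\|\mathbf L\mathbf w\|\le\|\mathbf L\|\,b$. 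If the standalone Hessian check fails in some directions, the fallback is to group $\mathcal C_{\rm cpl}$ with $\mathcal C_\tau$: both carry the factor $1/h^2$, so
\[
\mathcal C_\tau+\mathcal C_{\rm cpl}\propto \frac{\rho^2}{(\rho+a)^2}\left(c_1\rho^2+c_2\|\mathbf L\mathbf w\|^2\right).
\]
This equals the square of a quadratic-form norm times the same factor as above, and the combined function can be handled via the perspective of a convex quadratic in $(a,\mathbf w)$.

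Finally, summing the convex terms shows that $\mathcal C(\mathbf q_{\rm t},\cdot)$ is convex on $\{\|\mathbf q_{\rm u}\|<r_{\rm t}\}$. For every $\mathbf q_{\rm u}\in\mathcal Q_1$ and every $\mathbf q_{\rm t}\in\mathcal A$, we have $\|\mathbf q_{\rm u}\|<r_{\min}\le r_{\rm t}$. Hence on the convex set $\mathcal Q_1$, $C_{\max}$ is a pointwise supremum of convex functions and is therefore convex.
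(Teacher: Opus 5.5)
There is a genuine gap at the coupling term. Your treatment of $\mathcal C_{\rm ang}$, of $\mathcal C_\tau$, and of the final pointwise-maximum step is sound. But the step you yourself flag as the obstacle is left as a plan, and neither branch of that plan can succeed.

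\textbf{The standalone route fails.} Convexity of $\mathcal C_{\rm cpl}$ on its own is false in general. For a square UPA, take $\{\mathbf m,\mathbf n\}$ an orthonormal basis of $\mathbf v^{\perp}$ with $\mathbf n\parallel\mathbf e_x\times\mathbf v$. Then $S_{xy}^2+S_{xz}^2=\frac{r_{\rm t}^2}{4}\big[(\mathbf m^{\rm T}\mathbf u)^2+\ell_x^2(\mathbf n^{\rm T}\mathbf u)^2\big]$, so your $\|\mathbf L\mathbf w\|^2$ is isotropic only for an on-axis target. For, e.g., a $32\times 2$ array its eigenvalue ratio is $\xi_z/\xi_y\approx 0.003$ even on axis.

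The rank-one piece $(\mathbf m^{\rm T}\mathbf u)^2\rho^2/(\rho+a)^2$ is not convex. On the line $\mathbf u=a_0\mathbf v+y_0\mathbf m+t\mathbf n$, which stays in $\{a>0\}$, it equals $y_0^2\rho^2/(\rho+a_0)^2$. This is bounded by $y_0^2$ and nonconstant in $t$, so it cannot be convex along that line. The Hessian of a degree-two homogeneous function is constant along rays, and in the $\mathbf u$ variable the ball $\|\mathbf q_{\rm u}\|<r_{\rm t}$ contains a segment of every ray with $a>0$. Hence this negative curvature occurs inside your domain. Concretely, at $\mathbf u\propto 0.2\mathbf v+0.557\mathbf m+0.806\mathbf n$, the second derivative of $\mathcal C_{\rm cpl}$ along $\mathbf n$ is a positive multiple of approximately $1.97\,\ell_x^2-0.047$. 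This is negative for, e.g., $\ell_x=0.1$, which is an admissible target for a general VoI.

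\textbf{The fallback also fails.} Grouping with $\mathcal C_\tau$ cannot repair this uniformly. Its weight $c^2r_{\rm t}^2/\xi_\tau\propto\beta_{\rm rms}^{-2}$ can be made arbitrarily small without changing $\mathcal C_{\rm cpl}$, whereas the proposition is claimed for every bandwidth.

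\textbf{The missing idea} is to pair $\mathcal C_{\rm cpl}$ with $\mathcal C_{\rm ang}$, whose weight equals the trace of the coupling quadratic form. Let $\mathbf p=\mathbf q_{\rm t}-\mathbf q_{\rm u}$, $d=\|\mathbf p\|$, $u=\mathbf e_{\rm t}^{\rm T}\mathbf p$, and take the unit vectors $\mathbf m_y\propto\mathbf e_z\times\mathbf e_{\rm t}$ and $\mathbf m_z\propto\mathbf e_y\times\mathbf e_{\rm t}$ (both orthogonal to $\mathbf e_{\rm t}$). The paper then writes $\mathcal C_{\rm ang}+\mathcal C_{\rm cpl}=a_y\phi_{\mathbf m_y}(\mathbf p)+a_z\phi_{\mathbf m_z}(\mathbf p)$, where $\phi_{\mathbf m}(\mathbf p)=d^2+(\mathbf m^{\rm T}\mathbf p)^2d^2/(d+u)^2$. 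The same coefficient $a_y=r_{\rm t}^4(1-\ell_z^2)/(\ell_x^2\xi_y)$ multiplies both the $\mathbf m_y$ coupling piece and a copy of $d^2$, and likewise for $a_z$.

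Each rank-one coupling piece thus carries its own compensating $d^2$. This makes the argument independent of $\ell_x$, $\xi_y/\xi_z$ and $\beta_{\rm rms}$. What remains is to show that $\phi_{\mathbf m}$ is convex on $u>0$ for every unit $\mathbf m\perp\mathbf e_{\rm t}$; the paper does this by Sylvester's criterion on its Hessian in the basis $\{\mathbf e_{\rm t},\mathbf m,\mathbf n\}$. Your perspective argument only sees the section spanned by $\mathbf e_{\rm t}$ and $\mathbf m$. The trouble sits in the transverse direction $\mathbf n$, which is exactly where the $d^2$ term is needed.

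\textbf{A smaller slip.} In fact $\frac{s^2+1}{\sqrt{s^2+1}+s}=(s^2+1)^{3/2}-s(s^2+1)$, not $s^2+1-s\sqrt{s^2+1}$. The latter is your $k(s)$, so your $\psi''$ computation actually certifies $k$, which is what the isotropic coupling case uses. The delay-term conclusion still holds, because the correct $\psi$ satisfies $\psi''(s)=\frac{6s^2+3}{\sqrt{s^2+1}}-6s>0$ (squaring leaves a difference of $9$). With that fix, your perspective-and-square treatment of $\mathcal C_\tau$ is a valid and somewhat cleaner alternative to the paper's explicit Hessian of $d^4/(d+u)^2$.
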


\begin{IEEEproof}
Please refer to Appendix~\ref{app:crlb_inner_convexity}.
\end{IEEEproof}

\begin{lemma}
\label{lem:Q1_axis}
For the radially truncated spherical sector VoI in \eqref{eq:conical_roi}, suppose that the BS is equipped with an arbitrary rectangular UPA. If the UAV candidate region is restricted to $\mathcal Q_1$, every Pareto-optimal UAV location lies on the sector axis, i.e., $\mathbf q_{\rm u}^{\star}=r_{\rm u}^{\star}\mathbf e_x$ with $0\leq r_{\rm u}^{\star}<r_{\min}$.
\end{lemma}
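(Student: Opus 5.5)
The plan is to combine a reflection-symmetry argument with the inner-region convexity of Proposition~\ref{prop:crlb_inner_convexity}. We will show that every off-axis placement in $\mathcal Q_1$ is strictly dominated by its projection onto the sector axis. The point of this route is that it never uses the rotational invariance \eqref{eq:crlb_rotation_symmetry}, which fails for a non-square UPA. It only uses the two coordinate reflections, and those remain symmetries for an arbitrary rectangular UPA.

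\textbf{Step 1 (reflection invariance).} Let $\mathbf D_y=\mathrm{diag}(1,-1,1)$ and $\mathbf D_z=\mathrm{diag}(1,1,-1)$. Apply such a reflection simultaneously to $\mathbf q_{\rm t}$ and $\mathbf q_{\rm u}$, and check each quantity in \eqref{eq:crlb_geometric_form}:
\begin{itemize}
\item the distances $r_{\rm t}$, $r_{\rm u}$, $r_{\rm ut}$ are unchanged because the maps are orthogonal;
\item $h$ in \eqref{eq:h_geometric} depends only on these distances, so it is unchanged;
\item $\ell_x$ is unchanged, and $\ell_y^2$, $\ell_z^2$ are unchanged;
\item $S_{xy}$ and $S_{xz}$ in \eqref{eq:projected_areas} are unchanged, since the $z$- and $y$-components of $\mathbf q_{\rm t}\times\mathbf q_{\rm u}$ at most flip sign and enter through absolute values.
\end{itemize}
Hence $\mathcal C(\mathbf D\mathbf q_{\rm t},\mathbf D\mathbf q_{\rm u})=\mathcal C(\mathbf q_{\rm t},\mathbf q_{\rm u})$ for $\mathbf D\in\{\mathbf D_y,\mathbf D_z,\mathbf D_y\mathbf D_z\}$, for any $N_y,N_z$. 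The VoI in \eqref{eq:conical_roi} is invariant under these maps, because $r_{\rm t}$ and $\vartheta_{\rm t}$ depend only on $x_{\rm t}$ and $r_{\rm t}$. Reindexing the maximum in \eqref{eq:regional_worst_crlb} then gives $C_{\max}(\mathbf D\mathbf q_{\rm u})=C_{\max}(\mathbf q_{\rm u})$.

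\textbf{Step 2 (averaging via convexity).} Take any $\mathbf q_{\rm u}=[x_{\rm u},y_{\rm u},z_{\rm u}]^{\rm T}\in\mathcal Q_1$ with $x_{\rm u}\ge0$. Its four images $\mathbf q_{\rm u}$, $\mathbf D_y\mathbf q_{\rm u}$, $\mathbf D_z\mathbf q_{\rm u}$, $\mathbf D_y\mathbf D_z\mathbf q_{\rm u}$ have the same norm, so all lie in $\mathcal Q_1$. That set is the intersection of an open ball of radius $r_{\min}$ with the half-space $x\ge0$, hence convex. The arithmetic mean of the four images is $x_{\rm u}\mathbf e_x\in\mathcal Q_1$. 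By Proposition~\ref{prop:crlb_inner_convexity}, $C_{\max}$ is convex on $\mathcal Q_1$ as a pointwise maximum of convex functions. Jensen's inequality together with Step~1 then yields
\begin{equation*}
C_{\max}(x_{\rm u}\mathbf e_x)\le \tfrac14\textstyle\sum_{\mathbf D}C_{\max}(\mathbf D\mathbf q_{\rm u})=C_{\max}(\mathbf q_{\rm u}).
\end{equation*}

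\textbf{Step 3 (strict domination).} If $\mathbf q_{\rm u}$ is off-axis, i.e., $(y_{\rm u},z_{\rm u})\neq(0,0)$, then $\|x_{\rm u}\mathbf e_x\|=x_{\rm u}<r_{\rm u}$. This gives a strictly larger rate $\mathcal R$ with no larger $C_{\max}$, so $\mathbf q_{\rm u}$ is not Pareto-optimal. Thus every Pareto-optimal point in $\mathcal Q_1$ has the form $r_{\rm u}^\star\mathbf e_x$ with $0\le r_{\rm u}^\star<r_{\min}$; the degenerate case $x_{\rm u}=0$ gives the BS location, which also lies on the axis.

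\textbf{Main obstacle.} The step needing the most care is confirming that Proposition~\ref{prop:crlb_inner_convexity} indeed holds for an arbitrary rectangular UPA ($\xi_y\neq\xi_z$), rather than only in the square case of \eqref{eq:crlb_conical_geometric_form}. I would check this first from its proof. The convexity should follow term by term with separate weights $\xi_y$ and $\xi_z$, since these weights only scale nonnegative convex pieces. A second, minor check is that $C_{\max}$ is finite on $\mathcal Q_1$, which is needed for Jensen. This holds because $h>0$ whenever $r_{\rm u}<r_{\rm t}$, since $(r_{\rm t}+r_{\rm ut})^2>r_{\rm u}^2$, and because $\ell_x\ge\cos\vartheta_{\max}>0$ on $\mathcal A$.
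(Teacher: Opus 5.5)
Your proposal is correct and follows essentially the same argument as the paper's proof. That proof uses reflection invariance of the CRLB and the VoI about the $xy$- and $xz$-planes (valid even when $N_y\neq N_z$), averages the four reflected images to $x_{\rm u}\mathbf e_x$, applies the convexity of Proposition~\ref{prop:crlb_inner_convexity}, and notes the strict rate gain when $(y_{\rm u},z_{\rm u})\neq(0,0)$. You additionally spell out the term-by-term invariance check, the convexity of the domain, and the finiteness of $C_{\max}$ on $\mathcal Q_1$, which the paper leaves implicit.
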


\begin{IEEEproof}
    Please refer to Appendix \ref{app:Q1_axis}.
\end{IEEEproof}

\begin{lemma}\label{lem:Q2_dominated}
For the radially truncated spherical sector VoI in \eqref{eq:conical_roi} with the square-UPA condition in \eqref{eq:square_upa}, if
\begin{equation}
    \frac{r_{\min}}{r_{\max}}
    \geq
    \cos(2\vartheta_{\max}),
    \label{eq:Q2_domination_condition}
\end{equation}
then every UAV location in $\mathcal Q_2$ is dominated by a UAV location in $\mathcal Q_1$.
\end{lemma}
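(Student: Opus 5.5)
The approach is to fix an arbitrary $\mathbf q_{\rm u}\in\mathcal Q_2$ and construct a competitor $\mathbf q_{\rm u}'\in\mathcal Q_1$ that is at least as good in both objectives and strictly better in one. Since $\|\mathbf q_{\rm u}'\|<r_{\min}\le\|\mathbf q_{\rm u}\|$ and $\mathcal R$ is strictly decreasing in $r_{\rm u}$, the rate comparison is automatic. The whole proof therefore reduces to showing $C_{\max}(\mathbf q_{\rm u}')\le C_{\max}(\mathbf q_{\rm u})$.

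By the rotational symmetry \eqref{eq:crlb_rotation_symmetry}, I work in the $xz$-plane with $\mathbf q_{\rm u}$ as in \eqref{eq:uav_reduced_representation}. I then split into two cases according to whether $\vartheta_{\rm u}\le\vartheta_{\max}$ or $\vartheta_{\rm u}>\vartheta_{\max}$.

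\emph{Case $\vartheta_{\rm u}\le\vartheta_{\max}$.} Here the ray from the BS through the UAV passes through the VoI, and I show the CRLB is unbounded. Choose the target $\mathbf q_{\rm t}=r_{\rm t}\mathbf q_{\rm u}/r_{\rm u}$ with $r_{\rm t}\in[r_{\min},\min(r_{\rm u},r_{\max})]$; this lies in $\mathcal A$. Then $r_{\rm ut}=r_{\rm u}-r_{\rm t}$, so $h=0$ by \eqref{eq:h_geometric}, and the delay term $\mathcal C_\tau$ blows up. The degenerate subcase $r_{\rm t}=r_{\rm u}$, where $r_{\rm ut}=0$, is handled by approaching it by a limit along the segment. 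Hence $C_{\max}(\mathbf q_{\rm u})=+\infty$, and any point of $\mathcal Q_1$ dominates it.

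\emph{Case $\vartheta_{\rm u}>\vartheta_{\max}$.} I take the competitor on the sector axis, $\mathbf q_{\rm u}'=\rho\mathbf e_x$ with $\rho<r_{\min}$; Lemma~\ref{lem:Q1_axis} suggests that axial points are the natural candidates. I would first identify the worst-case target for an axial UAV. All three terms of \eqref{eq:crlb_conical_geometric_form} grow with $r_{\rm t}$ and with the off-axis angle, through $\ell_x^{-2}$, $r_{\rm ut}$ and $S_{xy}^2+S_{xz}^2$. So I expect the maximizer to lie on the outer rim $r_{\rm t}=r_{\max}$, $\vartheta_{\rm t}=\vartheta_{\max}$, which by symmetry can be taken in the $xz$-plane.

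Next, I exhibit a target for $\mathbf q_{\rm u}$ that is at least as bad. The natural choice is the rim point on the opposite side of the axis from the UAV, $\mathbf q_{\rm t}^\dagger=r_{\max}[\cos\vartheta_{\max},0,-\sin\vartheta_{\max}]^{\rm T}$. For this target, $\ell_x$ is the same as for the axial worst case, so the angular weighting is unchanged. It remains to compare, term by term, the quantities $r_{\rm ut}$, $h$, and $S_{xz}$ for $\mathbf q_{\rm u}$ versus $\mathbf q_{\rm u}'$.

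The angle between $\mathbf q_{\rm u}$ and $\mathbf q_{\rm t}^\dagger$ exceeds $2\vartheta_{\max}$. Consequently the projection of $\mathbf q_{\rm t}^\dagger$ onto the UAV direction is at most $r_{\max}\cos(2\vartheta_{\max})\le r_{\min}\le r_{\rm u}$, which is exactly where condition \eqref{eq:Q2_domination_condition} enters. This projection bound should give $r_{\rm ut}\ge$ the corresponding distance from the axial point, since the UAV lies "beyond" the target's foot on its ray. It should also give a larger projected triangle area $S_{xz}$, because the BS–target–UAV triangle is wider. For $h$, I would use $h=1+\cos\angle(\mathbf q_{\rm t},\mathbf q_{\rm t}-\mathbf q_{\rm u})$ and show the angle at the target is no smaller than for the axial competitor.

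\emph{Main obstacle.} The hard step is the monotone comparison of $h$ and of the ratio $\mathcal C_\tau\propto r_{\rm ut}^2/h^2$, because $r_{\rm ut}$ and $h$ move jointly. A larger separation can also open the bistatic angle, which increases $h$ and works against the inequality. My first attempt is to choose $\rho$ as the limit $\rho\to r_{\min}^-$ and write every term as an explicit function of $(r_{\rm u},\vartheta_{\rm u})$ at the fixed target $\mathbf q_{\rm t}^\dagger$. I would then verify monotonicity in $\vartheta_{\rm u}$ and in $r_{\rm u}$ separately, using \eqref{eq:Q2_domination_condition} to fix the sign of $\mathbf q_{\rm t}^{\dagger\rm T}(\mathbf q_{\rm t}^\dagger-\mathbf q_{\rm u})$. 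If the term-by-term comparison fails, the fallback is to compare the full sums, exploiting the fact that the angular term is identical for both and dominates at large $r_{\max}$.
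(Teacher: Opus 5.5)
\paragraph{Verdict.} Your outline has the same skeleton as the paper's proof: the case split on $\vartheta_{\rm u}$, an axial competitor in $\mathcal Q_1$, the competitor's worst-case target placed on the cone boundary in the axial plane opposite the UAV, and the hypothesis entering through $r_{\rm t}\cos(\vartheta_{\rm u}+\vartheta_{\max})<r_{\min}$. There is a genuine gap, however: the decisive inequality $\mathcal C(\mathbf q_{\rm t},\mathbf q_{\rm u})\geq\mathcal C(\mathbf q_{\rm t},\rho\mathbf e_x)$ is left unproved, and the term-by-term route you sketch for it fails.

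\paragraph{Why the term-by-term comparison fails.} Neither $h$ nor $S_{xz}$ is ordered the way you need.
\begin{itemize}
\item Take $\vartheta_{\max}=30^\circ$, $r_{\min}=1$, $r_{\max}=2$ (the condition holds with equality), $\mathbf q_{\rm t}^\dagger=(\sqrt3,0,-1)$ and $\rho\to1^-$. The axial point gives $h\approx1.915$. The point $\mathbf q_{\rm u}=(0,0,1)\in\mathcal Q_2$ gives $h=1+5/(2\sqrt7)\approx1.945$, i.e.\ a \emph{smaller} bistatic angle at the target.
\item If $\vartheta_{\rm u}>\pi-2\vartheta_{\max}$, then $\sin(\vartheta_{\rm u}+\vartheta_{\max})<\sin\vartheta_{\max}$. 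So for $r_{\rm u}$ close to $\rho$, the triangle area is smaller for $\mathbf q_{\rm u}$.
\end{itemize}

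\paragraph{The missing idea.} You need to pass to combinations that are monotone. Since target, UAV and axis are coplanar, $S_{xy}^2+S_{xz}^2=\frac14 s^2r_{\rm t}^2\sin^2\delta$. The law of sines then gives $s^2\sin^2\delta/h^2=d^2(2/h-1)$. Hence the CRLB equals $\frac{c^2r_{\rm t}^2}{\xi_\tau}(d/h)^2+\frac{r_{\rm t}^4}{\xi_y}d^2+\frac{2r_{\rm t}^4}{\xi_y\ell_x^2}\frac{d^2}{h}$, with $\ell_x=\cos\vartheta_{\max}$ fixed. The paper's proof then follows a rotate-then-push path.
\begin{itemize}
\item \emph{Rotate.} At a fixed radius $r_0<r_{\rm t}$, $\partial(d/h)/\partial\delta=r_0\sin\delta\,[r_{\rm t}(d+r_{\rm t})-r_0^2]/(d+r_{\rm t}-r_0\cos\delta)^2>0$. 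So rotating the UAV from separation $\vartheta_{\max}$ to $\vartheta_{\rm u}+\vartheta_{\max}$ increases every term.
\item \emph{Push.} Moving the UAV radially from $r_0$ to $r_{\rm u}$, $h$ always decreases, since ${\rm d}h/{\rm d}s=-sr_{\rm t}\sin^2\delta/d^3$. Meanwhile $d$ increases because $s\geq r_0>r_{\rm t}^\star\cos(\vartheta_{\rm u}+\vartheta_{\max})$; this is exactly where the hypothesis is used.
\item \emph{Order.} The order matters, because the $\delta$-monotonicity relies on $r_0<r_{\rm t}$.
\end{itemize}
Your fallback does not replace this. The angular term is not identical for the two placements, since it carries $r_{\rm ut}^2$. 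Dominance ``at large $r_{\max}$'' is only asymptotic.

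\paragraph{The outer-rim claim.} Your claim that the axial worst case lies on the outer rim $r_{\rm t}=r_{\max}$ is unproved and can fail, precisely for your choice $\rho\to r_{\min}^-$. At fixed $\delta$, the $r_{\rm t}$-log-derivative of the delay factor $r_{\rm t}^2(d/h)^2$ is $2/r_{\rm t}+2(2\cos\beta-1)/d$, where $\beta$ is the bistatic angle. This is negative near the inner boundary when $\rho\approx r_{\min}$ and $\vartheta_{\max}$ is small. For example, take $\vartheta_{\max}=20^\circ$, $r_{\min}=100$ m, $r_{\max}=101$ m, $\rho=99.9$ m, $\beta_{\rm rms}=1$ MHz and an $8\times8$ UPA. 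Then the target at $(r_{\min},\vartheta_{\max})$ has a larger CRLB than the one at $(r_{\max},\vartheta_{\max})$.

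The paper only proves $\vartheta_{\rm t}^\star=\vartheta_{\max}$ and keeps a generic worst-case radius $r_{\rm t}^\star\in[r_{\min},r_{\max}]$; you should do the same. You should also fix an actual $r_0\in(\max\{0,r_{\max}\cos(\vartheta_{\rm u}+\vartheta_{\max})\},r_{\min})$ rather than taking a limit, since the competitor must lie in $\mathcal Q_1$.

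\paragraph{The boundary sphere $r_{\rm u}=r_{\min}$.} In your first case, this subcase cannot be handled by a limit along the segment, because those points have $r_{\rm t}<r_{\min}$ and lie outside $\mathcal A$. In fact, $r_{\rm t}\geq r_{\rm u}$ gives $\mathbf q_{\rm t}^{\rm T}(\mathbf q_{\rm t}-\mathbf q_{\rm u})\geq r_{\rm t}(r_{\rm t}-r_{\rm u})\geq0$, hence $h\geq1$. The CRLB therefore stays bounded and tends to $0$ at the coincident target. This is consistent with the strict $\alpha>1$ in the definition of $\mathcal Q_\infty$. This sphere needs either a convention declaring the coincident target unresolvable or a separate domination argument. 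The paper's proof merely asserts unboundedness for $\vartheta_{\rm u}\le\vartheta_{\max}$ and has the same hole.
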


\begin{IEEEproof}
Please refer to Appendix~\ref{app:Q2_dominated}.
\end{IEEEproof}

\begin{theorem}
\label{the:conical_RoI_pareto}
For the radially truncated spherical sector VoI in \eqref{eq:conical_roi} and the feasible UAV placement region $\mathcal Q$, suppose that the BS is equipped with a square UPA satisfying \eqref{eq:square_upa}. If
\begin{equation}
    \frac{r_{\min}}{r_{\max}}
    \geq
    \cos(2\vartheta_{\max}),
    \label{eq:cone_pareto_subset_condition}
\end{equation}
then every Pareto-optimal UAV location satisfies
\begin{equation}
    \mathbf q_{\rm u}^{\star}
    =
    r_{\rm u}^{\star}\mathbf e_x,
    \qquad
    0\leq r_{\rm u}^{\star}<r_{\min}.
    \label{eq:cone_pareto_axis_subset}
\end{equation}
and the Pareto-optimal UAV positions can be characterized by
\begin{equation}
\mathcal L_{\rm Pareto}^{\rm RTSS}
\subseteq
\{r_{\rm u}\mathbf e_x:0\le r_{\rm u}<r_{\min}\}
\end{equation}
\end{theorem}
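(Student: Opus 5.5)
The plan is to combine Lemma~\ref{lem:Q2_dominated} and Lemma~\ref{lem:Q1_axis} using a short transitivity argument on Pareto dominance over the partition $\mathcal Q=\mathcal Q_1\cup\mathcal Q_2$ in \eqref{eq:uav_candidate_domain_cone}.

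\emph{Step 1: exclude $\mathcal Q_2$.} Let $\mathbf q_{\rm u}^{\star}$ be Pareto-optimal for (P-CRLB) over $\mathcal Q$. Suppose $\mathbf q_{\rm u}^{\star}\in\mathcal Q_2$. Under \eqref{eq:cone_pareto_subset_condition} and the square-UPA condition, Lemma~\ref{lem:Q2_dominated} gives some $\mathbf q_{\rm u}'\in\mathcal Q_1\subseteq\mathcal Q$ that dominates $\mathbf q_{\rm u}^{\star}$. This contradicts Pareto optimality, so $\mathbf q_{\rm u}^{\star}\in\mathcal Q_1$.

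I should check that "dominated" in Lemma~\ref{lem:Q2_dominated} means weakly better in both $C_{\max}$ and $r_{\rm u}$, and strictly better in at least one. Note that $r_{\rm u}'<r_{\min}\le r_{\rm u}^{\star}$ automatically gives a strict rate improvement. So it is enough that $C_{\max}(\mathbf q_{\rm u}')\le C_{\max}(\mathbf q_{\rm u}^{\star})$, which also covers placements in $\mathcal Q_2$ where $C_{\max}=+\infty$.

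\emph{Step 2: restrict to the axis.} Since $\mathbf q_{\rm u}^{\star}\in\mathcal Q_1$ is undominated within all of $\mathcal Q$, it is also undominated within $\mathcal Q_1$. It is therefore Pareto-optimal for the problem restricted to $\mathcal Q_1$. Lemma~\ref{lem:Q1_axis} applies to any rectangular UPA, in particular a square one, and yields $\mathbf q_{\rm u}^{\star}=r_{\rm u}^{\star}\mathbf e_x$ with $0\le r_{\rm u}^{\star}<r_{\min}$. This is \eqref{eq:cone_pareto_axis_subset}, and collecting all such points gives the inclusion $\mathcal L_{\rm Pareto}^{\rm RTSS}\subseteq\{r_{\rm u}\mathbf e_x:0\le r_{\rm u}<r_{\min}\}$.

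\emph{Main obstacle: notion of Pareto optimality.} The delicate point is making sure Lemma~\ref{lem:Q1_axis}'s "Pareto-optimal in $\mathcal Q_1$" matches the notion used here. If Lemma~\ref{lem:Q1_axis} only states that an on-axis point is at least as good, the argument needs to change. I would then show that any off-axis $\mathbf q_{\rm u}\in\mathcal Q_1$ is strictly dominated by its axial projection $r_{\rm u}\mathbf e_x$, which has the same rate. The key fact is that the worst-case CRLB at the axial point is strictly smaller.

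If only a weak inequality is available, I would use Proposition~\ref{prop:crlb_inner_convexity} together with the rotational symmetry \eqref{eq:crlb_rotation_symmetry}. Averaging $\mathbf q_{\rm u}$ over rotations about $\mathbf e_x$ gives a point on the axis. By convexity of $C_{\max}$ on $\mathcal Q_1$, its value is at most $C_{\max}(\mathbf q_{\rm u})$, and its norm is strictly smaller when $\vartheta_{\rm u}>0$. Here I would use the mirror point, for which $\|(\mathbf q_{\rm u}+\mathbf R_x(\pi)\mathbf q_{\rm u})/2\|=r_{\rm u}\cos\vartheta_{\rm u}<r_{\rm u}$. The rate is then strictly higher, which yields strict dominance and closes any gap.
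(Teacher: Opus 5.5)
Your proposal is correct and matches the paper's proof. Lemma~\ref{lem:Q2_dominated} excludes $\mathcal Q_2$, and Lemma~\ref{lem:Q1_axis} excludes the off-axis part of $\mathcal Q_1$; that lemma's proof is your mirror-averaging argument, which produces $x_{\rm u}\mathbf e_x$ with weakly smaller $C_{\max}$ and strictly smaller norm. Your first fallback, comparing against the same-radius point $r_{\rm u}\mathbf e_x$, is not the axial projection and is not supported by convexity alone, but it is unnecessary because the averaging argument already closes the gap.
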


\begin{IEEEproof}
Please refer to Appendix~\ref{app:conical_RoI_pareto}.
\end{IEEEproof}

Fig.~\ref{fig:exp1_rtss} numerically illustrates the placement structure characterized by Theorem~\ref{the:conical_RoI_pareto}. As shown in Fig.~\ref{fig:exp1_rtss_geometry}, all nondominated UAV locations lie on the sector axis and satisfy $0\leq r_{\rm u}^{\star}<r_{\min}$, while only a subset of this axial segment is Pareto-optimal. Fig.~\ref{fig:exp1_rtss_front} shows the corresponding rate-CRLB boundary, where the gray crosses represent the exhaustive search points and the two endpoints correspond to the communication- and sensing-oriented placements. Compared with the singleton case, the radially truncated spherical sector Pareto boundary exhibits a considerably smaller sensing-performance variation under the considered setup. The regional worst-case operation limits the attainable localization gain and produces a flatter rate-CRLB boundary.

Theorem~\ref{the:conical_RoI_pareto} characterizes the support of the Pareto-optimal placements but does not guarantee every radius in \eqref{eq:cone_pareto_axis_subset} to be Pareto-optimal. The actual Pareto subset can be obtained by evaluating $\mathcal C_{\max}(r_{\rm u}\mathbf e_x)$ over $0\leq r_{\rm u}<r_{\min}$ and removing the dominated points. However, beyond the theorem conditions, the axial structure is no longer guaranteed. This fundamentally differs from the singleton case, where the Pareto-optimal set is only determined by the target parameter. For example, if $N_y\neq N_z$, the CRLB loses rotational symmetry, so the array dimensions $N_y$ and $N_z$, together with $\bar\beta$, may change the optimal UAV direction. These cases are handled by the general radius-wise computation developed in Section~\ref{sec:general_roi_extension}.

\section{Pareto-Optimal UAV Placement for General VoI}
\label{sec:general_roi_extension}

The preceding sections have derived closed-form Pareto-placement structures for representative VoIs. For a general VoI, such structures are generally unavailable because the worst-case target may vary with the UAV location. Nevertheless, the regional CRLB exhibits an inner-region convexity property, which enables a two-stage computation procedure that uses convex radius-cap optimization in the inner region and fixed-radius search in the outer region.

\begin{figure}[t] 
        \centering \includegraphics[width=0.7\columnwidth]{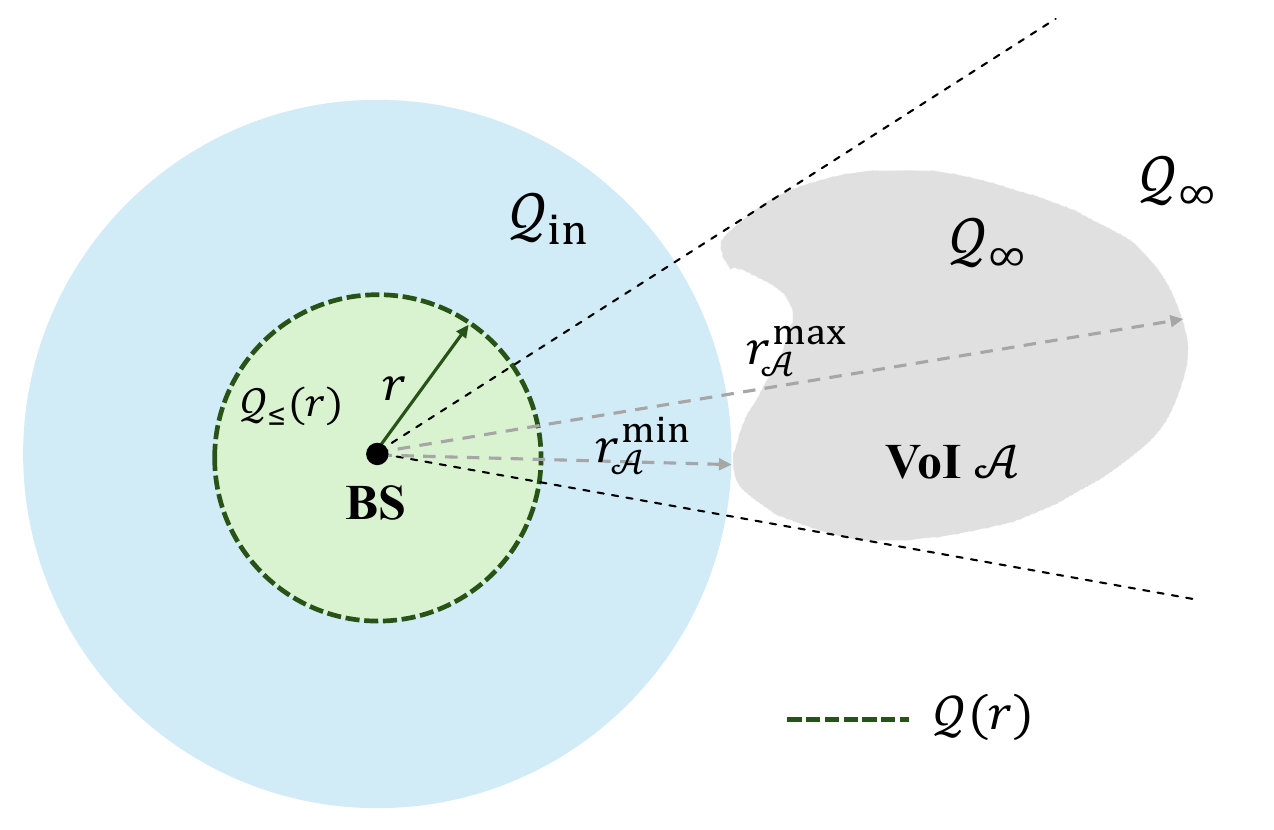}
        \caption{\label{fig:general_roi} An illustration of geometry symbols in Section \ref{sec:general_roi_extension}.}
\end{figure}

\subsection{General CRLB Structure}
\label{subsec:general_crlb_structure}

As shown in Fig. \ref{fig:general_roi}, for a VoI $\mathcal A$, define its minimum and maximum BS-target distances as
\begin{equation}
    r_{\mathcal A}^{\min}
    \triangleq
    \min_{\mathbf q_{\rm t}\in\mathcal A}
    \|\mathbf q_{\rm t}\|,
    \qquad
    r_{\mathcal A}^{\max}
    \triangleq
    \max_{\mathbf q_{\rm t}\in\mathcal A}
    \|\mathbf q_{\rm t}\|.
    \label{eq:roi_radial_extents}
\end{equation}
The inner region is defined by
\begin{equation}
    \mathcal{Q}_{\rm in}=\{ \mathbf{q}_{\rm u}\in\mathcal{Q} \mid \|\mathbf{q}_{\rm u}\| \le r_{\mathcal A}^{\min} \}.
    \label{eq:general_crlb_inner_region}
\end{equation}
We assume that $r_{\mathcal A}^{\min}>0$ and that all target directions avoid the angular singularity of the UPA, i.e., $\ell_x\neq0$ for all $\mathbf q_{\rm t}\in\mathcal A$. The result in Proposition \ref{prop:crlb_inner_convexity} holds for an arbitrary VoI and an arbitrary rectangular UPA. Hence, if $\mathcal Q$ is convex, minimizing the sampled regional CRLB under a radius cap $r<r_{\mathcal A}^{\min}$ is a convex problem. We next identify UAV locations yielding an infinite worst-case CRLB:
\begin{equation}
    \mathcal Q_{\infty}
    \triangleq
    \left\{
    \mathbf q_{\rm u}\in\mathcal Q
    \mid
    \exists\mathbf q_{\rm t}\in\mathcal A,\ 
    \mathbf q_{\rm u}=\alpha\mathbf q_{\rm t},\ 
    \alpha>1
    \right\}.
    \label{eq:general_crlb_infinite_set}
\end{equation}
The set $\mathcal Q_{\infty}$ does not intersect the inner region in \eqref{eq:general_crlb_inner_region}, and its exclusion generally makes the outer-region feasible set nonconvex. In addition, the Pareto search also admits a general radius bound below.

\begin{proposition}[Pareto-optimal UAV radius bound]
\label{prop:crlb_radius_bound}
Consider a finite-CRLB UAV location $\mathbf q_{\rm u}=R\mathbf e\in\mathcal Q$ with $\|\mathbf e\|=1$ and $R>r_{\mathcal A}^{\max}$. If there exists $\tilde r\in[r_{\mathcal A}^{\max},R)$ such that $\tilde r\mathbf e\in\mathcal Q$ and $C_{\max}(\tilde r\mathbf e)<\infty$, then $\mathbf q_{\rm u}=R\mathbf e$ cannot be rate-CRLB Pareto-optimal.
\end{proposition}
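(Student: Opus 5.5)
The plan is to show that the radially contracted placement $\tilde r\mathbf e$ weakly dominates $R\mathbf e$ in CRLB and strictly dominates it in rate. The rate part is immediate: $\mathcal R$ is strictly decreasing in $r_{\rm u}$ and $\tilde r<R$, so $\mathcal R(\tilde r\mathbf e)>\mathcal R(R\mathbf e)$. It therefore suffices to prove $C_{\max}(\tilde r\mathbf e)\le C_{\max}(R\mathbf e)$. I would prove the stronger pointwise claim that, for every fixed $\mathbf q_{\rm t}\in\mathcal A$, the map $r\mapsto\mathcal C(\mathbf q_{\rm t},r\mathbf e)$ is nondecreasing on $[r_{\mathcal A}^{\max},R]$. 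Taking the maximum over $\mathcal A$ then gives the claim.

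Fix $\mathbf q_{\rm t}$ and write $c\triangleq\mathbf e^{\rm T}\mathbf q_{\rm t}/r_{\rm t}\in[-1,1]$. Along the ray $\mathbf q_{\rm u}=r\mathbf e$, the three geometric quantities entering \eqref{eq:crlb_geometric_form} are handled as follows.

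\emph{Distance.} We have $r_{\rm ut}^2=r^2-2rr_{\rm t}c+r_{\rm t}^2$, so $\mathrm d r_{\rm ut}/\mathrm dr=(r-r_{\rm t}c)/r_{\rm ut}\ge (r-r_{\rm t})/r_{\rm ut}\ge0$. The last inequality uses $r\ge r_{\mathcal A}^{\max}\ge r_{\rm t}$.

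\emph{Factor $h$.} From \eqref{eq:h_geometric}, $h=1+(r_{\rm t}-rc)/r_{\rm ut}$. Differentiating and simplifying the numerator should give
\begin{equation*}
\frac{\mathrm d h}{\mathrm d r}=-\frac{r\,r_{\rm t}(1-c^2)}{r_{\rm ut}^{3}}\le 0,
\end{equation*}
so $h$ is nonincreasing in $r$.

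\emph{Projected areas.} Since $\mathbf q_{\rm t}\times\mathbf q_{\rm u}=r\,(\mathbf q_{\rm t}\times\mathbf e)$, both $S_{xy}^2$ and $S_{xz}^2$ in \eqref{eq:projected_areas} scale as $r^2$.

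Combining these three facts handles all terms of \eqref{eq:crlb_geometric_form}. The delay term $\propto r_{\rm ut}^2/h^2$ and the coupling term $\propto S^2/h^2$ are nondecreasing in $r$. The angular term $\propto r_{\rm ut}^2$ is also nondecreasing, because $\ell_x,\ell_y,\ell_z,r_{\rm t}$ do not depend on $\mathbf q_{\rm u}$.

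The main obstacle is the behaviour of $h$, including the possibility $h=0$, and I would handle it as follows. Finiteness of $C_{\max}(R\mathbf e)$ means $h>0$ at $r=R$ for every target. Since $h$ is nonincreasing in $r$, we get $h(\tilde r)\ge h(R)>0$, so the CRLB at $\tilde r\mathbf e$ is also finite; this is consistent with the hypothesis $C_{\max}(\tilde r\mathbf e)<\infty$. The monotonicity of each term is therefore a comparison of positive quantities, and no sign issue arises when squaring $h$. The derivative formula for $h$ is the one computation I would verify carefully. I would also check that the monotonicity argument for $r_{\rm ut}$ needs only $r\ge r_{\rm t}c$, which is guaranteed on $[r_{\mathcal A}^{\max},R]$. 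With pointwise monotonicity established, $C_{\max}(\tilde r\mathbf e)\le C_{\max}(R\mathbf e)$ and $\mathcal R(\tilde r\mathbf e)>\mathcal R(R\mathbf e)$, so $\tilde r\mathbf e$ dominates $R\mathbf e$ and the latter is not Pareto-optimal.
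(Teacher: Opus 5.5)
Your proposal is correct and follows essentially the same route as the paper's proof. Both arguments establish pointwise monotonicity of $\mathcal C(\mathbf q_{\rm t},r\mathbf e)$ along the ray from three facts: $\mathrm d r_{\rm ut}/\mathrm dr\ge0$, $\mathrm dh/\mathrm dr=-r\,r_{\rm t}(1-c^2)/r_{\rm ut}^{3}\le0$, and $S_{xy},S_{xz}\propto r$; they then take the maximum over $\mathcal A$ and use the strict rate gain. Your extra observation that $h(\tilde r)\ge h(R)>0$ shows the hypothesis $C_{\max}(\tilde r\mathbf e)<\infty$ is already implied by finiteness at $R\mathbf e$, a small refinement the paper does not spell out.
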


\begin{IEEEproof}
Please refer to Appendix~\ref{app:crlb_radius_bound}.
\end{IEEEproof}

Under the radial-feasibility condition in Proposition~\ref{prop:crlb_radius_bound}, the Pareto search can be restricted to
\begin{equation}
    0\leq r_{\rm u}\leq r_{\mathcal A}^{\max}.
    \label{eq:crlb_search_radius_interval}
\end{equation}

\subsection{Two-stage Computation Procedure}
\label{subsec:general_roi_computation}

For numerical evaluation, discretize the VoI as $\{\mathbf q_{{\rm t},n}\}_{n=1}^{N}$ and define
\begin{equation}
    C_{\max}^{(N)}(\mathbf q_{\rm u})
    \triangleq
    \max_{n=1,\ldots,N}
    \mathcal C(\mathbf q_{{\rm t},n},\mathbf q_{\rm u}).
    \label{eq:general_crlb_sampled_cmax}
\end{equation}
Since $\mathcal R(\mathbf q_{\rm u})$ depends only on $r_{\rm u}$ and decreases monotonically with it, the Pareto-optimal computation can be divided at $r_{\mathcal A}^{\min}$. Define the radius-cap and fixed-radius candidate sets as
\begin{equation}
\begin{aligned}
    \mathcal Q_{\leq}(r)
    &\triangleq
    \left\{
    \mathbf q_{\rm u}\in\mathcal Q
    \mid
    \|\mathbf q_{\rm u}\|\leq r
    \right\},\\
    \mathcal Q(r)
    &\triangleq
    \left\{
    \mathbf q_{\rm u}\in\mathcal Q
    \mid
    \|\mathbf q_{\rm u}\|=r
    \right\}.
\end{aligned}
\label{eq:general_radius_sets}
\end{equation}
The complete Pareto set is obtained by combining the candidates from the two stages below and removing repeated and dominated elements.

\begin{itemize}
    \item For $0\leq r<r_{\mathcal A}^{\min}$, the sensing-optimal placement under the radius cap is obtained from
    \begin{equation}
        \mathbf q_{\rm u}^{\rm in}(r)
        \in
        \arg\min_{\mathbf q_{\rm u}\in\mathcal Q_{\leq}(r)}
        C_{\max}^{(N)}(\mathbf q_{\rm u}).
        \label{eq:general_inner_cap_problem}
    \end{equation}
    If $\mathcal Q_{\leq}(r)$ is convex, \eqref{eq:general_inner_cap_problem} is a convex optimization problem. 

    \item For $r_{\mathcal A}^{\min}\leq r\leq r_{\mathcal A}^{\max}$, convexity is no longer guaranteed. The search is therefore performed at each fixed radius:
    \begin{equation}
        \mathbf q_{\rm u}^{\rm out}(r)
        \in
        \arg\min_{\mathbf q_{\rm u}\in
        \mathcal Q(r)\setminus\mathcal Q_{\infty}}
        C_{\max}^{(N)}(\mathbf q_{\rm u}).
        \label{eq:general_outer_radius_problem}
    \end{equation}
\end{itemize}

The whole procedure is summarized in Algorithm \ref{alg:general_roi_pareto}. We can conclude that UAV placement optimization separates avoidable geometric inefficiency from the intrinsic rate-localization tradeoff. At a fixed BS-UAV distance $r_{\rm u}$, all UAV directions achieve the same communication rate; hence, directional optimization can reduce the regional CRLB without any rate loss, and may even restore a finite CRLB by moving the UAV outside the CRLB-infinite region. Such geometric singularities cannot be removed merely by increasing the transmit power $P_{\rm t}$ or observation duration $T_{\rm obs}$. After the CRLB-optimal direction is selected at each radius, varying $r_{\rm u}$ reveals the unavoidable tradeoff between communication proximity and localization accuracy.

\begin{algorithm}[t]
\caption{Pareto-Optimal UAV Placement for General VoI}
\label{alg:general_roi_pareto}
\begin{algorithmic}[1]
\STATE \textbf{Input:} $\mathcal A$, $\mathcal Q$, and target samples $\{\mathbf q_{{\rm t},n}\}_{n=1}^{N}$
\STATE \textbf{Output:} Pareto-optimal UAV placement set $\mathcal P_{\rm C}$
\STATE $\mathcal P_{\rm C}\leftarrow\emptyset$
\FOR{each sampled $r\in[0,r_{\mathcal A}^{\min})$}
    \STATE $\widehat{\mathcal Q}_{\rm in}(r)\leftarrow\mathcal Q_{\leq}(r)$
    \IF{$\widehat{\mathcal Q}_{\rm in}(r)\neq\emptyset$}
        \STATE $\displaystyle \mathbf q_{\rm u}^{\rm in}(r)\in\arg\min_{\mathbf q_{\rm u}\in\widehat{\mathcal Q}_{\rm in}(r)}C_{\max}^{(N)}(\mathbf q_{\rm u})$
        \STATE Add $\bigl(\mathbf q_{\rm u}^{\rm in}(r),\mathcal R(\mathbf q_{\rm u}^{\rm in}(r)),C_{\max}^{(N)}(\mathbf q_{\rm u}^{\rm in}(r))\bigr)$ to $\mathcal P_{\rm C}$
    \ENDIF
\ENDFOR
\FOR{each sampled $r\in[r_{\mathcal A}^{\min},r_{\mathcal A}^{\max}]$}
    \STATE $\widehat{\mathcal Q}_{\rm out}(r)\leftarrow\mathcal Q(r)\setminus\mathcal Q_{\infty}$
    \IF{$\widehat{\mathcal Q}_{\rm out}(r)\neq\emptyset$}
        \STATE $\displaystyle \mathbf q_{\rm u}^{\rm out}(r)\in\arg\min_{\mathbf q_{\rm u}\in\widehat{\mathcal Q}_{\rm out}(r)}C_{\max}^{(N)}(\mathbf q_{\rm u})$
        \STATE Add $\bigl(\mathbf q_{\rm u}^{\rm out}(r),\mathcal R(\mathbf q_{\rm u}^{\rm out}(r)),C_{\max}^{(N)}(\mathbf q_{\rm u}^{\rm out}(r))\bigr)$ to $\mathcal P_{\rm C}$
    \ENDIF
\ENDFOR
\STATE Remove repeated and dominated elements from $\mathcal P_{\rm C}$
\RETURN $\mathcal P_{\rm C}$
\end{algorithmic}
\end{algorithm}

\section{Simulation Results}
\label{sec:simulation_results}
In this section, numerical simulations are conducted to evaluate the rate-CRLB tradeoff under both representative and general VoIs. Unless otherwise specified, the BS is located at the origin and is equipped with an $8\times 8$ UPA deployed on the $yz$-plane with half-wavelength antenna spacing. The common simulation parameters are summarized in Table~\ref{tab:simulation_parameters}. For each regional sensing scenario, all compared placement schemes are evaluated using the same target samples, and their Pareto placements are obtained using the search method described in Section~\ref{sec:general_roi_extension}.

\begin{table}[t]
\centering
\caption{Common Simulation Parameters}
\label{tab:simulation_parameters}
\renewcommand{\arraystretch}{1.1}
\setlength{\tabcolsep}{5pt}
\begin{tabular}{lcc}
\toprule
Parameter & Symbol & Value \\
\midrule
Carrier frequency & $f_c$ & $24$ GHz \\
Wavelength & $\lambda$ & $12.5$ mm \\
Transmit power & $P_{\rm t}$ & $0.5$ W \\
Observation duration & $T_{\rm obs}$ & $1$ ms \\
Noise power spectral density & $N_0$ & $-170$ dBm/Hz \\
Communication bandwidth & $B$ & $100$ MHz \\
RCS coefficient & $\kappa$ & $0.1$ m$^2$ \\
UPA configuration & $N_y\times N_z$ & $8\times8$ \\
Antenna spacing & $d$ & $\lambda/2=6.25$ mm \\
RMS bandwidth & $\beta_{\rm rms}$ & $20$ MHz \\
Normalized RMS bandwidth & $\bar{\beta}$ & $8.33\times10^{-4}$ \\
Rate pre-log factor & $\zeta_2$ & $1$ \\
Minimum BS-UAV distance & $d_{\min}$ & $10$ m \\
\bottomrule
\end{tabular}
\end{table}

\subsection{General VoI and Benchmark Comparison}
\label{subsec:sim_general_roi}

We consider a general off-axis VoI modeled as a rotated ellipsoid and compare the proposed CRLB-based placement with two benchmark schemes. For the SNR-based scheme, the UAV direction at each fixed BS-UAV distance is selected by maximizing the regional worst-case sensing SNR, whereas the resulting placement is evaluated using the regional worst-case CRLB. For the center-based scheme, the UAV is constrained to the ray from the BS toward the ellipsoid center. They are all evaluated over the Pareto-optimal rate interval generated by the CRLB criterion.

\begin{figure*}[t]
    \centering
    \subfloat[Rate-CRLB curves for different schemes.]{
        \includegraphics[width=0.28\textwidth]{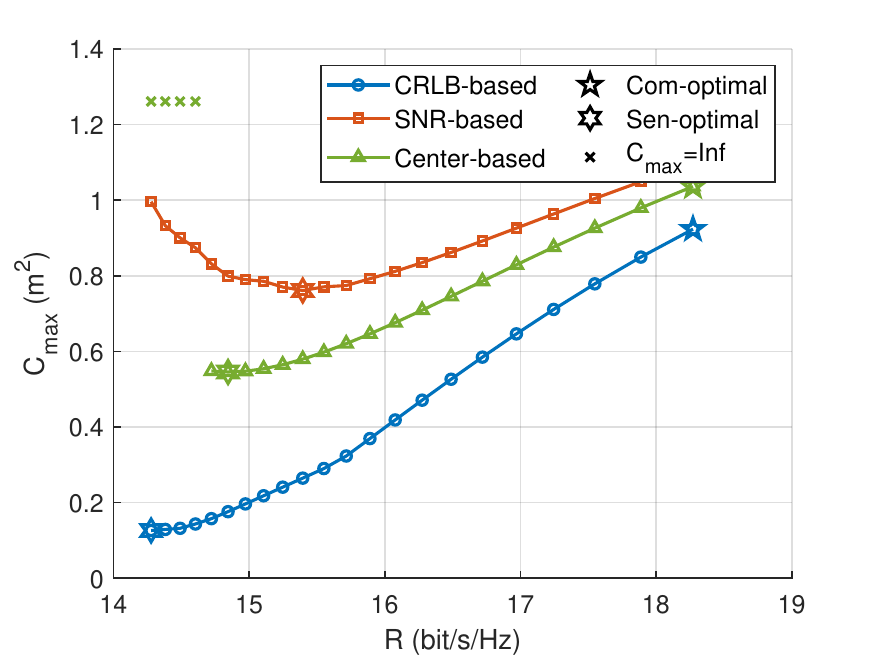}
        \label{fig:general_roi_front}}
    \hfill
    \subfloat[Pareto and benchmark UAV placements.]{
        \includegraphics[width=0.28\textwidth]{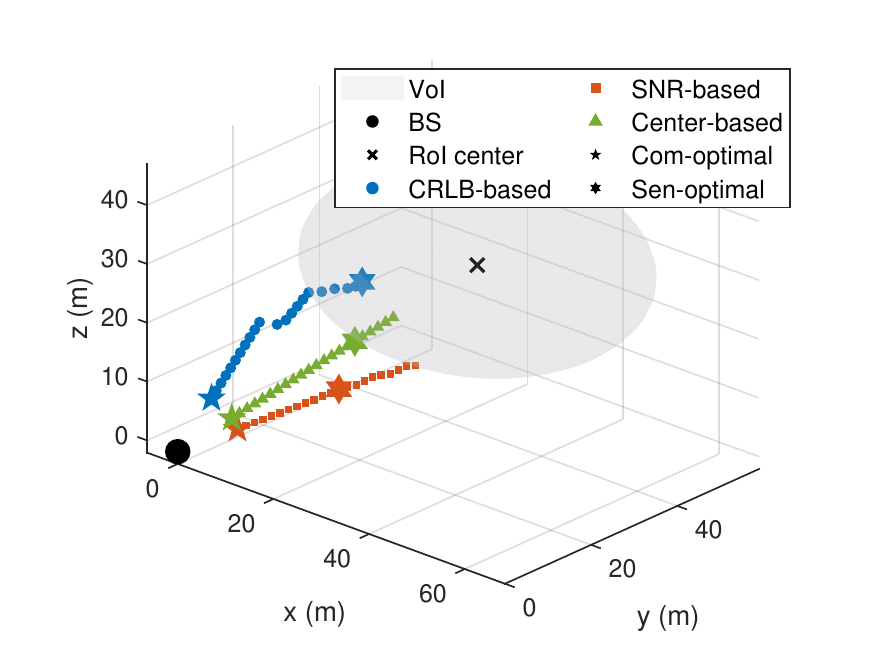}
        \label{fig:general_roi_placements}}
    \hfill
    \subfloat[UAV directions on the $\theta_{\rm u}$-$\phi_{\rm u}$ plane.]{
        \includegraphics[width=0.28\textwidth]{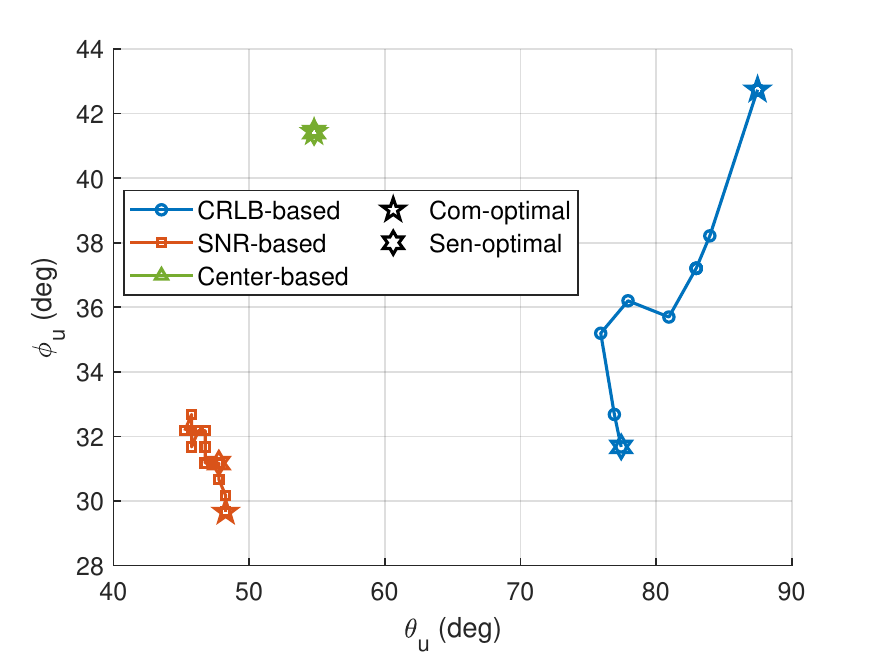}
        \label{fig:general_roi_angles}}
    \caption{Rate-CRLB performance and UAV placements for the general rotated-ellipsoidal VoI.}
    \label{fig:general_roi_comparison}
\end{figure*}

Fig.~\ref{fig:general_roi_front} shows that the proposed CRLB-based placement achieves the best CRLB throughout the entire rate range. Its curve exhibits the expected monotonic tradeoff. In contrast, the SNR-based curve is nonmonotonic and contains a dominated low-rate branch. Its minimum CRLB occurs at an intermediate rate. Therefore, a larger BS-UAV distance alone does not necessarily benefit localization. The result also demonstrates that the regional worst-case sensing SNR is not a reliable surrogate for localization accuracy. Although the SNR-based scheme optimizes the received sensing power at every radius, it does not necessarily provide sufficient delay-angle information for position estimation. The center-based scheme, where the UAV is constrained to the ray from the BS toward the ellipsoid center, provides better finite-CRLB performance than the SNR-based scheme over most of the rate range, and its sensing-optimal point is also attained at an intermediate radius. When the UAV moves farther along the center-directed ray, the placement enters the localization-singularity area inside the VoI, as indicated by the crosses at the top of Fig.~\ref{fig:general_roi_front}, the corresponding regional worst-case CRLB becomes unbounded. 

The spatial placements in Fig.~\ref{fig:general_roi_placements} further reveal the different design mechanisms. The center-based placements remain on the fixed ray connecting the BS and the ellipsoid center. The SNR-based placements also follow an almost straight trajectory, with only limited directional adjustment. By contrast, the CRLB-based placements form a distinctly curved three-dimensional path and deviate substantially from both benchmark directions. They select more elevated and laterally displaced positions to balance the bistatic localization geometry over the entire VoI rather than simply approaching its center or maximizing the received echo strength.

This distinction is more clearly observed in Fig.~\ref{fig:general_roi_angles}. Since the center-based direction is independent of the BS-UAV distance, all of its angular placements coincide at a single point, and its two endpoint markers overlap. The SNR-based directions remain within a narrow neighborhood, indicating that its rate variation is realized predominantly through radial motion. In contrast, the CRLB-based directions span a much wider range. The curved and locally nonmonotonic angular trajectory indicates that the localization-optimal direction varies with the communication requirement, likely due to changes in the active worst-case geometry and in the relative contributions of the delay, angular, and coupling information.

% These results demonstrate the radius-direction decomposition of the placement-induced tradeoff. Changing the UAV radius determines the communication-rate variation, whereas selecting the UAV direction determines the regional localization geometry at that rate. The proposed method exploits this directional degree of freedom to reduce the worst-case CRLB without incurring any additional communication-rate loss. Moreover, unlike the SNR- and center-based designs, it produces a genuine rate-CRLB Pareto boundary over the considered range and avoids regional localization singularities. The following experiments investigate how this boundary-achieving directional behavior changes with the waveform and array parameters.

\subsection{Impact of the RMS Bandwidth}
\label{subsec:sim_bandwidth}

We next investigate the impact of the waveform RMS bandwidth $\beta_{\rm rms}$ on the rate-CRLB boundary and the corresponding optimal UAV placements. The same rotated-ellipsoidal VoI and system configuration as in the preceding experiment are adopted, while $\beta_{\rm rms}$ is set to $1$, $2$, and $20$ MHz. 

\begin{figure*}[t]
    \centering
    \subfloat[Rate-CRLB Pareto boundaries.]{
        \includegraphics[width=0.28\textwidth]{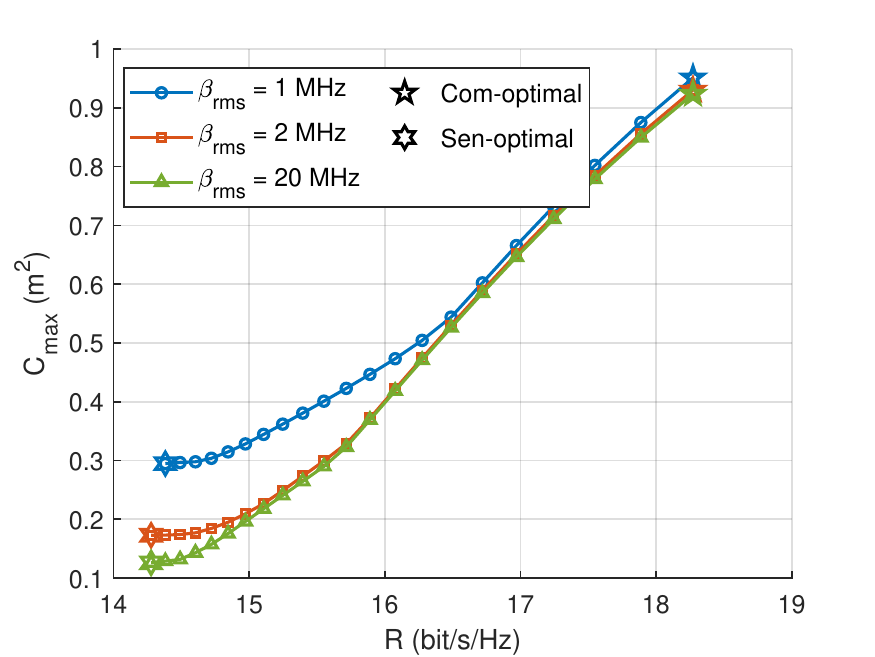}
        \label{fig:bandwidth_front}}
    \subfloat[Optimal UAV placements.]{
        \includegraphics[width=0.28\textwidth]{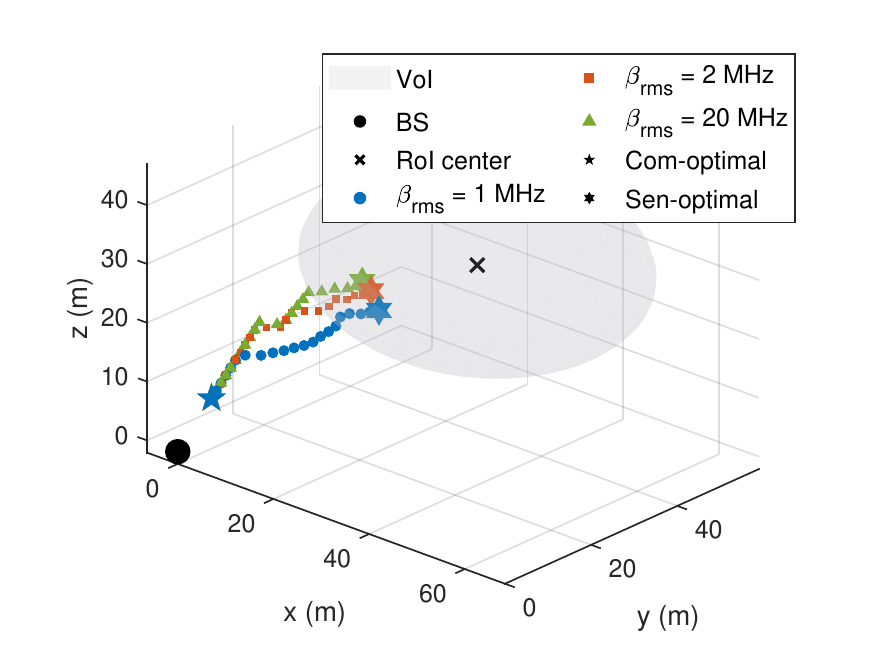}
        \label{fig:bandwidth_placement}}    
    \subfloat[Optimal UAV directions.]{
        \includegraphics[width=0.28\textwidth]{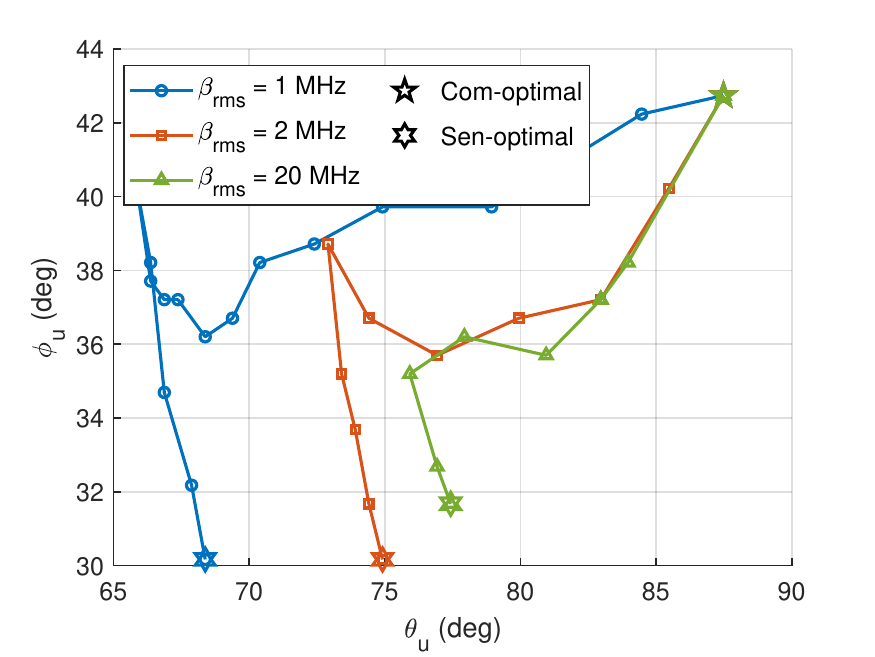}
        \label{fig:bandwidth_angles}}
    \caption{Impact of the RMS bandwidth $\beta_{\rm rms}$ on the rate-CRLB tradeoff and the optimal UAV directions for the rotated-ellipsoidal VoI. Each point in (c) corresponds to one Pareto rate sample, and the points are connected according to their order along the Pareto boundary.}
    \label{fig:bandwidth_results}
\end{figure*}

As shown in Fig.~\ref{fig:bandwidth_front}, increasing $\beta_{\rm rms}$ improves the regional localization performance over the entire Pareto boundary, while leaving the maximum achievable communication rate unchanged. This is because the communication rate is independent of the sensing waveform bandwidth, whereas the delay-information coefficient satisfies $\xi_\tau\propto\beta_{\rm rms}^{2}$, such that the delay-related CRLB term decreases approximately with $\beta_{\rm rms}^{-2}$. The improvement is particularly pronounced near the sensing-oriented end of the boundary. Nevertheless, the gain exhibits a clear diminishing-return behavior. The boundaries for $\beta_{\rm rms}=2$ MHz and $20$ MHz become nearly indistinguishable over a large portion of the rate range, whereas the boundary for $\beta_{\rm rms}=1$ MHz remains visibly higher. Once the delay contribution has been sufficiently reduced, the regional CRLB is mainly limited by the angular and delay-angle coupling information, which cannot be further improved by increasing the waveform bandwidth alone. 

The angular placements in Fig.~\ref{fig:bandwidth_angles} further show that the bandwidth affects not only the CRLB value but also the optimal UAV direction. At the sensing-optimal endpoint, increasing $\beta_{\rm rms}$ shifts the elevation angle from approximately $68^\circ$ for $1$ MHz to $75^\circ$ for $2$ MHz and about $78^\circ$ for $20$ MHz, while the corresponding azimuth angles remain within a relatively narrow range of approximately $30^\circ$-$32^\circ$. As the communication rate increases, the three angular trajectories gradually approach one another and eventually converge to nearly the same communication-optimal direction. In this high-rate regime, the minimum admissible BS-UAV distance and the associated communication requirement largely determine the placement, leaving less freedom for directional adjustment. By contrast, near the sensing-oriented endpoint, the UAV has greater spatial freedom, and changing the relative weight of the delay information leads to visibly different optimal directions.

% These results demonstrate that the waveform bandwidth can reshape both the achievable rate-CRLB boundary and the associated placement directions, but its influence is strongly regime dependent. A small RMS bandwidth makes delay estimation a dominant localization bottleneck, whereas sufficiently large bandwidth shifts the limiting factor toward the array-provided angular information. Therefore, increasing $\beta_{\rm rms}$ beyond this transition mainly yields diminishing CRLB gains and only minor changes in the Pareto placement.

\subsection{Impact of the UPA Configuration}
\label{subsec:sim_upa}

We finally investigate how the UPA configuration affects the rate-CRLB tradeoff and the corresponding optimal UAV placement. Three array configurations, namely $8\times8$, $4\times16$, and $16\times4$, are considered. All configurations contain the same total number of antennas, i.e., $N_{\rm R}=64$, and use the same half-wavelength inter-element spacing.

\begin{figure*}[t]
    \centering
    \subfloat[Rate-CRLB Pareto boundaries.]{
        \includegraphics[width=0.28\textwidth]{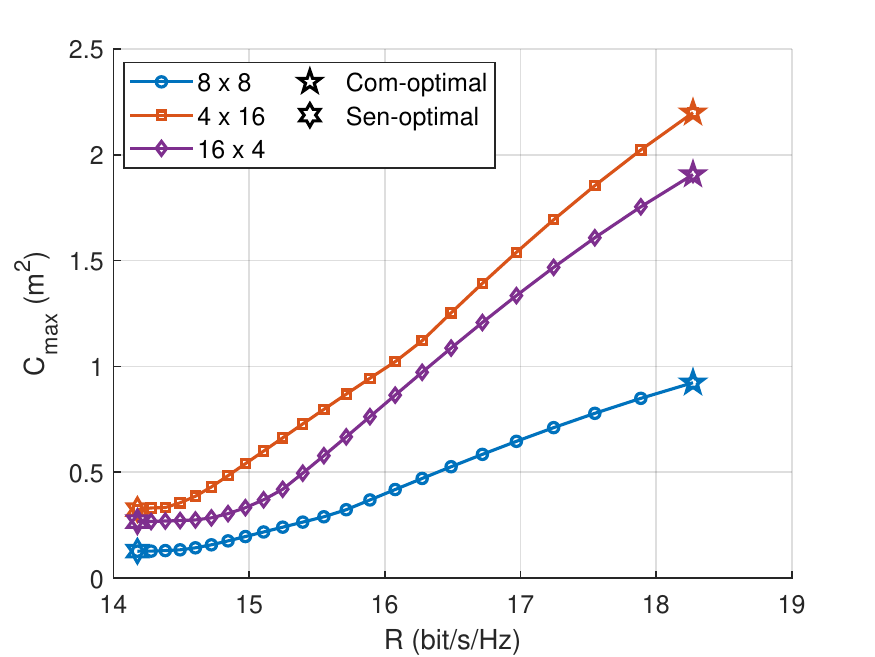}
        \label{fig:upa_front}}
    \subfloat[Optimal UAV placements.]{
        \includegraphics[width=0.28\textwidth]{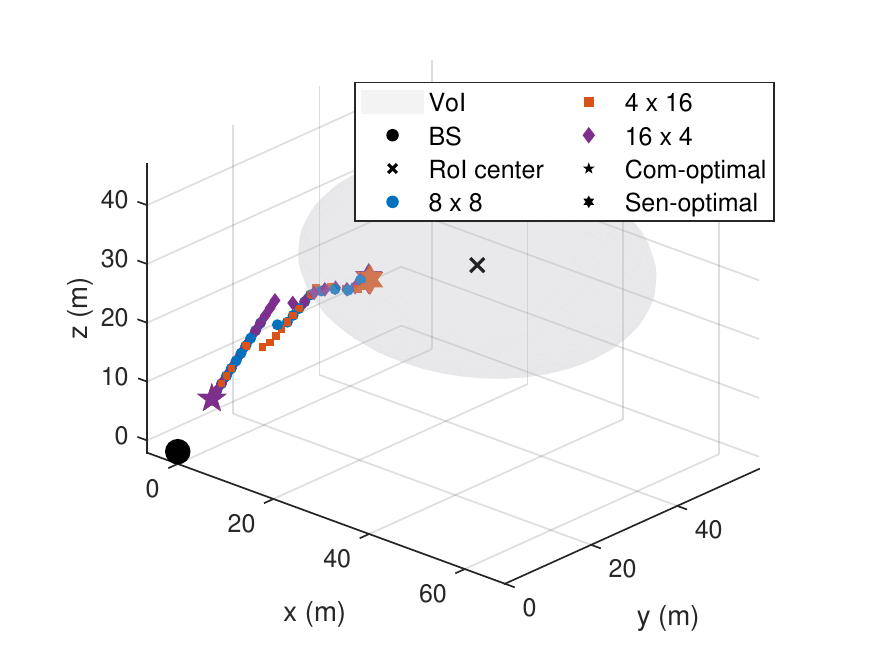}
        \label{fig:upa_placement}} 
    \subfloat[Optimal UAV directions.]{
        \includegraphics[width=0.28\textwidth]{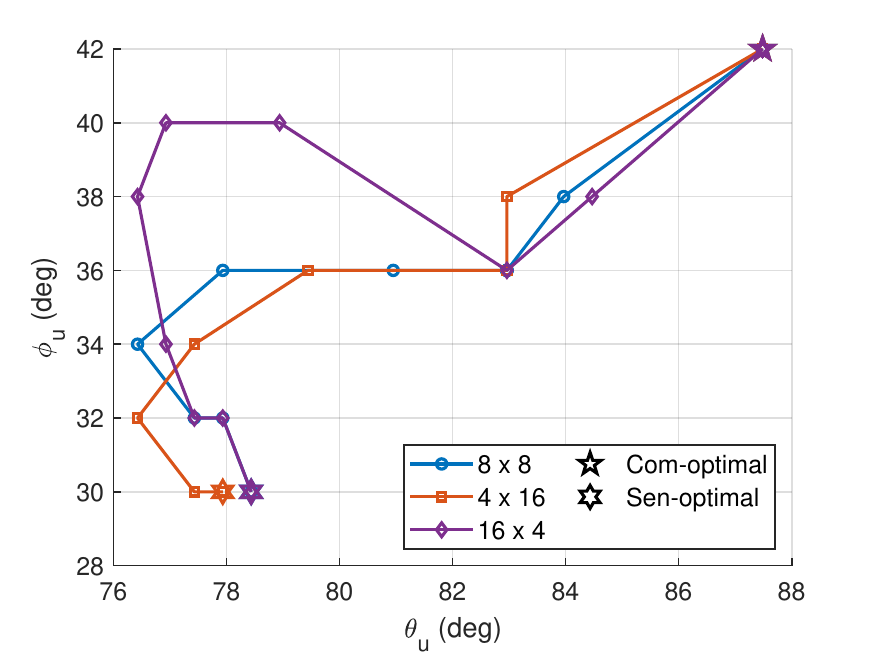}
        \label{fig:upa_angles}}
    \caption{Impact of the UPA configuration on the rate-CRLB tradeoff and the optimal UAV placement directions for the rotated-ellipsoidal VoI.}
    \label{fig:upa_results}
\end{figure*}

Fig.~\ref{fig:upa_front} shows that the square $8\times8$ UPA achieves the smallest regional worst-case CRLB throughout the considered rate range. A more substantial gap appears at the communication-optimal endpoint than that at the sensing-optimal endpoint. Although each rectangular UPA provides a larger aperture along one axis, it simultaneously reduces the number of elements and the angular information along the orthogonal axis. The balanced $8\times8$ configuration provides more uniform angular resolution over the two array dimensions and consequently yields a substantially lower worst-case CRLB. The two rectangular configurations also exhibit different performance despite having identical total antenna numbers and interchanged dimensions. In particular, the $16\times4$ UPA consistently outperforms the $4\times16$ UPA. This result shows that the two array axes are not interchangeable for the considered off-axis rotated VoI, where the target-region orientation and the corresponding bistatic geometry place a greater demand on the information provided by the $y$-axis aperture. This ordering is geometry dependent and may change when the VoI orientation or shape is varied.

Fig.~\ref{fig:upa_angles} further demonstrates that the UPA configuration changes not only the attainable CRLB but also the optimal UAV placement. In the low- and moderate-rate regimes, the three angular trajectories differ noticeably. The $8\times8$ and $4\times16$ arrays select relatively compact directional paths, whereas the $16\times4$ array exhibits a wider variation in $\phi_{\rm u}$. These changes suggest that modifying the relative angular-information weights can alter the active worst-case target geometry and cause the optimal UAV direction to switch between different directional branches as the BS-UAV distance varies. Near the communication-optimal endpoint, however, all three trajectories converge to nearly the same direction. Thus, under the stringent high-rate requirement, the placement direction becomes comparatively insensitive to the array aspect ratio in the considered scenario. Nevertheless, the corresponding CRLB values remain markedly different. 

% Overall, these results show that the UPA aspect ratio is a genuine placement-shaping parameter rather than a simple CRLB scaling factor. A balanced array can be preferable for robust regional localization because the worst-case objective penalizes weak angular resolution along either array dimension. For an anisotropic RoI, the preferred rectangular orientation is further determined by the relative alignment between the array axes and the target-region geometry.

\section{Conclusion}\label{sec:conclusion}

This paper characterized the Pareto-optimal rate-CRLB tradeoff via 3D placement of a cooperative UAV for bistatic localization over a VoI. For a singleton VoI, the complete Pareto set was proved to be the BS-target line segment, and the corresponding boundary was derived in closed form. For the radially truncated spherical sector VoI, the Pareto set was confined to the sector axis and the inner radial region under the stated conditions. For a general VoI, inner-region convexity and a Pareto-radius bound led to a two-stage computation procedure. Numerical results showed that sensing-SNR-based and VoI-center-based placements can be dominated or yield an infinite worst-case CRLB. They also showed that the RMS bandwidth and UPA configuration can change the optimal UAV direction, rather than merely rescaling the localization error.

% \section*{Acknowledgments}

%{\appendices
%\section*{Proof of the First Zonklar Equation}
%Appendix one text goes here.
% You can choose not to have a title for an appendix if you want by leaving the argument blank
%\section*{Proof of the Second Zonklar Equation}
%Appendix two text goes here.}

\appendices

\section{Proof of Theorem~\ref{the:singleton_pareto}}
\label{app:singleton_pareto}

For the singleton VoI $\mathcal A=\{\bar{\mathbf q}_{\rm t}\}$, the target-related quantities $\bar r_{\rm t}$, $\bar\ell_x$, $\bar\ell_y$, and $\bar\ell_z$ are fixed. Define
\begin{equation}
\begin{aligned}
    \Gamma_{\rm t}
    \triangleq{}&
    \frac{c^2\bar r_{\rm t}^2}{4\xi_\tau}
    +
    \frac{\bar r_{\rm t}^4}{\bar\ell_x^2}
    \left(
    \frac{1-\bar\ell_z^2}{\xi_y}
    +
    \frac{1-\bar\ell_y^2}{\xi_z}
    \right)
    >0.
\end{aligned}
\label{eq:gamma_t_appendix}
\end{equation}
For any nonsingular UAV position $\mathbf q_{\rm u}$, the bistatic factor in \eqref{eq:crlb_geometric_form} satisfies $0<h\leq 2$.
% \begin{equation}
%     0<h=1+\cos\psi\leq 2.
% \label{eq:h_upper_bound_singleton}
% \end{equation}
Since the coupling term in \eqref{eq:crlb_geometric_form} is nonnegative, it follows that
\begin{equation}
\begin{aligned}
    \mathcal C(\bar{\mathbf q}_{\rm t},\mathbf q_{\rm u})
    \geq
    \frac{c^2r_{\rm ut}^2\bar r_{\rm t}^2}{4\xi_\tau}
    +
    \frac{r_{\rm ut}^2\bar r_{\rm t}^4}{\bar\ell_x^2}
    \left(
    \frac{1-\bar\ell_z^2}{\xi_y}
    +
    \frac{1-\bar\ell_y^2}{\xi_z}
    \right) 
    =\Gamma_{\rm t}r_{\rm ut}^2.
\end{aligned}
\label{eq:singleton_crlb_lower_bound}
\end{equation}
Equality in \eqref{eq:singleton_crlb_lower_bound} is attained when the UAV lies on the BS-target line segment, since in this case $h=2$ and $S_{xy}=S_{xz}=0$.

We first show that any UAV position outside $\mathcal L_{\rm t}$ is dominated by a location on $\mathcal L_{\rm t}$. For an arbitrary $\mathbf q_{\rm u}\in\mathcal Q$, define
\begin{equation}
    d\triangleq\min\{r_{\rm ut},\bar r_{\rm t}\},
    \qquad
    \widetilde{\mathbf q}_{\rm u}
    \triangleq
    \left(1-\frac{d}{\bar r_{\rm t}}\right)
    \bar{\mathbf q}_{\rm t}.
\label{eq:projected_singleton_point}
\end{equation}
By $\mathcal L_{\rm t}\subseteq\mathcal Q$, the location $\widetilde{\mathbf q}_{\rm u}$ is feasible. Moreover,
\begin{equation}
    \widetilde r_{\rm ut}=d,
    \qquad
    \widetilde r_{\rm u}=\bar r_{\rm t}-d,
    \qquad
    \mathcal C(\bar{\mathbf q}_{\rm t},\widetilde{\mathbf q}_{\rm u})
    =
    \Gamma_{\rm t}d^2.
\label{eq:projected_singleton_performance}
\end{equation}
If $r_{\rm ut}<\bar r_{\rm t}$, the reverse triangle inequality gives $r_{\rm u}\geq\bar r_{\rm t}-r_{\rm ut}=\widetilde r_{\rm u}$.
% \begin{equation}
%     r_{\rm u}
%     \geq
%     \bar r_{\rm t}-r_{\rm ut}
%     =
%     \widetilde r_{\rm u}.
% \label{eq:singleton_rate_domination_case1}
% \end{equation}
If $r_{\rm ut}\geq\bar r_{\rm t}$, then $\widetilde r_{\rm u}=0\leq r_{\rm u}$. Hence, in both cases, $\widetilde r_{\rm u}\leq r_{\rm u}$.
% \begin{equation}
%     \widetilde r_{\rm u}\leq r_{\rm u}.
% \label{eq:singleton_rate_domination}
% \end{equation}
Furthermore, since $d\leq r_{\rm ut}$, \eqref{eq:singleton_crlb_lower_bound} yields
\begin{equation}
\begin{aligned}
    \mathcal C(\bar{\mathbf q}_{\rm t},\widetilde{\mathbf q}_{\rm u})
    =
    \Gamma_{\rm t}d^2
    \leq
    \Gamma_{\rm t}r_{\rm ut}^2
    \leq
    \mathcal C(\bar{\mathbf q}_{\rm t},\mathbf q_{\rm u}).
\end{aligned}
\label{eq:singleton_crlb_domination}
\end{equation}
For any $\mathbf q_{\rm u}\notin\mathcal L_{\rm t}$, at least one of $\widetilde r_{\rm u}\leq r_{\rm u}$ and \eqref{eq:singleton_crlb_domination} is strict. Therefore, every UAV position outside $\mathcal L_{\rm t}$ is dominated by a feasible UAV position on $\mathcal L_{\rm t}$.

It remains to show that no location on $\mathcal L_{\rm t}$ is dominated. Parameterize a location on this segment by its UAV-target distance $d$, i.e.,
\begin{equation}
    \mathbf q_{\rm u}(d)
    =
    \left(1-\frac{d}{\bar r_{\rm t}}\right)
    \bar{\mathbf q}_{\rm t},
    \qquad
    0\leq d\leq\bar r_{\rm t}.
\label{eq:line_parameterization_singleton}
\end{equation}
Its sensing and communication objectives satisfy
\begin{equation}
    \mathcal C(\bar{\mathbf q}_{\rm t},\mathbf q_{\rm u}(d))
    =
    \Gamma_{\rm t}d^2,
    \qquad
    r_{\rm u}(d)
    =
    \bar r_{\rm t}-d.
\label{eq:line_objectives_singleton}
\end{equation}
Thus, increasing $d$ strictly increases the CRLB but strictly decreases the BS-UAV distance. Hence, no two distinct points on $\mathcal L_{\rm t}$ dominate each other.
Consequently, the Pareto-optimal UAV placement set is
\begin{equation}
    \mathcal L_{\rm Pareto}^{\rm sing}
    =
    \left\{
    \mathbf q_{\rm u}^{\star}
    =
    \alpha\bar{\mathbf q}_{\rm t}
    \mid
    0\leq\alpha\leq 1
    \right\},
\end{equation}
where the endpoint $\alpha=1$ is understood in the limiting sense under the ideal point-target propagation model. Equivalently,
\begin{equation}
    r_{\rm u}^{\star}
    =
    \bar r_{\rm t}-r_{\rm ut}^{\star},
    \qquad
    0\leq r_{\rm ut}^{\star}\leq\bar r_{\rm t}.
\end{equation}
This completes the proof.

\section{Proof of Proposition~\ref{prop:crlb_inner_convexity}}
\label{app:crlb_inner_convexity}

For a fixed target, define $\mathbf e_{\rm t}\triangleq\mathbf q_{\rm t}/r_{\rm t}$, $\mathbf p\triangleq\mathbf q_{\rm t}-\mathbf q_{\rm u}$, $d\triangleq\|\mathbf p\|$ and $u\triangleq\mathbf e_{\rm t}^{\rm T}\mathbf p$.
% \begin{equation}
%     \mathbf e_{\rm t}
%     \triangleq
%     \mathbf q_{\rm t}/r_{\rm t},
%     \quad
%     \mathbf p
%     \triangleq
%     \mathbf q_{\rm t}-\mathbf q_{\rm u},
%     \quad
%     d\triangleq\|\mathbf p\|,
%     \quad
%     u\triangleq\mathbf e_{\rm t}^{\rm T}\mathbf p.
%     \label{eq:inner_convexity_variables}
% \end{equation}
If $\|\mathbf q_{\rm u}\|<r_{\rm t}$, then
\begin{equation}
    u
    =
    r_{\rm t}
    -
    \mathbf e_{\rm t}^{\rm T}\mathbf q_{\rm u}
    \geq
    r_{\rm t}-\|\mathbf q_{\rm u}\|
    >0,
    \quad
    h=(d+u)/d.
    \label{eq:inner_convexity_geometry}
\end{equation}
Define the unit vectors
\begin{equation}
    \mathbf m_y
    \triangleq
    \frac{\mathbf e_z\times\mathbf e_{\rm t}}
    {\sqrt{1-\ell_z^2}},
    \qquad
    \mathbf m_z
    \triangleq
    \frac{\mathbf e_y\times\mathbf e_{\rm t}}
    {\sqrt{1-\ell_y^2}},
    \label{eq:inner_convexity_vectors}
\end{equation}
where a term is omitted when its coefficient is zero. Since $\mathbf q_{\rm t}\times\mathbf q_{\rm u}=-r_{\rm t}\mathbf e_{\rm t}\times\mathbf p$, the projected areas satisfy
\begin{equation}
    4S_{xy}^{2}
    =
    r_{\rm t}^{2}(1-\ell_z^{2})
    (\mathbf m_y^{\rm T}\mathbf p)^2,
    \qquad
    4S_{xz}^{2}
    =
    r_{\rm t}^{2}(1-\ell_y^{2})
    (\mathbf m_z^{\rm T}\mathbf p)^2.
    \label{eq:inner_convexity_projected_areas}
\end{equation}
Substituting \eqref{eq:inner_convexity_geometry} and \eqref{eq:inner_convexity_projected_areas} into \eqref{eq:crlb_geometric_form} gives
\begin{equation}
    \mathcal C(\mathbf q_{\rm t},\mathbf q_{\rm u})
    =
    a_{\tau}\phi_{\tau}(\mathbf p)
    +
    a_y\phi_{\mathbf m_y}(\mathbf p)
    +
    a_z\phi_{\mathbf m_z}(\mathbf p),
    \label{eq:inner_convexity_decomposition}
\end{equation}
where
\begin{equation}
\begin{aligned}
    \phi_{\tau}(\mathbf p)
    \triangleq
    \frac{d^{4}}{(d+u)^2}, \quad
    \phi_{\mathbf m}(\mathbf p)
    \triangleq
    d^2+
    \frac{(\mathbf m^{\rm T}\mathbf p)^2d^2}
    {(d+u)^2},\\
    a_{\tau}
    \triangleq
    \frac{c^2r_{\rm t}^{2}}{\xi_{\tau}},\quad
    a_y
    \triangleq
    \frac{r_{\rm t}^{4}(1-\ell_z^2)}
    {\ell_x^2\xi_y},\quad
    a_z
    \triangleq
    \frac{r_{\rm t}^{4}(1-\ell_y^2)}
    {\ell_x^2\xi_z}.
\end{aligned}
\label{eq:inner_convexity_components}
\end{equation}

% \begin{equation}
% \begin{aligned}
%     \phi_{\tau}(\mathbf p)
%     &\triangleq
%     \frac{d^{4}}{(d+u)^2},\\
%     \phi_{\mathbf m}(\mathbf p)
%     &\triangleq
%     d^2+
%     \frac{(\mathbf m^{\rm T}\mathbf p)^2d^2}
%     {(d+u)^2},
% \end{aligned}
% \qquad
% \begin{aligned}
%     a_{\tau}
%     &\triangleq
%     \frac{c^2r_{\rm t}^{2}}{\xi_{\tau}},\\
%     a_y
%     &\triangleq
%     \frac{r_{\rm t}^{4}(1-\ell_z^2)}
%     {\ell_x^2\xi_y},\\
%     a_z
%     &\triangleq
%     \frac{r_{\rm t}^{4}(1-\ell_y^2)}
%     {\ell_x^2\xi_z}.
% \end{aligned}
% \label{eq:inner_convexity_components}
% \end{equation}
% For any unit vectors $\mathbf e_{\rm t}$ and $\mathbf m$ satisfying $\mathbf m^{\rm T}\mathbf e_{\rm t}=0$, direct differentiation over the half-space $u=\mathbf e_{\rm t}^{\rm T}\mathbf p>0$ gives
% \begin{equation}
%     \nabla_{\mathbf p}^{2}\phi_{\tau}(\mathbf p)
%     \succeq\mathbf 0,
%     \qquad
%     \nabla_{\mathbf p}^{2}\phi_{\mathbf m}(\mathbf p)
%     \succeq\mathbf 0.
%     \label{eq:inner_component_hessians}
% \end{equation}

It remains to establish the convexity of $\phi_\tau$ and $\phi_{\mathbf m}$. Define $\eta\triangleq u/d\in(0,1]$. For $\phi_\tau$, choose a unit vector $\mathbf b$ such that $\mathbf e_{\rm t}=\eta\mathbf p/d+\sqrt{1-\eta^2}\mathbf b$, and decompose an arbitrary direction as $\mathbf v=a\mathbf p/d+b\mathbf b+\mathbf v_{\perp}$, where $\mathbf v_{\perp}$ is orthogonal to both $\mathbf p$ and $\mathbf b$. Direct differentiation gives
\begin{equation}
\begin{aligned}
    \mathbf v^{\rm T}
    \nabla_{\mathbf p}^2\phi_\tau(\mathbf p)
    \mathbf v
    =
    \frac{2}{(1+\eta)^4}
    \bigg[
    \begin{bmatrix}a&b\end{bmatrix}
    \mathbf M_\tau(\eta)
    \begin{bmatrix}a\\b\end{bmatrix}\\
    +
    (1+\eta)(2\eta+1)\|\mathbf v_{\perp}\|^2
    \bigg],
\end{aligned}
\label{eq:app_phi_tau_second_derivative}
\end{equation}
where
\begin{equation}
    \mathbf M_\tau(\eta)
    \triangleq
    \begin{bmatrix}
    (1+\eta)^2 &
    -(1+\eta)\sqrt{1-\eta^2}\\
    -(1+\eta)\sqrt{1-\eta^2} &
    (1+\eta)(4-\eta)
    \end{bmatrix}.
    \label{eq:app_phi_tau_matrix}
\end{equation}
Since the diagonal entries of $\mathbf M_\tau(\eta)$ are positive and $\det\mathbf M_\tau(\eta)=3(1+\eta)^3>0$, we have $\mathbf M_\tau(\eta)\succ\mathbf 0$ and hence $\mathbf v^{\rm T}\nabla_{\mathbf p}^2\phi_\tau(\mathbf p)\mathbf v \geq 0$.
For $\phi_{\mathbf m}$, complete $\{\mathbf e_{\rm t},\mathbf m\}$ to an orthonormal basis $\{\mathbf e_{\rm t},\mathbf m,\mathbf n\}$ and write
\begin{equation}
    \mathbf p/d
    =
    \eta\mathbf e_{\rm t}
    +
    \nu\mathbf m
    +
    \omega\mathbf n,
    \qquad
    \eta^2+\nu^2+\omega^2=1.
    \label{eq:app_phi_m_coordinates}
\end{equation}
Since $\phi_{\mathbf m}$ is homogeneous of degree two, its Hessian depends only on $\eta$, $\nu$, and $\omega$. In the above basis, its leading principal minors are
\begin{equation}
\begin{aligned}
    \Delta_1
    &=
    \frac{2\left[(1+\eta)^2+3(1-\eta)\nu^2\right]}
    {(1+\eta)^2},\\
    \Delta_2
    &=
    \frac{4Q(\eta,s)}{(1+\eta)^6},\qquad
    \Delta_3
    =
    \frac{8P(\eta,\nu^2)}{(1+\eta)^7},
\end{aligned}
\label{eq:app_phi_m_principal_minors}
\end{equation}
where $s\triangleq\nu^2/(1-\eta^2)\in[0,1]$ for $\eta<1$, with the case $\eta=1$ obtained by continuity, and
\begin{equation}
\begin{aligned}
    Q(\eta,s)
    &=
    B_0(1-s)^3
    +3B_1s(1-s)^2
    +3B_2s^2(1-s)
    +B_3s^3,\\
    B_0
    &=
    (1+\eta)^4(\eta^2+2\eta+2),\\
    B_1
    &=
    \frac{(1+\eta)^3}{3}
    \left(
    3\eta^4+2\eta^3-6\eta^2+23\eta+8
    \right),\\
    B_2
    &=
    \frac{(1+\eta)^3}{3}
    \left(
    9\eta^3-15\eta^2+22\eta+14
    \right),\\
    B_3
    &=
    (1+\eta)^3(3\eta+7),
\end{aligned}
\label{eq:app_phi_m_Q}
\end{equation}
and
\begin{equation}
\begin{aligned}
    P(\eta,z)
    ={}&
    \eta^7+7\eta^6+22\eta^5+40\eta^4+45\eta^3+31\eta^2+12\eta+2\\
    &+
    z\left(
    3\eta^5+15\eta^4+32\eta^3+33\eta^2+15\eta+2
    \right)\\
    &+
    z^2\left(
    3\eta^3+9\eta^2+10\eta+2
    \right)
    +
    z^3(1+\eta).
\end{aligned}
\label{eq:app_phi_m_P}
\end{equation}
For $0<\eta\leq1$, all $B_i$ are positive and $P(\eta,z)>0$ for $z\geq0$. Hence, $\Delta_1$, $\Delta_2$, and $\Delta_3$ are positive, and Sylvester's criterion gives $\nabla_{\mathbf p}^2\phi_{\mathbf m}(\mathbf p)\succ\mathbf 0$. Therefore, both $\phi_\tau$ and $\phi_{\mathbf m}$ are convex over $u>0$. Since $a_\tau$, $a_y$, and $a_z$ are nonnegative, \eqref{eq:inner_convexity_decomposition} is convex in $\mathbf p$ and hence in $\mathbf q_{\rm u}$ because $\mathbf p=\mathbf q_{\rm t}-\mathbf q_{\rm u}$ is affine. Finally, the pointwise maximum of convex functions is convex, which proves the convexity of $C_{\max}(\mathbf q_{\rm u})$ over $\mathcal{Q}_1$.

\section{Proof of Lemma~\ref{lem:Q1_axis}}
\label{app:Q1_axis}
Consider an arbitrary $\mathbf q_{\rm u}=[x_{\rm u},y_{\rm u},z_{\rm u}]^{\rm T}\in\mathcal Q_1$. The VoI is invariant under reflections about the $xy$- and $xz$-planes. Moreover, the CRLB is invariant when the target and UAV locations are simultaneously reflected about either plane, even when $N_y\neq N_z$. Hence, the four UAV locations $\mathbf q_{\rm u}^{(i)}=[x_{\rm u},\pm y_{\rm u},\pm z_{\rm u}]^{\rm T}$ have the same regional worst-case CRLB. Their average is $x_{\rm u}\mathbf e_x$. By Proposition~\ref{prop:crlb_inner_convexity},
\begin{equation}
    C_{\max}(x_{\rm u}\mathbf e_x)
    \leq
    \frac{1}{4}\sum_{i=1}^{4}
    C_{\max}(\mathbf q_{\rm u}^{(i)})
    =
    C_{\max}(\mathbf q_{\rm u}).
\end{equation}
If $(y_{\rm u},z_{\rm u})\neq(0,0)$, then $\|x_{\rm u}\mathbf e_x\|<\|\mathbf q_{\rm u}\|$, so $x_{\rm u}\mathbf e_x$ also achieves a strictly higher communication rate. Therefore, every off-axis location in $\mathcal Q_1$ is dominated, and any Pareto-optimal location in $\mathcal Q_1$ must lie on the $x$-axis.

\section{Proof of Lemma~\ref{lem:Q2_dominated}}
\label{app:Q2_dominated}

Consider an arbitrary UAV location $\mathbf q_{\rm u}^{(2)}\in\mathcal Q_2$, and denote its distance and off-axis angle by $r_{\rm u}^{(2)}$ and $\vartheta_{\rm u}^{(2)}$. If $\vartheta_{\rm u}^{(2)} \le \vartheta_{\max}$, the $C_{\max}(\mathbf{q}_{\rm u})$ becomes unbounded and it is obviously dominated by some location in $\mathcal{Q}_1$. Next we consider when $\vartheta_{\rm u}^{(2)} > \vartheta_{\max}$.
% \begin{equation}
%     r_{\rm u}^{(2)}\geq r_{\min},
%     \qquad
%     \vartheta_{\rm u}^{(2)}>\vartheta_{\max}.
%     \label{eq:Q2_arbitrary_uav}
% \end{equation}
By the rotational symmetry of the radially truncated spherical sector VoI and the square-UPA CRLB, we may represent $\mathbf q_{\rm u}^{(2)}$ in an arbitrary axial plane without loss of generality.

First, we construct a feasible on-axis UAV location in $\mathcal Q_1$. From
$\vartheta_{\rm u}^{(2)}>\vartheta_{\max}$ and
\eqref{eq:Q2_domination_condition}, we have
\begin{equation}
\begin{aligned}
    r_{\max}\cos\left(
    \vartheta_{\rm u}^{(2)}+\vartheta_{\max}
    \right)
    <
    r_{\max}\cos(2\vartheta_{\max}) 
    \leq r_{\min}.
\end{aligned}
\label{eq:radial_comparator_existence}
\end{equation}
Therefore, there exists $r_0$ satisfying
\begin{equation}
    \max\left\{
    0,\,
    r_{\max}\cos\left(
    \vartheta_{\rm u}^{(2)}+\vartheta_{\max}
    \right)
    \right\}
    <r_0<r_{\min}.
    \label{eq:r0_selection}
\end{equation}
Define the on-axis UAV location as $\mathbf q_{\rm u}^{(0)}\triangleq r_0\mathbf e_x$.
% \begin{equation}
%     \mathbf q_{\rm u}^{(0)}
%     \triangleq
%     r_0\mathbf e_x.
%     \label{eq:axis_dominating_uav}
% \end{equation}
Clearly, $\mathbf q_{\rm u}^{(0)}\in\mathcal Q_1$ and
$r_0<r_{\min}\leq r_{\rm u}^{(2)}$. Since the communication rate is decreasing with the BS-UAV distance, we have
\begin{equation}
    R(\mathbf q_{\rm u}^{(0)})
    >
    R(\mathbf q_{\rm u}^{(2)}).
    \label{eq:Q2_rate_domination}
\end{equation}

It remains to compare their regional worst-case CRLBs. We first show that, for the on-axis location $\mathbf q_{\rm u}^{(0)}$, a worst-case target can be chosen on the boundary of the cone. Consider a target with distance $r_{\rm t}$ and off-axis angle $\vartheta_{\rm t}$. Owing to the axial symmetry of $\mathbf q_{\rm u}^{(0)}$, its CRLB is independent of the target circumferential angle. Let $\delta=\vartheta_{\rm t}$ denote the angular separation between the BS-UAV and BS-target directions, and define
\begin{equation}
    d
    \triangleq
    \left(
    r_0^2+r_{\rm t}^2-2r_0r_{\rm t}\cos\delta
    \right)^{1/2},
    \qquad
    h
    \triangleq
    1+
    \frac{r_{\rm t}-r_0\cos\delta}{d}.
    \label{eq:axis_target_geometry}
\end{equation}
Since $r_0<r_{\min}\leq r_{\rm t}$, the CRLB in
\eqref{eq:crlb_conical_geometric_form} can be written as
\begin{equation}
\begin{aligned}
    \mathcal C_{\delta}
    ={}&
    \frac{c^2r_{\rm t}^2}{\xi_\tau}
    \left(\frac{d}{h}\right)^2
    +
    \frac{r_{\rm t}^4}{\xi_y}
    d^2 
    +
    \frac{2r_{\rm t}^4}{\xi_y\cos^2\delta}
    \frac{d^2}{h},
\end{aligned}
\label{eq:axis_crlb_delta}
\end{equation}
where the identity $\frac{r_0^2\sin^2\delta}{h^2}=d^2\left(\frac{2}{h}-1\right)$
% \begin{equation}
%     \frac{r_0^2\sin^2\delta}{h^2}
%     =
%     d^2\left(\frac{2}{h}-1\right)
%     \label{eq:axis_coupling_identity}
% \end{equation}
has been used. Furthermore,
\begin{equation}
    \frac{\partial d}{\partial\delta}
    =
    \frac{r_0r_{\rm t}\sin\delta}{d}
    >0,
\end{equation}
and
\begin{equation}
\begin{aligned}
    \frac{\partial}{\partial\delta}
    \left(\frac{d}{h}\right)
    =
    \frac{
    r_0\sin\delta
    \left[
    r_{\rm t}(d+r_{\rm t})-r_0^2
    \right]
    }
    {
    \left(
    d+r_{\rm t}-r_0\cos\delta
    \right)^2
    }
    >0,
\end{aligned}
\label{eq:axis_d_over_h_monotonic}
\end{equation}
for $0<\delta<\pi$, where the positivity follows from
$r_0<r_{\rm t}$. Hence, every term in
\eqref{eq:axis_crlb_delta} is increasing with respect to
$\delta=\vartheta_{\rm t}$. Thus, a worst-case target for
$\mathbf q_{\rm u}^{(0)}$ can be chosen with
\begin{equation}
    \vartheta_{\rm t}^{\star}
    =
    \vartheta_{\max}.
    \label{eq:axis_worst_target_angle}
\end{equation}
Let $r_{\rm t}^{\star}\in[r_{\min},r_{\max}]$ denote one corresponding worst-case radial coordinate. Since
$\mathbf q_{\rm u}^{(0)}$ lies on the cone axis, the circumferential direction of this worst-case target can be arbitrarily selected. We choose it on the axial plane opposite to
$\mathbf q_{\rm u}^{(2)}$, and denote the resulting target position by $\mathbf q_{\rm t}^{-}$. Then
\begin{equation}
    \mathcal C_{\max}(\mathbf q_{\rm u}^{(0)})
    =
    \mathcal C(\mathbf q_{\rm t}^{-},\mathbf q_{\rm u}^{(0)}).
    \label{eq:axis_worst_target_selected}
\end{equation}

For the selected target $\mathbf q_{\rm t}^{-}$, the angular separations corresponding to $\mathbf q_{\rm u}^{(0)}$ and
$\mathbf q_{\rm u}^{(2)}$ are respectively
\begin{equation}
    \delta_0=\vartheta_{\max},
    \qquad
    \delta_2=
    \vartheta_{\rm u}^{(2)}+\vartheta_{\max}
    >
    \delta_0.
    \label{eq:separation_comparison}
\end{equation}
Consider first a virtual UAV location having distance $r_0$ and the same direction as $\mathbf q_{\rm u}^{(2)}$. For fixed $r_0$, $r_{\rm t}^{\star}$, and $\vartheta_{\max}$, the same monotonicity argument as in
\eqref{eq:axis_d_over_h_monotonic} shows that the CRLB strictly increases with the separation angle $\delta$. Hence,
\begin{equation}
    \mathcal C(\mathbf q_{\rm t}^{-},
    r_0\mathbf e_{\rm u}^{(2)})
    >
    \mathcal C(\mathbf q_{\rm t}^{-},
    \mathbf q_{\rm u}^{(0)}),
    \label{eq:angular_domination_step}
\end{equation}
where
$\mathbf e_{\rm u}^{(2)}
=\mathbf q_{\rm u}^{(2)}/r_{\rm u}^{(2)}$.

We next increase the UAV distance along the direction
$\mathbf e_{\rm u}^{(2)}$ from $r_0$ to $r_{\rm u}^{(2)}$.
For the fixed target $\mathbf q_{\rm t}^{-}$, let $s$ denote the UAV distance on this ray and let
$\delta=\delta_2$. The corresponding UAV-target distance and bistatic factor are
\begin{equation}
\begin{aligned}
    &d(s)
    =
    \left(
    s^2+(r_{\rm t}^{\star})^2
    -
    2sr_{\rm t}^{\star}\cos\delta_2
    \right)^{1/2},\\
    &h(s)
    =
    1+
    \frac{
    r_{\rm t}^{\star}-s\cos\delta_2
    }{d(s)}.
    \label{eq:radial_comparison_geometry}
\end{aligned}
\end{equation}
By \eqref{eq:r0_selection} and
$r_{\rm t}^{\star}\leq r_{\max}$, we have $s\geq r_0>r_{\rm t}^{\star}\cos\delta_2$.
% \begin{equation}
%     s\geq r_0
%     >
%     r_{\rm t}^{\star}\cos\delta_2.
%     \label{eq:radial_monotonic_condition}
% \end{equation}
Therefore,
\begin{equation}
    \frac{{\rm d}d(s)}{{\rm d}s}
    =
    \frac{
    s-r_{\rm t}^{\star}\cos\delta_2
    }{d(s)}
    >0,
    \qquad
    \frac{{\rm d}h(s)}{{\rm d}s}
    =
    -
    \frac{
    sr_{\rm t}^{\star}\sin^2\delta_2
    }{
    d^3(s)
    }
    <0.
    \label{eq:radial_monotonic_derivatives}
\end{equation}
For fixed $\mathbf q_{\rm t}^{-}$ and fixed UAV direction, the terms in
\eqref{eq:crlb_conical_geometric_form} are respectively proportional to $\frac{d^2(s)}{h^2(s)}$, $d^2(s)$, and $\frac{s^2}{h^2(s)}$
% \begin{equation}
%     \frac{d^2(s)}{h^2(s)},
%     \qquad
%     d^2(s),
%     \qquad
%     \frac{s^2}{h^2(s)},
% \end{equation}
with positive coefficients. It follows from
\eqref{eq:radial_monotonic_derivatives} that the CRLB strictly increases with $s$ over
$[r_0,r_{\rm u}^{(2)}]$. Therefore,
\begin{equation}
    \mathcal C(\mathbf q_{\rm t}^{-},
    \mathbf q_{\rm u}^{(2)})
    >
    \mathcal C(\mathbf q_{\rm t}^{-},
    r_0\mathbf e_{\rm u}^{(2)}).
    \label{eq:radial_domination_step}
\end{equation}
Combining \eqref{eq:axis_worst_target_selected},
\eqref{eq:angular_domination_step}, and
\eqref{eq:radial_domination_step}, we obtain
\begin{equation}
\begin{aligned}
    \mathcal C_{\max}(\mathbf q_{\rm u}^{(2)})
    &\geq
    \mathcal C(\mathbf q_{\rm t}^{-},
    \mathbf q_{\rm u}^{(2)}) 
    >
    \mathcal C(\mathbf q_{\rm t}^{-},
    \mathbf q_{\rm u}^{(0)})
    =
    \mathcal C_{\max}(\mathbf q_{\rm u}^{(0)}).
\end{aligned}
\label{eq:Q2_crlb_domination}
\end{equation}
Together with \eqref{eq:Q2_rate_domination}, this shows that
$\mathbf q_{\rm u}^{(0)}\in\mathcal Q_1$ strictly dominates
$\mathbf q_{\rm u}^{(2)}\in\mathcal Q_2$. Since
$\mathbf q_{\rm u}^{(2)}$ was arbitrarily selected, every location in $\mathcal Q_2$ is dominated by a location in $\mathcal Q_1$. This completes the proof.

\section{Proof of Theorem~\ref{the:conical_RoI_pareto}}
\label{app:conical_RoI_pareto}

Under condition \eqref{eq:cone_pareto_subset_condition}, Lemma~\ref{lem:Q2_dominated} shows that every UAV location in $\mathcal Q_2$ is dominated by a feasible on-axis location in $\mathcal Q_1$. Moreover, Lemma~\ref{lem:Q1_axis} shows that every off-axis location in $\mathcal Q_1$ is dominated by a feasible on-axis location in $\mathcal Q_1$. Therefore, no Pareto-optimal UAV location can belong to $\mathcal Q_2$ or to the off-axis portion of $\mathcal Q_1$, and every Pareto-optimal UAV location must satisfy \eqref{eq:cone_pareto_axis_subset}. This completes the proof.

\section{Proof of Proposition~\ref{prop:crlb_radius_bound}}
\label{app:crlb_radius_bound}

Let $\mathbf q_{\rm u}=R\mathbf e$ and $\tilde{\mathbf q}_{\rm u}=\tilde r\mathbf e$, where $r_{\mathcal A}^{\max}\leq\tilde r<R$. For an arbitrary target $\mathbf q_{\rm t}=r_{\rm t}\mathbf e_{\rm t}\in\mathcal A$, define $\cos\gamma=\mathbf e^{\rm T}\mathbf e_{\rm t}$ and
\begin{equation}
    d(s)
    \triangleq
    \left(
    s^2+r_{\rm t}^2-2sr_{\rm t}\cos\gamma
    \right)^{1/2},
    \qquad
    h(s)
    \triangleq
    1+
    \frac{r_{\rm t}-s\cos\gamma}{d(s)}.
    \label{eq:radius_bound_geometry}
\end{equation}
Since $s\geq\tilde r\geq r_{\mathcal A}^{\max}\geq r_{\rm t}$,
\begin{equation}
    \frac{{\rm d}d(s)}{{\rm d}s}
    =
    \frac{s-r_{\rm t}\cos\gamma}{d(s)}
    \geq0,
    \qquad
    \frac{{\rm d}h(s)}{{\rm d}s}
    =
    -\frac{sr_{\rm t}\sin^2\gamma}{d^3(s)}
    \leq0.
    \label{eq:radius_bound_monotonicity}
\end{equation}
Moreover, $S_{xy}(s)$ and $S_{xz}(s)$ are proportional to $s$. Therefore, decreasing the UAV distance from $R$ to $\tilde r$ does not increase $r_{\rm ut}$ or the projected-area terms and does not decrease $h$. Since the remaining target-dependent quantities are fixed, every term in \eqref{eq:crlb_geometric_form} is nonincreasing, and hence
\begin{equation}
    \mathcal C(\mathbf q_{\rm t},\tilde r\mathbf e)
    \leq
    \mathcal C(\mathbf q_{\rm t},R\mathbf e),
    \qquad
    \forall\mathbf q_{\rm t}\in\mathcal A.
    \label{eq:radius_bound_pointwise}
\end{equation}
Taking the maximum over $\mathcal A$ gives $C_{\max}(\tilde r\mathbf e)\leq C_{\max}(R\mathbf e)$. Since $\tilde r<R$, the location $\tilde r\mathbf e$ also achieves a strictly higher communication rate and therefore dominates $R\mathbf e$. This completes the proof.

\bibliographystyle{IEEEtran}
\bibliography{IEEEabrv,0reference}

@misc{xuxiaoli,
      title={Hybrid Mono- and Bi-static {OFDM-ISAC} via {BS-UE} Cooperation: Closed-Form {CRLB} and Coverage Analysis}, 
      author={Xiaoli Xu and Yong Zeng},
      year={2026},
      eprint={2601.09057},
      archivePrefix={arXiv},
      primaryClass={cs.IT},
      url={https://arxiv.org/abs/2601.09057}, 
}

@misc{CRLB,
      title={Low-Altitude {UAV}-Assisted Bistatic {ISAC}: Closed-form {3D} {CRLB} and Coverage Analysis}, 
      author={Haoyu Jiang and Hengyou Kong and Xiaoli Xu and Yong Zeng},
      year={2026},
      eprint={2607.18766},
      archivePrefix={arXiv},
      primaryClass={eess.SP},
      url={https://arxiv.org/abs/2607.18766}, 
}

@ARTICLE{huizhi,
  author={Wang, Huizhi and Xiao, Zhiqiang and Zeng, Yong},
  journal=IEEE_Transactions_on_Signal_Processing, 
  title={Cramér-Rao Bounds for Near-Field Sensing With Extremely Large-Scale {MIMO}}, 
  year={2024},
  volume={72},
  number={},
  pages={701-717},
  doi={10.1109/TSP.2024.3350329}}

@ARTICLE{zihan,
  author={Xu, Zihan and Zhou, Zhiwen and Wu, Di and Xu, Xiaoli and Zeng, Yong},
  journal=IEEE_Internet_of_Things_Journal, 
  title={{CKM}-Enabled Joint Spatial--Doppler Domain Clutter Suppression for Low-Altitude {UAV} {ISAC}}, 
  year={2026},
  volume={13},
  number={14},
  pages={31752-31767},
  doi={10.1109/JIOT.2026.3689485}}

@misc{zeng_capacity,
      title={Capacity Characterization and Formation Optimization for Multi-User {MIMO} Communications with {UAV} Swarm}, 
      author={Yong Zeng},
      year={2026},
      eprint={2605.14298},
      archivePrefix={arXiv},
      primaryClass={cs.IT},
      url={https://arxiv.org/abs/2605.14298}, 
}

@ARTICLE{zeng_UAVswarm_twc,
  author={Zeng, Yong and Xu, Xiaoli and Jin, Shi and Zhang, Rui},
  journal=IEEE_Transactions_on_Wireless_Communications, 
  title={Simultaneous Navigation and Radio Mapping for Cellular-Connected {UAV} With Deep Reinforcement Learning}, 
  year={2021},
  volume={20},
  number={7},
  pages={4205-4220},
  doi={10.1109/TWC.2021.3056573}}

@ARTICLE{zhang_ISAC_jstsp,
  author={Zhang, J. Andrew and Liu, Fan and Masouros, Christos and Heath, Robert W. and Feng, Zhiyong and Zheng, Le and Petropulu, Athina},
  journal=IEEE_Journal_of_Selected_Topics_in_Signal_Processing, 
  title={An Overview of Signal Processing Techniques for Joint Communication and Radar Sensing}, 
  year={2021},
  volume={15},
  number={6},
  pages={1295-1315},
  doi={10.1109/JSTSP.2021.3113120}}

@ARTICLE{cui_ISAC_ieeenetwork,
  author={Cui, Yuanhao and Liu, Fan and Jing, Xiaojun and Mu, Junsheng},
  journal=IEEE_Network, 
  title={Integrating Sensing and Communications for Ubiquitous {IoT}: Applications, Trends, and Challenges}, 
  year={2021},
  volume={35},
  number={5},
  pages={158-167},
  doi={10.1109/MNET.010.2100152}}

@ARTICLE{zhang_ISAC_survey_tutorial,
  author={Zhang, J. Andrew and Rahman, Md. Lushanur and Wu, Kai and Huang, Xiaojing and Guo, Y. Jay and Chen, Shanzhi and Yuan, Jinhong},
  journal=IEEE_Communications_Surveys_Tutorials, 
  title={Enabling Joint Communication and Radar Sensing in Mobile Networks—A Survey}, 
  year={2022},
  volume={24},
  number={1},
  pages={306-345},
  doi={10.1109/COMST.2021.3122519}}

@ARTICLE{wei_ISAC_iotj,
  author={Wei, Zhiqing and Qu, Hanyang and Wang, Yuan and Yuan, Xin and Wu, Huici and Du, Ying and Han, Kaifeng and Zhang, Ning and Feng, Zhiyong},
  journal=IEEE_Internet_of_Things_Journal, 
  title={Integrated Sensing and Communication Signals Toward {5G-A} and {6G}: A Survey}, 
  year={2023},
  volume={10},
  number={13},
  pages={11068-11092},
  doi={10.1109/JIOT.2023.3235618}}

@ARTICLE{lu_ISAC_iotj,
  author={Lu, Shihang and Liu, Fan and Li, Yunxin and Zhang, Kecheng and Huang, Hongjia and Zou, Jiaqi and Li, Xinyu and Dong, Yuxiang and Dong, Fuwang and Zhu, Jia and Xiong, Yifeng and Yuan, Weijie and Cui, Yuanhao and Hanzo, Lajos},
  journal=IEEE_Internet_of_Things_Journal, 
  title={Integrated Sensing and Communications: Recent Advances and Ten Open Challenges}, 
  year={2024},
  volume={11},
  number={11},
  pages={19094-19120},
  doi={10.1109/JIOT.2024.3361173}}

@ARTICLE{Liu2024UAVIoT,
  author={Liu, Zechen and Liu, Xin and Liu, Yuemin and Leung, Victor C. M. and Durrani, Tariq S.},
  journal=IEEE_Transactions_on_Wireless_Communications, 
  title={{UAV} Assisted Integrated Sensing and Communications for Internet of Things: {3D} Trajectory Optimization and Resource Allocation}, 
  year={2024},
  volume={23},
  number={8},
  pages={8654-8667},
  doi={10.1109/TWC.2024.3352985}}

@ARTICLE{qianglong_tutorial,
  author={Dai, Qianglong and Zeng, Yong and Wang, Huizhi and You, Changsheng and Zhou, Chao and Cheng, Hongqiang and Xu, Xiaoli and Jin, Shi and Lee Swindlehurst, A. and Eldar, Yonina C. and Schober, Robert and Zhang, Rui and You, Xiaohu},
  journal=IEEE_Communications_Surveys_Tutorials, 
  title={A Tutorial on {MIMO-OFDM ISAC}: From Far-Field to Near-Field}, 
  year={2026},
  month={Jan.},
  volume={28},
  number={},
  pages={4319-4358},
  doi={10.1109/COMST.2025.3650568}}

@ARTICLE{yuxuan_magazine,
  author={Song, Yuxuan and Zeng, Yong and Yang, Yuhang and Ren, Zixiang and Cheng, Gaoyuan and Xu, Xiaoli and Xu, Jie and Jin, Shi and Zhang, Rui},
  journal=IEEE_Communications_Magazine, 
  title={An Overview of Cellular {ISAC} for Low-Altitude {UAV}: New Opportunities and Challenges}, 
  year={2025},
  month={Dec.},
  volume={63},
  number={12},
  pages={88-95},
  doi={10.1109/MCOM.002.2400742}}

@ARTICLE{Yong_UAV,
  author={Zeng, Yongs and Wu, Qingqing and Zhang, Rui},
  journal={Proc. IEEE}, 
  title={Accessing From the Sky: A Tutorial on {UAV} Communications for 5{G} and Beyond}, 
  year={2019},
  month={Dec.},
  volume={107},
  number={12},
  pages={2327-2375},
  doi={10.1109/JPROC.2019.2952892}}

@ARTICLE{liu_isac_jsac2022,
  author={Liu, Fan and Cui, Yuanhao and Masouros, Christos and Xu, Jie and Han, Tony Xiao and Eldar, Yonina C. and Buzzi, Stefano},
  journal=IEEE_Journal_on_Selected_Areas_in_Communications, 
  title={Integrated Sensing and Communications: Toward Dual-Functional Wireless Networks for {6G} and Beyond}, 
  year={2022},
  volume={40},
  number={6},
  pages={1728-1767},
  doi={10.1109/JSAC.2022.3156632}}

@ARTICLE{liu_crb_tsp2022,
  author={Liu, Fan and Liu, Ya-Feng and Li, Ang and Masouros, Christos and Eldar, Yonina C.},
  journal=IEEE_Transactions_on_Signal_Processing, 
  title={Cramér-Rao Bound Optimization for Joint Radar-Communication Beamforming}, 
  year={2022},
  volume={70},
  number={},
  pages={240-253},
  doi={10.1109/TSP.2021.3135692}}

@ARTICLE{ren_crbrate_twc2024,
  author={Ren, Zixiang and Peng, Yunfei and Song, Xianxin and Fang, Yuan and Qiu, Ling and Liu, Liang and Ng, Derrick Wing Kwan and Xu, Jie},
  journal=IEEE_Transactions_on_Wireless_Communications, 
  title={Fundamental {CRB}-Rate Tradeoff in Multi-Antenna {ISAC} Systems With Information Multicasting and Multi-Target Sensing}, 
  year={2024},
  volume={23},
  number={4},
  pages={3870-3885},
  doi={10.1109/TWC.2023.3312723}}

@ARTICLE{hua_crbrate_twc2024,
  author={Hua, Haocheng and Han, Tony Xiao and Xu, Jie},
  journal=IEEE_Transactions_on_Wireless_Communications, 
  title={{MIMO} Integrated Sensing and Communication: {CRB}-Rate Tradeoff}, 
  year={2024},
  volume={23},
  number={4},
  pages={2839-2854},
  doi={10.1109/TWC.2023.3303326}}

@ARTICLE{lyu_uav_isac_twc2023,
  author={Lyu, Zhonghao and Zhu, Guangxu and Xu, Jie},
  journal=IEEE_Transactions_on_Wireless_Communications, 
  title={Joint Maneuver and Beamforming Design for {UAV}-Enabled Integrated Sensing and Communication}, 
  year={2023},
  volume={22},
  number={4},
  pages={2424-2440},
  doi={10.1109/TWC.2022.3211533}}

@ARTICLE{meng_uav_isac_mwc2024,
  author={Meng, Kaitao and Wu, Qingqing and Xu, Jie and Chen, Wen and Feng, Zhiyong and Schober, Robert and Swindlehurst, A. Lee},
  journal=IEEE_Wireless_Communications, 
  title={{UAV}-Enabled Integrated Sensing and Communication: Opportunities and Challenges}, 
  year={2024},
  volume={31},
  number={2},
  pages={97-104},
  doi={10.1109/MWC.131.2200442}}

@ARTICLE{jing_isac_sky_twc2024,
  author={Jing, Xiaoye and Liu, Fan and Masouros, Christos and Zeng, Yong},
  journal=IEEE_Transactions_on_Wireless_Communications, 
  title={{ISAC} From the Sky: {UAV} Trajectory Design for Joint Communication and Target Localization}, 
  year={2024},
  volume={23},
  number={10},
  pages={12857-12872},
  doi={10.1109/TWC.2024.3396571}}

@ARTICLE{jiang_uav_isac_cl2024,
  author={Jiang, Yifan and Wu, Qingqing and Chen, Wen and Meng, Kaitao},
  journal=IEEE_Communications_Letters, 
  title={{UAV}-Enabled Integrated Sensing and Communication: Tracking Design and Optimization}, 
  year={2024},
  volume={28},
  number={5},
  pages={1024-1028},
  doi={10.1109/LCOMM.2024.3379504}}

@ARTICLE{meng_uav_isac_twc_2023,
  author={Meng, Kaitao and Wu, Qingqing and Ma, Shaodan and Chen, Wen and Wang, Kunlun and Li, Jun},
  journal=IEEE_Transactions_on_Wireless_Communications, 
  title={Throughput Maximization for {UAV}-Enabled Integrated Periodic Sensing and Communication}, 
  year={2023},
  volume={22},
  number={1},
  pages={671-687},
  doi={10.1109/TWC.2022.3197623}}

@ARTICLE{Deng_beamforming_twc_2023,
  author={Deng, Cailian and Fang, Xuming and Wang, Xianbin},
  journal=IEEE_Transactions_on_Wireless_Communications, 
  title={Beamforming Design and Trajectory Optimization for {UAV}-Empowered Adaptable Integrated Sensing and Communication}, 
  year={2023},
  volume={22},
  number={11},
  pages={8512-8526},
  doi={10.1109/TWC.2023.3264523}}

@ARTICLE{Abdissa_deployment_precoder_iotj2024,
  author={Abdissa Bayessa, Gezahegn and Chai, Rong and Liang, Chengchao and Kumar Jain, Deepak and Chen, Qianbin},
  journal=IEEE_Internet_of_Things_Journal, 
  title={Joint {UAV} Deployment and Precoder Optimization for Multicasting and Target Sensing in {UAV}-Assisted {ISAC} Networks}, 
  year={2024},
  volume={11},
  number={20},
  pages={33392-33405},
  doi={10.1109/JIOT.2024.3430371}}

@ARTICLE{Xiong_tradeoff_tit2023,
  author={Xiong, Yifeng and Liu, Fan and Cui, Yuanhao and Yuan, Weijie and Han, Tony Xiao and Caire, Giuseppe},
  journal=IEEE_Transactions_on_Information_Theory, 
  title={On the Fundamental Tradeoff of Integrated Sensing and Communications Under Gaussian Channels}, 
  year={2023},
  volume={69},
  number={9},
  pages={5723-5751},
  doi={10.1109/TIT.2023.3284449}}

@ARTICLE{Hou_beamforming_jsac2024,
  author={Hou, Kaiyue and Zhang, Shuowen},
  journal=IEEE_Journal_on_Selected_Areas_in_Communications, 
  title={Optimal Beamforming for Secure Integrated Sensing and Communication Exploiting Target Location Distribution}, 
  year={2024},
  volume={42},
  number={11},
  pages={3125-3139},
  doi={10.1109/JSAC.2024.3431573}}

@ARTICLE{Mao_isacnet_jsac2026,
  author={Mao, Weihao and Lu, Yang and Pan, Gaofeng and An, Jianping and Ai, Bo and Ng, Derrick Wing Kwan},
  journal=IEEE_Journal_on_Selected_Areas_in_Communications, 
  title={Cramér–Rao Bound Optimization for Bistatic ISAC: Transceiver Design and Attention-Based ISACNet}, 
  year={2026},
  volume={44},
  number={},
  pages={181-195},
  doi={10.1109/JSAC.2025.3607866}}

@ARTICLE{Hua_transmit_beamform_tvt2023,
  author={Hua, Haocheng and Xu, Jie and Han, Tony Xiao},
  journal=IEEE_Transactions_on_Vehicular_Technology, 
  title={Optimal Transmit Beamforming for Integrated Sensing and Communication}, 
  year={2023},
  volume={72},
  number={8},
  pages={10588-10603},
  doi={10.1109/TVT.2023.3262513}}

@STRING{IEEE_Transactions_on_Vehicular_Technology         = "{IEEE} Trans. Veh. Technol."}

@STRING{IEEE_Journal_of_Selected_Topics_in_Signal_Processing       = "{IEEE} J. Sel. Topics Signal Process."}

@STRING{IEEE_Transactions_on_Signal_Processing         = "{IEEE} Trans. Signal Process."}

@STRING{IEEE_Communications_Letters       = "{IEEE} Commun. Lett."}

@STRING{IEEE_Journal_on_Selected_Areas_in_Communications       = "{IEEE} J. Sel. Areas Commun."}

@STRING{IEEE_Transactions_on_Wireless_Communications       = "{IEEE} Trans. Wireless Commun."}

@STRING{IEEE_Transactions_on_Information_Theory         = "{IEEE} Trans. Inf. Theory"}

@STRING{IEEE_Internet_of_Things_Journal        = "{IEEE} Internet Things J."}

@STRING{IEEE_Communications_Magazine        = "{IEEE} Commun. Mag."}

@STRING{IEEE_Communications_Surveys_Tutorials      = "{IEEE} Commun. Surveys Tuts."}

@STRING{IEEE_Network        = "{IEEE} Netw."}

@STRING{IEEE_Wireless_Communications         = "{IEEE} Wireless Commun."}

\vfill

\end{document}